\def\BibTeX{{\rm B\kern-.05em{\sc i\kern-.025em b}\kern-.08em
    T\kern-.1667em\lower.7ex\hbox{E}\kern-.125emX}}
\begin{document}
\title{Barrier-Based Test Synthesis for Safety-Critical Systems Subject to Timed Reach-Avoid Specifications}
\author{Prithvi Akella, \IEEEmembership{Student Member, IEEE}, Mohamadreza Ahmadi, \IEEEmembership{Member, IEEE}, Richard M. Murray, \IEEEmembership{Fellow, IEEE}, and Aaron D. Ames, \IEEEmembership{Fellow, IEEE}
\thanks{P. Akella, R. M. Murray, and A. D. Ames are with Control and Dynamical Systems (CDS) at the California Institute of Technology, 1200 E. California Blvd., MC 104-44, Pasadena, CA 91125,  e-mail: (\texttt{\{pakella,murray,ames\}@caltech.edu}). M. Ahmadi is with TuSimple, 9191 Towne Centre Dr STE 600, San Diego, CA 92122, e-mail: (\texttt{mohamadreza.ahmadi@tusimple.ai}).  
}}

\maketitle

\begin{abstract}
We propose an adversarial, time-varying test-synthesis procedure for safety-critical systems \textit{without requiring specific knowledge of the underlying controller steering the system.}  From a broader test and evaluation context, determination of difficult tests of system behavior is important as these tests would elucidate problematic system phenomena before these mistakes can engender problematic outcomes, \textit{e.g.} loss of human life in autonomous cars, costly failures for airplane systems, \textit{etc}.  Our approach builds on existing, simulation-based work in the test and evaluation literature by offering a controller-agnostic test-synthesis procedure that provides a series of benchmark tests with which to determine controller reliability.  To achieve this, our approach codifies the system objective as a timed reach-avoid specification.  Then, by coupling control barrier functions with this class of specifications, we construct an instantaneous difficulty metric whose minimizer corresponds to the most difficult test at that system state.  We use this instantaneous difficulty metric in a game-theoretic fashion, to produce an adversarial, time-varying test-synthesis procedure that does not require specific knowledge of the system's controller, but can still provably identify realizable and maximally difficult tests of system behavior.  Finally, we develop this test-synthesis procedure for both continuous and discrete-time systems and showcase our test-synthesis procedure on simulated and hardware examples.
\end{abstract}

\begin{IEEEkeywords}
Control Barrier Functions, Test and Evaluation, Safety-Critical Systems, Minimax Problems, Formal Methods
\end{IEEEkeywords}

\section{Introduction}
\label{sec:introduction}
For safety-critical autonomous systems where failure can mean loss of human life, \textit{e.g.} autonomous cars, autonomous flight vehicles, \textit{etc}, it is natural to ask the following question: does the system's controller effectively render the system safe while steering it in satisfaction of its objective?  Prior work in the literature approaches this question in two distinct ways.  The first line of work aims to develop methods to verify whether a given controller ensures that its system satisfies its desired objective, despite any perturbation from an allowable set.  Termed \emph{verification}, the development and application of these techniques are of widespread study, but not the focus of this paper \cite{clarke2018handbook, baier2008principles,katoen2016probabilistic,platzer2008keymaera,fitting2012first,schumann2001automated}.  The second version of this problem arises primarily in the test and evaluation of safety-critical cyber-physical systems.  Here, either a high-dimensional state space or a requirement to search over system trajectories frustrates the immediate application of the aforementioned verification techniques.  As such, there is a non-trivial amount of work aimed at systematically generating more difficult tests of the onboard control architecture and evaluating system performance with rising test difficulty.  Additionally, the vast majority of these techniques require \textit{a-priori} knowledge of the controller-to-be-tested and a simulator of the system-controller pair.  In a related and, to the best of our knowledge, a less explored vein, we believe the search for problematic system behavior independent of apriori controller knowledge is dually useful, as will be further explained.

\begin{figure}
    \centering
    \includegraphics[width = 261.0 pt]{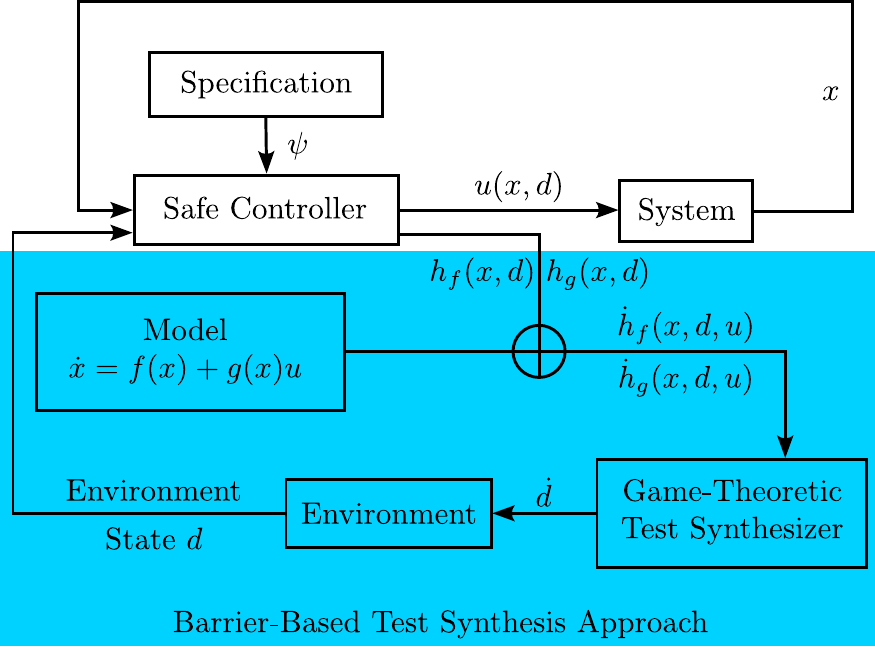}
    \caption{A general flowchart of our test-synthesis procedure for safety-critical systems subject to Timed-Reach Avoid Specifications.  We assume the specification $\psi$ that influences the safe controller is express-able via barrier functions $h_f,h_g$.  Simultaneously, these same barrier functions are used in a game-theoretic test-synthesis procedure that exploits model knowledge to develop tests that are provably realizable and maximally difficult.}
    \label{fig:title_fig}
    \vspace{-0.2 in}
\end{figure}

\subsection{Related Work}
\label{sec:related_work}
A growing area in the test and evaluation literature (T\&E) centers on the design and generation of tests utilizing a simulator of the system-under-test (SUT)~\cite{kapinski2016simulation}.  Here, a test corresponds to a specific environment setup in which test engineers evaluate the SUT's ability to satisfy a specified set of behaviors - the system objective.  To facilitate a rigorous objective satisfaction analysis, these objectives are oftentimes expressed as linear or signal temporal logic specifications which come equipped with specification satisfaction methods~\cite{pnueli1981linear_temporal,maler2004stl_foundational}.  For signal temporal logic specifically, each specification comes equipped with a robustness measure - functions over state signal trajectories whose positive evaluation corresponds to specification satisfaction.  To that end, some current work focuses on developing smoother robustness measures and using them as specification satisfaction monitors for real-time adaptation~\cite{bortolussi2014specifying,deshmukh2017robust,maler2013monitoring,hauer2019fitness,nivckovic2018amt}.

The test-synthesis question stems naturally from the existence of such robustness measures.  More specifically, provided a quantifiable set of phenomena that can frustrate specification satisfaction, the motivating question asks whether one can determine phenomena that minimize this robustness measure.  Here, decreasing robustness indicates increasing test difficulty as if a system has negative robustness with respect to specification satisfaction, then it has failed to satisfy this specification - its objective~\cite{maler2004stl_foundational}.  Each of the pre-eminent tools for automated test synthesis/falsification of system simulators expresses this parameter search as an optimization problem - S-Taliro \cite{annpureddy2011staliro}, Breach \cite{donze2010breach}, and more recently, VerifAI in conjunction with SCENIC \cite{dreossi2019verifai}.  Indeed there has also been a wealth of work using these tools to generate difficult tests of system behavior for multiple systems \cite{tuncali2016utilizing, fainekos2012verification, hoxha2014using, fremont2020formal, fremont2020scenic, tran2020nnv}.  There has also been work on optimally generating tests for specific controllers, independent of these tools \cite{abbas2013probabilistic, dreossi2015efficient, althoff2018automatic, klischat2019generating}.

The overarching goal of test generation, however, is to uncover problematic true system behavior without exhaustively testing the true system.  As such, there has been some work aimed at taking difficult simulator tests and realizing them on the real SUT \cite{fremont2020real_life}.  However, significantly more work aims at adapting the aforementioned tools to generate successively harder tests of real-system behavior \cite{ghosh2018verifying,gambi2019automatically,wheeler2019critical,gangopadhyay2019identification}.  As such, these tests fall into two categories.  Namely, they are \textit{controller-specific} and \textit{static} test cases insofar as they usually identify one parameter in a parametric set of disturbances that can frustrate system specification satisfaction for a given controller. 

Provided that the goal of test synthesis is to determine difficult tests of system behavior, we had posited earlier in~\cite{akella2020formal} that this study can and should be done controller-independent, as what is difficult for the system to achieve should be independent of the controller used to steer it.  Furthermore, while static tests can uncover problematic system behavior, a time-varying environment might identify more problematic phenomena.  For example, time-varying attacks on a system's control architecture might discover more troubling behavior than a static obstacle placed in front of a car.  As such, we endeavor to develop a formal method for generating adversarial, controller-agnostic, and time-varying tests of safety-critical system behavior.

\subsection{Summary of Contributions}
Our contribution is four-fold and will be itemized as follows:
\begin{itemize}
    \item First, for both continuous and discrete-time systems, we develop game-theoretic, adversarial test-synthesis procedures based on control barrier functions and timed reach-avoid specifications.  Subject to some assumptions on the system to be tested, these techniques satisfy the following criteria:
    \begin{itemize}
        \item they are guaranteed to produce a realizable test of system behavior, and
        \item the synthesized tests are provably the most difficult test of system behavior at that system state.
    \end{itemize}
    \item Second, we extend our continuous-time analysis to develop a test-synthesis procedure for tests that perturb the system dynamics directly.  We similarly prove the existence and maximal difficulty of tests in this setting as well.
    \item Third, we extend our discrete-time analysis to develop a predictive test-synthesis procedure that is guaranteed to be realizable and maximally difficult.
    \item Fourth, we extend both our continuous and discrete-time test-synthesis techniques to the scenario when the feasible space of tests may be time-varying or otherwise constrained.  In this setting, we also prove that our method is guaranteed to produce realizable and maximally difficult tests of system behavior.
    \item Finally, we showcase the results of our test-generation procedure for each case.  For the unconstrained examples, we showcase our results in simulation, mention some deficiencies resolved in our constrained extensions, and showcase the results of our constrained test-synthesis procedure by testing a quadruped robot's ability to navigate while avoiding moving obstacles.
\end{itemize}
For context, we restrict our study to timed reach-avoid specifications, a subset of signal temporal logic specifications, as they are commonly used in the controls literature to represent basic robotic objectives, \textit{e.g.} reach a goal and avoid obstacles along the way~\cite{lindemann2017robust,lindemann2018control,lindemann2019decentralized}.

\subsection{Organization}
To start, Section~\ref{sec:preliminaries} will detail some necessary background information - control barrier functions and timed reach-avoid specifications - and Section~\ref{sec:prob_form_state} will set up and formally state the problem under study in this paper.  Then, Section~\ref{sec:continuous} details our adversarial test-synthesis procedure in the continuous setting, and Section~\ref{sec:cont_examples} illustrates our main results in continuous time through a simple example.  Likewise, Section~\ref{sec:discrete} details our results in the discrete setting , and Section~\ref{sec:disc_examples} illustrates these results through a simple example as well.  Finally, Section~\ref{sec:extensions} extends the results of both of the prior sections by developing a test-synthesis procedure that has similar guarantees on existence and difficulty in a constrained test-synthesis scenario where the space of feasible tests varies with time or the system state.  Likewise,  Section~\ref{sec:constrained_hardware} details the application of our test-synthesis procedure to providing difficult tests for a quadrupedal system in its satisfaction of a simple objective.  Before moving to the next section, we will briefly define some Notation.

\sectionspacing
\newidea{Notation:} The set $\mathbb{Z}_{+} = \{0,1,2,\dots\}$.  A function $\alpha: (-b,a) \to \mathbb{R} \cup \{-\infty,\infty\}$ where $a,b \in \mathbb{R}_{++}$, is an extended class-$\kappa$ function $\kappa_e$ if and only if $\alpha(0) = 0$, and for $r > s$ $\alpha(r) > \alpha(s)$. For any set $A$ define $|A|$ to be the cardinality of $A$, \textit{i.e.} the number of elements in $A$.  For a function $f: X \to Y$ we say $f \in C^k(X)$ if $f$ has at least $k$ (partial) derivatives and its $k$-th (partial) derivative(s) is/are continuous, \textit{i.e.} for $x \in \mathbb{R}^2$, $f = x^Tx \in C^2(\mathbb{R}^2)$.  For a set $A$, $2^A$ is the set of all subsets of $A$, \textit{i.e.} $2^A = \{B~|~B \subseteq A\}$.

\section{Preliminaries}
\label{sec:preliminaries}
This section will provide a brief introduction to some necessary topics.  We will first introduce control barrier functions - a control technique used to ensure safety in control-affine systems.  Then, we will introduce timed reach-avoid specifications - a mathematical formalism to describe system objectives.

\subsection{Control Barrier Functions}
\label{sec:cbf}
Inspired by their counterparts in optimization (see Chapter 3 of \cite{forsgren2002interior}), control barrier functions are a modern control tool to ensure safety in safety-critical systems that are control-affine.  For a study of control barrier functions in the continuous setting, please see~\cite{ames2016control}, and in the discrete setting, please see~\cite{agrawal2017discrete}.  The main purpose of control barrier functions $h: \mathbb{R}^n \to \mathbb{R}$ are to ensure forward invariance of their $0$-superlevel sets $\mathcal{C}$:
\begin{align}
    \mathcal{C} = \{x \in \mathbb{R}^n~|~h(x) \geq 0\},~
    \partial \mathcal{C} = \{x\in \mathbb{R}^n~|~h(x) = 0\}.
\end{align}
As will be shown later, control barrier functions and their $0$-Superlevel sets offer one way of quantifying specification satisfaction for robotic systems.  This quantification will be instrumental in our adversarial approach for both the continuous and discrete settings.

\sectionspacing
\newidea{Continuous Setting:} A study in continuous-time control barrier functions typically assumes a control-affine nonlinear control system as its control system abstraction~\cite{ames2016control}:
\begin{equation}
    \label{eq:nom_sys}
    \dot x = f(x) + g(x) u, \quad x \in \mathcal{X} \subseteq, \mathbb{R}^n,~u\in\mathcal{U}\subseteq \mathbb{R}^m.
\end{equation}
Then, a control barrier function $h$ is defined as any function satisfying a specific inequality over its Lie Derivatives.  Specifically, for a function $h:\mathbb{R}^n \to \mathbb{R}$, its Lie Derivatives $L_fh, L_gh$ with respect to the system dynamics~\eqref{eq:nom_sys} are:
\begin{equation}
    L_fh(x) = \frac{\partial h}{\partial x}(x) f(x),\quad L_gh(x) = \frac{\partial h}{\partial x}(x)g(x).
\end{equation}
Then, the definition of control barrier functions is as follows.
\begin{definition} (Adapted from Definition 5 in~\cite{ames2016control})
\label{def:continuous_cbf}
For the nominal control-affine system~\eqref{eq:nom_sys}, a \textit{control barrier function} $h:\mathbb{R}^n \to \mathbb{R}$ satisfies the following condition, $\forall~x\in\mathcal{X}$ and some $\alpha \in \kappa$:
\begin{equation}
    \max_{u \in \mathcal{U}}\left[L_fh(x) + L_gh(x) + \alpha(h(x)) \geq 0 \right],
\end{equation}
\end{definition}
Here, we note that as initially defined in \cite{ames2016control}, the CBFs defined above are termed \textit{zeroing control barrier functions}. As we do not deal with \textit{reciprocal control barrier functions}, we will simply refer to these as control barrier functions in what will follow.  Additionally, we will follow convention and denote continuous control barrier functions as simply control barrier functions as their discrete counterparts are specifically termed discrete control barrier functions.

\sectionspacing
\newidea{Discrete Setting:} In the discrete setting, as per~\cite{agrawal2017discrete}, the nominal control system abstraction is a discrete, nonlinear control system, whose update equation is valid $\forall~k\in\mathbb{Z}_+$:
\begin{equation}
    \label{eq:discrete_system}
    x_{k+1} = f(x_k,u_k),~x_k \in \mathcal{X} \subset \mathbb{R}^n,~u_k\in\mathcal{U}\subset \mathbb{R}^m.
\end{equation}
The definition of a discrete control barrier function stems immediately from the assumed system dynamics~\eqref{eq:discrete_system}.
\begin{definition}
\label{def:discrete_cbf}
(Adapted from Definition 2 in~\cite{ahmadi2019safe}) For the discrete-time control system~\eqref{eq:discrete_system}, a \textit{discrete control barrier function} $h:\mathbb{R}^n \to \mathbb{R}$ satisfies the following condition for some class-$\kappa$ function $\alpha$ such that $\alpha(r) < r$:
\begin{equation}
    \label{eq:discrete_cbf_inequality}
    \exists~u_k \in \mathcal{U} \suchthat h(x_{k+1}) - h(x_k) \geq -\alpha(h(x_k))~\forall~x\in \mathcal{X}.
\end{equation}
\end{definition}

As mentioned earlier, the main purpose of control barrier functions in either setting is to ensure forward invariance of their $0$-superlevel sets $\mathcal{C}$.  Typically, the application of an optimization-based controller, constrained by the respective inequalities in either definition, suffices to ensure this forward invariance~\cite{ames2016control,agrawal2017discrete}.  In what will follow, we will make a one-to-one correspondence between these $0$-superlevel sets and the truth regions for predicates in our timed reach-avoid specifications, allowing for quantification of specification satisfaction.

\subsection{Timed Reach-Avoid Specifications}
\label{sec:STL}
Timed reach-avoid specifications express common robotic objectives, through the use of logical predicates and combinations thereupon.  As they are a specific subset of signal temporal logic specifications, we will provide a brief overview of the latter and then restrict it to the former.  As signal temporal logic deals with signals, we define a signal $s$ as a function that maps from time to $\mathbb{R}^n$, \textit{i.e.} $s: \mathbb{R}_{+} \to \mathbb{R}^n$.  Then, we define logical predicates $\mu$ with respect to a predicate function $b: \mathbb{R}^n \to \mathbb{R}$:
\begin{equation}
    \llbracket \mu \rrbracket \triangleq \{x\in\mathbb{R}^n~|~b(x) \geq 0\}, \quad \mu(x) = \true~\mathrm{iff}~x \in \llbracket \mu \rrbracket. \label{eq:STL_predicate}
\end{equation}
The set of all predicates $\mathcal{A}$ is closed under logical combinations, $\wedge$ (and), $\lor$ (or), $\neg$ (negation), \textit{i.e.}
\begin{align}
    \mu \in \mathcal{A} & \iff \neg \mu \in \mathcal{A}, \\
    \mu_1,\mu_2 \in \mathcal{A} & \iff \mu_1 \wedge \mu_2 \in \mathcal{A}~\mathrm{and}~\mu_1 \lor \mu_2 \in \mathcal{A}.
\end{align}

From these predicates, system behavior is specified through \textit{specifications} $\psi$ (here the $"|"$ demarcates different instances in how a specification $\psi$ may be composed):
\begin{equation}
    \psi \triangleq \true~|~\mu~|~\neg\psi~|~\psi_1\wedge\psi_2~|~\psi_1\lor\psi_2~|~\psi_1\until_{[a,b]}\psi_2.
\end{equation}
Here, $\psi_1, \psi_2$ are specifications themselves, and the bounded time until operator $\until_{[a,b]}$ reads as $\psi_1 \until_{[a,b]} \psi_2$: $\psi_1$ is to be true until $\psi_2$ becomes true, and $\psi_2$ should become true at some time $t' \in [t+a,t+b]$, for some $a,b \in \mathbb{R}_{+},b>a,$ and initial time $t$.  We denote that a signal $s$ satisfies a specification $\psi$ at time $t$ as $(s,t) \models \psi$.  Here, $\models$ is the satisfaction relation and is inductively defined as follows:
\begin{align}
    & (s,t) \models \mu \iff \mu(s(t)) = \true, \\
    & (s,t) \models \neg \psi \iff (s,t) \not \models \psi, \\
    & (s,t) \models \psi_1 \lor \psi_2 \iff (s,t) \models \psi_1 \lor (s,t) \models \psi_2, \\
    & (s,t) \models \psi_1 \wedge \psi_2 \iff (s,t) \models \psi_1 \wedge (s,t) \models \psi_2, \\
    & (s,t) \models \psi_1 \until_{[a,b]} \psi_2 \iff \exists~t^*\in[t+a,t+b]\suchthat \\
    & \quad \left((s,t') \models \psi_1~\forall~t' \in[t,t^*]\right) \wedge \left((s,t^*) \models \psi_2 \right).
\end{align}

As mentioned previously, we will restrict our analysis to timed reach-avoid specifications which make use of two commonly used operators, $\F$uture and $\G$lobal, which are defined as follows:
\begin{equation}
    \F_{[a,b]}\mu = \true \until_{[a,b]} \mu, \quad \G_{[a,b]}\mu = \neg (\F_{[a,b]}\neg \mu).
\end{equation}
Here, $\F_{[a,b]}\mu$ indicates a specification where $\mu$ is to be true at some time in the future $t' \in [t+a,t+b]$ with respect to some initial time $t$.  Likewise, $\G_{[a,b]}\mu$ indicates a specifications where $\mu$ is to be true for all times $t' \in [t+a,t+b]$ given an initial time $t$. Oftentimes, safety specifications require continued satisfaction of a predicate $\mu$.  In these cases, we will use the shorthand $\G_{\infty} \mu$.

Coupling control barrier functions and timed reach-avoid specifications then requires choosing the predicate function $b$ for the predicate $\mu$ to be a control barrier function $h$.  This results in the predicate's truth region $\llbracket \mu \rrbracket = \mathcal{C}$, the $0$-super-level set of the same control barrier function $h$.  Then, specifications like $\F_{[0,b]} \mu$ or $\G_{\infty} \mu$ can be quantitatively analyzed through finite time positivity of the associated control barrier function (for $\F_{[0,b]} \mu$) or continued positivity of the associated control barrier function (for $\G_{\infty} \mu$).  For context, this is one way of coupling control barrier functions and subsets of signal temporal logic, though this intersection has also been studied more broadly, \textit{e.g.} see~~\cite{lindemann2017robust, lindemann2018control, lindemann2019robust}.  Finally, it is well known that every signal temporal logic specification, and therefore every timed reach-avoid specification, has a robustness measure $\rho$ that provides a quantitative analysis similar to that provided by control barrier functions in our analysis~\cite{madsen2018metrics,maler2004stl_foundational}.  However, these tend to be non-differentiable, and the differentiability of these control barrier functions will be a key part of our analysis.

\section{Problem Formulation and Statement}
\label{sec:prob_form_state}
In this section, we will first state some common definitions and assumptions that will be used throughout the paper.  Then, we will use these definitions to formally state the problem under study.

\subsection{Definitions and Common Assumptions}
\label{sec:definitions}
To generate a test-synthesis procedure, we need to first formally define a test and the system specifications we aim to test through our procedure.  We will start with the latter and use it to formalize the former.  As mentioned, we restrict our analysis to timed reach-avoid specifications as these are commonly used specifications in the robotics literature - their general form will follow~\cite{sadigh2016safe, lindemann2018control, raman2014model, haghighi2019control}:
\begin{equation}
    \label{eq:spec}
    \psi = \F_{[0,t_{\max}]} \mu \wedge_{j \in \mathcal{J}} \G_\infty \omega_j.
\end{equation}
Here, $\mathcal{J} = \{1,2,\dots,|\mathcal{J}|\}$ is merely a set of indices demarcating different safety objectives the system is to maintain, and $t_{\max}$ denotes the maximum time by which the desired objective $\mu$ is to be achieved.  This leads to the first assumption in our work, which is as follows.
\begin{assumption}
\label{assump:satisfiability}
We assume the specification $\psi$~\eqref{eq:spec} is satisfiable.
\end{assumption}
\noindent  Assumption~\ref{assump:satisfiability} prevents against mutually exclusive/conflicting safety objectives in the overarching specification form~\eqref{eq:spec}.  We make this assumption as from a test-generation perspective, determining tests for a specification that could never be satisfied is irrelevant.  Any test is hard as the system could never satisfy its objective by construction.  That being said, determining the satisfiability of a given signal temporal logic specification is the subject of current work~\cite{bae2019bounded}. To elucidate the specifications of this form, we will provide an example.
\begin{example}
\label{ex:turtle}
Consider an idealized unicycle system on a 2-d plane.  Let this system's goal be to navigate to a region defined as the $0$-superlevel set of a function $h^F:\mathbb{R}^2 \to \mathbb{R}$ while avoiding a set of obstacles defined as the conjunction of the $0$-sublevel sets of multiple other functions $h^G_j:\mathbb{R}^2 \to \mathbb{R},~\forall~j \in \mathcal{J} = \{1,2,\dots\}$.  See Figure~\ref{fig:example_1} for an illustration.  Then, the system's specification $\psi = \F_{\infty} \mu \wedge_{j \in \mathcal{J}} \G_{\infty} \neg \omega_j$ where $\llbracket \mu \rrbracket = \mathcal{C}_{h^F}$ and $\llbracket \neg \omega_j \rrbracket = \mathcal{C}_{h^G_j}$.  For $\psi$ to satisfy Assumption~\ref{assump:cbf_stl}, there must be at least one set of obstacle and goal locations wherein the robot is capable of navigating to the goal while avoiding all obstacles, \textit{i.e.} the goal shouldn't always be at infinity or always encapsulated by obstacles.
\end{example}

\begin{figure}
    \centering
    \includegraphics[width = 0.49\textwidth]{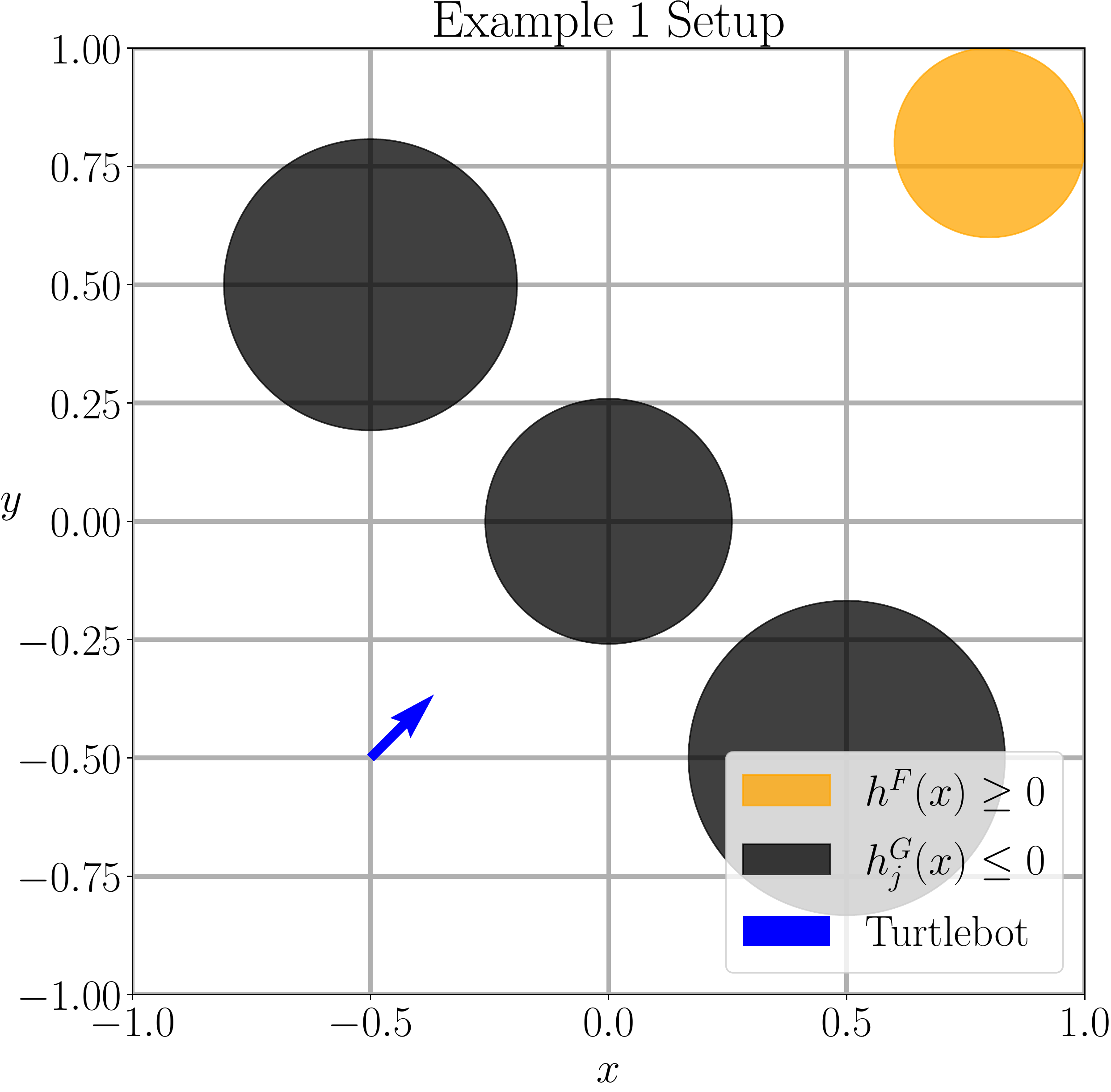}
    \caption{Example setup for Example~\ref{ex:turtle}.  The agent is shown via the blue arrow, the obstacles via the black circles of varying sizes, and the goal via the golden circle.  The corresponding 0-superlevel and sublevel sets follow the same color scheme as shown in the legend.}
    \label{fig:example_1}
    \vspace{-0.2 in}
\end{figure}

Continuing with Example~\ref{ex:turtle}, the most natural test of this agent's behavior would be to see if it could satisfy its specification irrespective of the locations of obstacles in its environment.  The specific setup of obstacle locations would correspond to a specific test.  This notion underlies how we will formalize tests.  To start, we will formally define the system's environment.
\begin{definition}
\label{def:environment}
The \textit{environment} $E$ is the state of the world in which the system operates including the state of the system itself, \textit{e.g.} the cave in which a robot is traversing coupled with any motor failures the robot may have suffered, the airspace in which a jet flies along with any engine failures, \textit{etc}.  The state of the environment will be represented through the \textit{environment state vector} $x_E$.
\end{definition}
Here, we note that the environment's state vector $x_E$ may be incomprehensibly large, indeed even infinite, for real-life systems.  As such, if we wanted to test and evaluate a system's ability to satisfy its specification independent of the environment it might face, we need some way of modeling at least part of the environment to understand its effects on system specification satisfaction.  To note, this does not imply that we know and can model everything about our system's intended environment.  Rather, we assume that there exists some subset of this environment that we can model and can effectively use to test system behavior, \textit{e.g.} we can model wind speeds a drone might face during operation and will test our drone by requiring it to satisfy its objective when subject to a set of wind-speeds that we can realize through our model.  As such, we will make the following definition segmenting the system's environment into the set of modelable phenomena and un-modelable phenomena.
\begin{definition} 
\label{def:valid_tests}
The state $x_E$ of the environment $E$ can be segmented into the system state $x$, a set of modelable phenomena $d \in \mathcal{D}$, and a set of un-modelable phenomena $w \in \mathbb{W}$, \textit{i.e.} $x_E^T = [x^T, d^T, w^T]^T$.  The space of modelable phenomena $\mathcal{D}$ is the \textit{feasible test space} and each $d \in \mathcal{D}$ is a \textit{test parameter vector}.
\end{definition}

For Example~\ref{ex:turtle} then, the environment $E$ is the 2-d plane within which the agent operates and the obstacles lie.  Here, the environment state $x_E$ can be segmented into the agent's state, modelable obstacle locations $d \in \mathcal{D}$, and un-modelable table friction, signal delays, motor constants, \textit{etc}, all of which comprise $w \in \mathbb{W}$.  As motivated earlier, an example test would correspond to placing the obstacles at known locations \textit{apriori} and then allowing the agent to try navigating to its goal.  This motivates our definition of a test that will follow.

\begin{definition}
\label{def:test}
A \textit{test} is a specific environment setup modeled by a specific \textit{test parameter vector} $d \in \mathcal{D}$.
\end{definition}

Per this definition, the outcome of the agent navigating to its goal need not be the same for two similar tests - this is primarily due to the lack of knowledge of $w \in \mathbb{W}$.  However, as these variables are un-modelable we cannot restrict them during a test.  As a result, when we formalize our test-generation procedure, we will only focus on determining a suitable choice of test parameter vector $d$, as the unknown phenomena $w$ are by definition, unknown.  Also, we would expect that running the same test twice might yield slightly different outcomes each time.  This scenario oftentimes happens in reality, \textit{e.g.} running the same robot twice and experiencing slightly different behaviors each time, due to friction, battery power loss, \textit{etc}.  We will extend Example~\ref{ex:turtle} to better illustrate Definition~\ref{def:test}.
 
\begin{example}
\label{ex:turtle_2}
In the setup in Example~\ref{ex:turtle}, let each obstacle be defined as the interior of a circle and define $\llbracket \omega_j \rrbracket = \{x ~|~ \|x - d_j\| \leq r_j\}$ for each $j \in \mathcal{J}$.  Then, define a vector $d = [d_1, r_1, d_2, r_2, \dots ] \in \mathbb{R}^{3|\mathcal{J}|}$.  Each specific vector $d$ constitutes a set of known obstacle locations.  Requiring the agent to navigate to its goal in an environment $E$ formed from that choice of obstacle locations $d$ is an example test.
\end{example}

Our goal is to constructively determine adversarial, time-varying tests of system behavior with respect to satisfaction of a timed reach-avoid specification in a controller-agnostic fashion.  We first note that as a direct consequence of Definition~\ref{def:test}, we can define a time-varying test as one where the test parameter vector $d$ varies with time, \textit{i.e.} $\testsynth: \mathbb{R}_{+} \to \mathcal{D}$.  For Example~\ref{ex:turtle_2}, a time-varying test would correspond to a scenario where the obstacles are moving while the agent is navigating to its goal.  To facilitate the development of such a procedure, we require one assumption on the existence of control barrier functions corresponding to the timed reach-avoid specification $\psi$ that the system is to satisfy.  This assumption permits us to make a relation between the satisfaction of the system specification $\psi$~\eqref{eq:spec} and the positivity of their corresponding control barrier functions.

\begin{assumption}
\label{assump:cbf_stl}
Let $\mu$, $\omega_j$ be predicates comprising the timed reach-avoid specification $\psi$~\eqref{eq:spec}.  We assume there exists $\forall~j\in\mathcal{J}$,
\begin{subequations}
\begin{align}
    h^G_j: \mathbb{R}^n \times \mathcal{D} \to \mathbb{R}&\suchthat h^G_j(x,d) \geq 0 \iff x \in \llbracket \omega_j \rrbracket, \quad \label{eq:global_barrier}\\
    h^F: \mathbb{R}^n \times \mathcal{D} \to \mathbb{R}& \suchthat h^F(x,d) \geq 0 \iff x \in \llbracket \mu \rrbracket \label{eq:future_barrier}
\end{align}
\end{subequations}
where each $h^G_j$ and $h^F$ are (discrete) control barrier functions.
\end{assumption}

For context, Assumption~\ref{assump:cbf_stl} is not too restrictive.  It builds off prior work that constructs control barrier functions for single/multi-agent systems subject to signal temporal logic specifications more broadly~\cite{lindemann2017robust,lindemann2018control,lindemann2019decentralized,lindemann2019robust,lindemann2020efficient}.  As timed reach-avoid specifications are a subset of signal temporal logic specifications, and the existence of control barrier functions for these types of signal temporal logic tasks has been determined \textit{apriori}, we will simply assume their existence for the time being.
With these definitions and assumptions, we can formally state the problem under study in this paper.

\subsection{Problem Statement}
\label{sec:problems}
As mentioned, our goal is to develop an adversarial, time-varying test-generation procedure for safety-critical systems in a controller-agnostic fashion.  Ideally, we would also like to develop a procedure that works for both continuous and discrete-time systems.  Furthermore, we would like for our generated tests to be maximally difficult with respect to a difficulty metric.  As a first step in developing a concrete problem statement, we would like to formalize a test difficulty metric - some function that is minimized at a given state $x$ by the hardest test vector $d$ at that state $x$.  To that end, we will first define a space of feasible inputs $\mathcal{U}(x,d)$.
\begin{definition}
\label{def:feasible_inputs}
The \textit{instantaneous feasible input space} $\mathcal{U}(x,d) \subseteq \mathcal{U}$ is the space of inputs satisfying the following conditions: if $\forall~t\geq 0$ (respectively all $k \in \mathbb{Z}_+$), the control inputs $u \in \mathcal{U}(x(t),d)$ (respectively $u \in \mathcal{U}(x_k,d)$), the resulting state trajectory $x(t)$ (respectively $x_k$) satisfies $\psi = \wedge_{j \in \mathcal{J}}\G_{\infty}\omega_j$, \textit{i.e.} $(x,0) \models \wedge_{j \in \mathcal{J}}\G_{\infty}\omega_j$.
\end{definition}
\noindent Effectively, the feasible input space $\mathcal{U}(x,d)$ is the space of inputs that steer the system in satisfaction of its safety specifications $\G_{\infty}\omega_j$ in equation~\eqref{eq:spec}.  On a related note, we will also require a function that discriminates between feasible input choices taken by the system in satisfaction of its desired objective $\F_{[0,t_{\max}]}\mu$.  We will define this action discriminator function $v$ as follows.
\begin{definition}
\label{def:heading}
The \textit{action discriminator} is a function $v: \mathcal{X} \times \mathcal{D} \times \mathcal{U} \to \mathbb{R}$ with $\mathcal{D}$ the feasible test space as per Definition~\ref{def:valid_tests} that satisfies the following condition: if $\forall~t \geq 0$ (respectively all $k \in \mathbb{Z}_+$), the control inputs $u$ are such that $v(x(t),d,u) \geq 0$ (respectively $v(x_k,d,u) \geq 0$), the resulting state trajectory $x(t)$ (respectively $x_k$) satisfies $\psi = \F_{[0,t_{\max}]}\mu$, \textit{i.e.} $(x,0) \models \F_{[0,t_{\max}]}\mu$.
\end{definition}
\noindent Now, we can use the feasible input space $\mathcal{U}(x,d)$ and our action discriminator $v$ to formalize our controller-agnostic difficulty measure.
\begin{definition}
\label{def:difficulty_metric}
An \textit{instantaneous difficulty metric} $M:\mathcal{X} \times \mathcal{D} \to \mathbb{R}$ has a well-defined minimizer $\forall~x \in \mathcal{X}$,
\begin{equation}
    \testsynth(x) = \argmin_{d \in \mathcal{D}}~M(x,d),
\end{equation}
that satisfies one of the two conditions below for the same state $x$,
\begin{gather}
    \mathcal{U}(x,\testsynth(x)) = \varnothing,~\mathrm{or}~\testsynth(x) = \argmin\limits_{d \in \mathcal{D}}~\max\limits_{u \in \mathcal{U}(x,d)}~v(x,d,u).
\end{gather}
Here, $\mathcal{U}(x,d)$ is the system's feasible input space as per Definition~\ref{def:feasible_inputs}, and $v$ is the action discriminator as per Definition~\ref{def:heading}.
\end{definition}
\noindent Intuitively then, our difficulty metric $M$ quantifies test difficulty through minimizing the "maximum possible increment" the system could take towards satisfying its desired objective $\mu$ while maintaining its safety specifications $\omega_j$.  Here, the "maximum possible increment" to satisfaction is provided through the action discriminator $v$ which closely resembles the robustness measures traditional to signal temporal logic~\cite{madsen2018metrics,maler2004stl_foundational,maler2013monitoring}.  Effectively, our action discriminator is an instantaneous version of this robustness measure in that it assigns positive values to those system actions $u$ which bring the system closer to satisfying its objective.  It is this instantaneous notion that allows us to develop time-varying tests.  As such, our formal problem statement will follow.
\begin{problem}
For both control system abstractions~\eqref{eq:nom_sys} or~\eqref{eq:discrete_system}, let the system's operational specification $\psi$ satisfy equation~\eqref{eq:spec}.  Then, for either system abstraction, develop (perhaps different) \textit{adversarial test-synthesis procedure(s)} $\testsynth: \mathcal{X} \to \mathcal{D}$ that
\begin{itemize}
    \item are guaranteed to produce a realizable test, \textit{i.e.} $\forall~x\in\mathcal{X},~\exists~d\in\mathcal{D}$ such that $d = \testsynth(x)$;
    \item are the most difficult tests of system behavior at that state, independent of control input, \textit{i.e.} $\testsynth(x) = \argmin_{d \in \mathcal{D}} M(x,d)$ for some difficulty measure $M$ that satisfies definition~\ref{def:difficulty_metric}.
\end{itemize}
\end{problem}
\noindent Here, we note that while we have defined our specifications to be evaluated over continuous time, they can still be used to express specifications for discrete-time systems~\cite{raman2014model,raman2015reactive,lindemann2017robust,lindemann2019robust}.  In the discrete case, time is indexed by $k \in \mathbb{Z}_+$ as opposed to the continuous analog where time is measured on the positive reals.

\section{Continuous-Time Test Generation}
\label{sec:continuous}
In this section, we will state and prove our main result, the development of an adversarial, time-varying test-synthesis procedure for continuous, control-affine control systems of the form~\eqref{eq:nom_sys} subject to timed reach-avoid specifications $\psi$ of the form in~\eqref{eq:spec}.  We will briefly describe the overarching methodology behind our approach, state the developed minimax problem for test synthesis, and end with two theorems regarding its use.

\sectionspacing
\newidea{Overarching Idea:} Based on the recursive definition of the satisfaction relation $\models$ in Section~\ref{sec:STL}, a controller for the nominal system~\eqref{eq:nom_sys} guarantees system satisfaction of the timed reach-avoid specification $\psi$ in~\eqref{eq:spec} if and only if, $\forall~ t \geq 0,$
\begin{equation}
    \label{eq:continuous_specification_satisfaction}
    x(t) \in \cap_{j \in \mathcal{J}}~\llbracket \omega_j \rrbracket,~\mathrm{and~}
    \exists~t' \in [0,t_{\max}] \suchthat x(t') \in \llbracket \mu \rrbracket.    
\end{equation}
\noindent Based on Assumption~\ref{assump:cbf_stl}, the requirement in equation~\eqref{eq:continuous_specification_satisfaction} translates to the following statement $\forall~t \geq 0$ and with $\mathcal{C}$ as the $0$-superlevel set for the control barrier function shown as a subscript:
\begin{equation}
    \label{eq:continuous_cbf_end}
    x(t) \in \cap_{j \in \mathcal{J}}~\mathcal{C}_{h^G_j},~\mathrm{and~}
    \exists~t' \in [0,t_{\max}] \suchthat x(t') \in \mathcal{C}_{h^F}.
\end{equation}
\noindent Without loss of generality, we can assume that at $t = 0$,
\begin{equation}
    \label{eq:continuous_cbf_start}
    x(0) \in \cap_{j \in \mathcal{J}}~\mathcal{C}_{h^G_j},~\mathrm{and~}
    x(0) \not \in \mathcal{C}_{h^F},
\end{equation}
as otherwise, the system would either never be able to satisfy $\psi$ - as it started in an unsafe region, \textit{i.e.} $x(0) \not \in \llbracket \omega_j \rrbracket$ for at least one $j \in \mathcal{J}$- or it would satisfy $\psi$ by remaining stationary.  Neither case constitutes interesting test cases.  Hence, as each $h^G_j$ and $h^F$ is a control barrier function, one way of transitioning from the starting condition~\eqref{eq:continuous_cbf_start} to the end condition~\eqref{eq:continuous_cbf_end} is for the controller to satisfy the following inequalities for some functions $\alpha,\alpha_j \in \kappa_e$ and $\tau > 0$,
\begin{align}
    \dot h^F(x(t),d,u) & \geq -\alpha\left(h^F(x(t),d)\right) + \tau,~\forall~t \in [0,T], \\
    \dot h^G_j(x(t),d,u) & \geq -\alpha_j\left(h^G_j(x(t),d)\right),~\forall~ t \geq 0.
\end{align}
The above conditions provide us a way of generating quantifiably adversarial tests - generate tests that are designed to minimize satisfaction of these inequalities.

\subsection{Statement of Continuous-Time Results}
\label{sec:cont_main_results}
To formalize this satisfaction minimization idea mentioned prior, we will first specify a set of feasible inputs $\mathcal{U}(x,d)$ and an action discriminator function $v$ as follows (with some $\tau > 0$):
\begin{align}
    \hspace{-0.1 in} \mathcal{U}(x,d) & = \left\{u\in\mathcal{U}~\bigg|~\dot h^G_j(x,d,u) \geq -\alpha_j\left(h^G_j(x,d)\right),~\forall~j\right\}, \label{eq:feasible_input} \\
    \hspace{-0.1 in} v(x,d,u) & = \dot{h}^F(x,d,u) - \tau. \label{eq:cont_act_discrim}
\end{align}
Here, we first note that $\mathcal{U}(x,d)$ above identifies those inputs $u \in \mathcal{U}$ that satisfy the CBF condition expressed in Definition~\ref{def:continuous_cbf} for the barrier functions $h^G_j$.  As such, by the work done in~\cite{ames2016control,li2018formally} and Assumption~\ref{assump:cbf_stl}, we know that $\mathcal{U}(x,d)$ is a valid feasible input space as per Definition~\ref{def:feasible_inputs}.  Likewise, for some $\tau > 0$, $v$ is also a valid action discriminator as per Definition~\ref{def:heading}.  To briefly show that this holds for $\mathcal{U}(x,d)$, if the starting condition in equation~\eqref{eq:continuous_cbf_start} and Assumption~\ref{assump:cbf_stl} hold, then by the work done in~\cite{ames2016control}, if all inputs $u$ are chosen such that $u \in \mathcal{U}(x(t),d)~\forall~t\geq 0$, then,
\begin{equation}
    x(t) \in \cap_{j \in \mathcal{J}}~\mathcal{C}_{h^G_j}~\forall~t \geq 0 \implies (x,0) \models \wedge_{j \in \mathcal{J}}\G_{\infty}\omega_j.
\end{equation}
\noindent As a result, $\mathcal{U}(x,d)$ is indeed an instantaneous space of feasible control inputs as per Definition~\ref{def:feasible_inputs}, and via a similar chain of logic, we can show $v$ is also a feasible action discriminator.  As we want to keep our test-synthesis framework controller-agnostic, we will initially propose the following minimax problem over all feasible inputs in equation~\eqref{eq:feasible_input} as our test synthesizer:
\begin{equation}
    \label{eq:proposed_synthesis_continuous}
    \testsynth(x) = \argmin_{d \in \mathcal{D}}\max_{u \in \mathcal{U}(x,d)}~\dot h^F(x,d,u).
\end{equation}

\newidea{Remark on Local/Global Conservatism:}  However, this proposed synthesis technique has two shortcomings.  First, our proposed synthesis technique may suffer from a local-global problem - that by determining a worst-case test at a given state $x$ we find a time-varying test sequence $\testsynth(x(t))$ that is only locally optimal, \textit{i.e.} locally difficult but not globally difficult.  Rectifying this shortcoming while maintaining our controller-agnostic approach would require us to optimize over control sequences in the inner maximization problem in~\eqref{eq:proposed_synthesis_continuous}.  As we assume nonlinear dynamics in~\eqref{eq:nom_sys}, however, this would result in a non-convex inner maximization problem wherein it would be difficult to determine the feasibility of any resulting minimax problem.  In the proposed case, however, we can guarantee feasibility as we will show later, and the tests generated are still interesting.  That being said, determination of tests that are globally difficult and verifying wholistic system behavior is the subject of current work~\cite{akella2022barrier,akella2022scenario, akella2022sample}.

The second shortcoming is that it may be the case that there exist test parameter vectors $d \in \mathcal{D}$ such that the feasible input space $\mathcal{U}(x,d) = \varnothing$, as we have made no effort to restrict against this scenario.  In these cases, the inner maximization problem would be ill-posed, frustrating any further analysis.  However, were there such a test parameter vector $d$, we would like to identify it as a worst-case test.  Indeed this is one of the conditions we used to define our difficulty metric in Definition~\ref{def:difficulty_metric}.  To facilitate analysis in the scenario where the feasible input space might be empty then, we will define a function $\mathcal{F}$ which filters a solution based on the emptiness (or lack thereof) of a provided set.  More accurately, for two scalars $\epsilon,\zeta \in \mathbb{R}$, an arbitrary set $A \subset \mathbb{R}^m$, and a vector $a \in \mathbb{R}^m$, define $\mathcal{F}$ as follows:
\begin{equation}
    \label{eq:filtering}
    \mathcal{F}(\epsilon,a,A,\zeta) = \begin{cases}
        \epsilon & \mbox{if}~a \in A, \\
        \zeta & \mbox{else}.
    \end{cases}
\end{equation}
Then, we will make one assumption on the system dynamics~\eqref{eq:nom_sys}, the feasible test space $\mathcal{D}$, and our control barrier functions.
\begin{assumption}
\label{assump:continous_assumption}
Both the state space $\mathcal{X}$ and the feasible test space $\mathcal{D}$ are compact, the input space $\mathcal{U}$ is a compact polytope in $\mathbb{R}^m$, and $h^F,h^G_j \in C^1(\mathcal{X} \times \mathcal{D})$.
\end{assumption}

For context, Assumption~\ref{assump:continous_assumption} is not that restrictive and we will explain why.  First, we restrict the space of feasible tests $\mathcal{D}$ to a compact set as we do not expect our test parameter vector $d$ to tend to $\pm \infty$.  Furthermore, if $d$ is bounded, we expect our realized test to be capable of taking values on the boundary.  For an example, consider Example~\ref{ex:turtle_2} where the obstacles are allowed to take center locations on the boundary of the hyper-rectangle in which they are confined.  Then, the assumptions of compactness on the state space $\mathcal{X}$ and polytopic compactness of the input space $\mathcal{U}$ are satisfied by most torque-bounded robotic systems. Finally, the assumption of continuity of the control barrier functions and their first partial derivatives is an easily satisfied restriction on the smoothness of our control barrier functions.  For an example, consider Example~\ref{ex:turtle} where this holds.

Then, we will define a minimum satisfaction value $m$ that meets the following inequality $\forall~d \in \mathcal{D}$ and $x \in \mathcal{X}$:
\begin{equation}
    \label{eq:min_satisfaction}
    m \leq \min_{u \in \mathcal{U},~x \in \mathcal{X},~d \in \mathcal{D}}~\dot{h}^F(x,d,u).
\end{equation}
While it is unclear at the moment whether such an $m$ exists, we will formally prove its existence in the proof for Theorem~\ref{thm:existence} to follow.  Now, we can formally state our test-synthesis procedure, with $\mathcal{F}$ as in~\eqref{eq:filtering}, $m$ as in equation~\eqref{eq:min_satisfaction}, and $\mathcal{U}(x,d)$ as per equation~\eqref{eq:feasible_input}.
\begin{equation}
    \label{eq:feedback_law}
    \testsynth(x) = \argmin_{d \in \mathcal{D}}~\max_{u \in \mathcal{U}}~\mathcal{F}\left(\dot h^F(x,d,u), u, \mathcal{U}(x,d), m\right).
\end{equation}
This leads to our first theorem - that minimax problem~\eqref{eq:feedback_law} is guaranteed to have a solution $\forall~x \in \mathcal{X}$.
\begin{theorem}
\label{thm:existence}
Let Assumption~\ref{assump:continous_assumption} hold.  The test synthesizer in~\eqref{eq:feedback_law} is guaranteed to have a solution $d \in \mathcal{D}$ for every $x\in \mathcal{X}$, \textit{i.e.}
\begin{equation}
    \forall~x\in\mathcal{X}~\exists~d\in\mathcal{D}\suchthat d = \testsynth(x).
\end{equation}
\end{theorem}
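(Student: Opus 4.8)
\textit{Proof proposal.} The plan is to peel the statement into three manageable pieces: existence of the constant $m$, attainment of the inner maximization for each fixed pair $(x,d)$, and attainment of the outer minimization over $d$. I would begin with $m$, since its existence is deferred in the text. Because $h^F\in C^1(\mathcal{X}\times\mathcal{D})$ and the dynamics~\eqref{eq:nom_sys} are control-affine, the map $(x,d,u)\mapsto\dot h^F(x,d,u)=L_fh^F(x,d)+L_gh^F(x,d)\,u$ is continuous on $\mathcal{X}\times\mathcal{D}\times\mathcal{U}$, which is compact by Assumption~\ref{assump:continous_assumption}. The extreme value theorem then yields a global minimizer, and taking $m$ equal to the minimum value verifies~\eqref{eq:min_satisfaction}.

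Next, fix $x$ and examine the filtered objective $\Phi(x,d,u):=\mathcal{F}(\dot h^F(x,d,u),u,\mathcal{U}(x,d),m)$. Since the constraints $\dot h^G_j(x,d,u)+\alpha_j(h^G_j(x,d))\ge 0$ defining $\mathcal{U}(x,d)$ are non-strict and continuous in $u$, the set $\mathcal{U}(x,d)$ is a closed subset of the compact polytope $\mathcal{U}$, hence compact. When $\mathcal{U}(x,d)\neq\varnothing$ the continuous function $\dot h^F$ attains its maximum on it, and when $\mathcal{U}(x,d)=\varnothing$ the filter returns the constant $m$; either way the inner maximum $G(x,d):=\max_{u\in\mathcal{U}}\Phi(x,d,u)$ is well defined. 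The key structural observation I would record here is the \emph{floor property}: because $m\le\dot h^F$ everywhere, one has $G(x,d)\ge m$ for every $d$, with equality exactly when $\mathcal{U}(x,d)=\varnothing$.

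The crux is the outer minimization, and I would split it on the floor property. If some $d^\star\in\mathcal{D}$ has $\mathcal{U}(x,d^\star)=\varnothing$, then $G(x,d^\star)=m$ meets the global lower bound, so $d^\star$ is immediately a minimizer and we are done. Otherwise $\mathcal{U}(x,d)\neq\varnothing$ for all $d$, and $G(x,d)=\max_{u\in\mathcal{U}(x,d)}\dot h^F(x,d,u)$ is the optimal value of a parametric program that is \emph{affine} in $u$ over the polyhedron $\mathcal{U}(x,d)$, with data varying continuously in $d$. Here I would invoke a maximum-theorem argument: the closed-graph property of the correspondence $d\mapsto\mathcal{U}(x,d)$ gives upper semicontinuity of $G(x,\cdot)$, and the polyhedral structure together with continuity of the constraint data restores lower hemicontinuity of the correspondence, hence continuity of $G(x,\cdot)$ on the compact set $\mathcal{D}$; the extreme value theorem then delivers the minimizer.

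The main obstacle is precisely this last continuity claim. The feasible-input correspondence $d\mapsto\mathcal{U}(x,d)$ need not be lower hemicontinuous at \emph{degenerate} test parameters---those admitting no strictly feasible input---so the value function can jump downward and is only upper semicontinuous in general, which by itself does not guarantee an attained minimum. Two features rescue the argument: the $\mathcal{F}$-filtering pins the value to the attained floor $m$ exactly where the feasible set collapses to $\varnothing$, handling the nonempty-to-empty boundary, and the affineness of the constraints in $u$ over the compact polytope $\mathcal{U}$ rules out the pathological value functions that would leave the infimum unattained. I would therefore spend the bulk of the effort verifying, via a constraint qualification on $\mathcal{U}$ (a Slater-type condition) or directly through parametric linear-programming estimates, that no such behavior occurs, which is what ultimately guarantees $\argmin_{d\in\mathcal{D}}G(x,d)\neq\varnothing$ for every $x\in\mathcal{X}$.
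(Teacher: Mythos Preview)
Your decomposition---existence of $m$, attainment of the inner max, case split on whether some $d$ empties $\mathcal{U}(x,d)$, and the ``floor property'' that any such $d$ hits the lower bound $m$ and is therefore an automatic minimizer---mirrors the paper's proof exactly. The divergence is entirely in the case $\Gamma(x)=\{d:\mathcal{U}(x,d)=\varnothing\}=\varnothing$.

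There the paper does not try to prove $G(x,\cdot)$ continuous. It dualizes the inner LP, argues that the optimal Lagrange multipliers $\lambda^*(d),\mu^*(d)$ are uniformly bounded over $d\in\mathcal{D}$ (since the primal is feasible for every $d$), restricts the dual variables to the resulting compact box, and then applies Fan's minimax existence result (Lemma~\ref{lem:minimax_solution}) to the continuous Lagrangian on a compact product set. This bypasses Berge's theorem and any lower-hemicontinuity check entirely.

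Your maximum-theorem route is reasonable in spirit, but the claim that ``affineness of the constraints in $u$ over the compact polytope $\mathcal{U}$ rules out the pathological value functions'' is not correct as stated and is the gap you would have to close. A one-dimensional instance already breaks it: with $u\in[0,1]$, $d\in[0,1]$, and the single affine-in-$u$ constraint $d\,u\le d^2$, the feasible set is $[0,d]$ for $d>0$ and $[0,1]$ for $d=0$; maximizing $u$ gives $G(d)=d$ for $d>0$ and $G(0)=1$, so $\inf_{d}G(d)=0$ is unattained on the compact set $[0,1]$. Affineness in $u$ alone does not save you; you genuinely need the Slater-type qualification you allude to (or an equivalent parametric-LP estimate), and that is exactly the work the paper replaces with its uniform-multiplier bound plus Lemma~\ref{lem:minimax_solution}.
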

\noindent With respect to the second aspect of our problem statement then, we can define an instantaneous, controller-agnostic difficulty measure $M: \mathcal{X} \times \mathcal{D} \to \mathbb{R}$ as the interior maximization problem in~\eqref{eq:feedback_law}.
\begin{align}
M(x,d) & = \max_{u \in \mathcal{U}}~\mathcal{F}\left(\dot{h}^F(x,d,u), u,  \mathcal{U}(x,d), m\right). \label{eq:difficulty_measure}
\end{align}
As in the case of defining a set of feasible inputs, we need to show that this difficulty measure satisfies the conditions in Definition~\ref{def:difficulty_metric} to be a valid difficulty measure.  This leads to the following Lemma.
\begin{lemma}
\label{lem:continuous_valid_difficulty_measure}
$M$ as defined in equation~\eqref{eq:difficulty_measure} is a difficulty measure as per Definition~\ref{def:difficulty_metric} with feasible input space $\mathcal{U}(x,d)$ as per equation~\eqref{eq:feasible_input} and action discriminator $v$ as per equation~\eqref{eq:cont_act_discrim}.
\end{lemma}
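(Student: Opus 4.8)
The plan is to verify the two requirements of Definition~\ref{def:difficulty_metric} in turn: first that $M$ from~\eqref{eq:difficulty_measure} admits a well-defined minimizer $\testsynth(x)$ at every $x \in \mathcal{X}$, and second that this minimizer satisfies one of the two disjunctive conditions. The first requirement is exactly the content of Theorem~\ref{thm:existence}, which also establishes the existence of the constant $m$ satisfying~\eqref{eq:min_satisfaction}; I would simply cite it, so that the whole burden of the lemma falls on establishing the disjunction.

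The central computation is to simplify $M(x,d)$ by unwinding the filtering function $\mathcal{F}$ from~\eqref{eq:filtering}. Fixing $(x,d)$, I would split on the emptiness of $\mathcal{U}(x,d)$ in~\eqref{eq:feasible_input}. When $\mathcal{U}(x,d) = \varnothing$, every $u \in \mathcal{U}$ lands in the ``else'' branch of $\mathcal{F}$, so the maximand is identically $m$ and hence $M(x,d) = m$. When $\mathcal{U}(x,d) \neq \varnothing$, the maximand equals $\dot h^F(x,d,u)$ for $u \in \mathcal{U}(x,d)$ and $m$ otherwise; because $m \leq \dot h^F(x,d,u)$ for all admissible $u,x,d$ by~\eqref{eq:min_satisfaction}, the substituted values $m$ never exceed the genuine ones, so the maximum over all of $\mathcal{U}$ is attained on $\mathcal{U}(x,d)$, giving
\[
M(x,d) = \max_{u \in \mathcal{U}(x,d)} \dot h^F(x,d,u) = \tau + \max_{u \in \mathcal{U}(x,d)} v(x,d,u),
\]
where the last equality uses $v = \dot h^F - \tau$ from~\eqref{eq:cont_act_discrim}. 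This identity, that on feasible tests $M$ agrees with $\max_u v$ up to the additive constant $\tau$, is the key structural fact.

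With this in hand I would conclude by a case analysis on whether any test with empty feasible input set exists at $x$. If some $d$ has $\mathcal{U}(x,d) = \varnothing$, then since the display shows $M \geq m$ everywhere and $M = m$ on such $d$, the minimum value of $M$ over $\mathcal{D}$ equals $m$ and is attained at those tests; selecting $\testsynth(x)$ among these minimizers yields $\mathcal{U}(x,\testsynth(x)) = \varnothing$, the first condition of Definition~\ref{def:difficulty_metric}. Otherwise $\mathcal{U}(x,d) \neq \varnothing$ for every $d \in \mathcal{D}$, so the displayed identity holds for all $d$, and minimizing $M(x,d)$ over $\mathcal{D}$ is equivalent to minimizing $\max_{u \in \mathcal{U}(x,d)} v(x,d,u)$ since the two objectives differ only by the constant $\tau$. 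Hence the minimizer sets coincide and $\testsynth(x) = \argmin_{d \in \mathcal{D}} \max_{u \in \mathcal{U}(x,d)} v(x,d,u)$, the second condition.

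The subtle point I anticipate is justifying that the filtered-in value $m$ is genuinely inert in the inner maximization, i.e.\ that replacing infeasible inputs by $m$ leaves $\max_{u \in \mathcal{U}} \mathcal{F}(\cdots)$ equal to $\max_{u \in \mathcal{U}(x,d)} \dot h^F$ whenever the latter set is nonempty; this rests entirely on the global lower-bound property~\eqref{eq:min_satisfaction} of $m$ and is precisely the reason $m$ is defined as it is. Care is also needed at the borderline configuration where $\max_{u \in \mathcal{U}(x,d)} \dot h^F$ itself equals $m$ for a $d$ with nonempty feasible set: this does not violate either condition, but it means the optimal value $m$ may be shared between empty- and nonempty-feasible tests, so in the first case above I must be explicit that the selection of $\testsynth(x)$ picks an empty-feasible test whenever one is available.
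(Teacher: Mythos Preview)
Your proposal is correct and follows essentially the same approach as the paper: both invoke Theorem~\ref{thm:existence} for existence of the minimizer and then split on whether $\Gamma(x)$ (the set of tests with empty feasible input space) is nonempty, obtaining the first disjunct of Definition~\ref{def:difficulty_metric} in the nonempty case and the second disjunct otherwise. Your version is slightly more explicit than the paper's in unwinding $\mathcal{F}$ and in tracking the additive constant $\tau$ when passing from $\dot h^F$ to $v$, and you flag a selection subtlety in the borderline case that the paper leaves implicit; otherwise the arguments coincide.
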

\noindent Then, Theorem~\ref{thm:existence} and Lemma~\ref{lem:continuous_valid_difficulty_measure} directly provide for the following corollary regarding minimization of $M$.
\begin{corollary}
\label{corr:optimality}
Let Assumption~\ref{assump:continous_assumption} hold. The test synthesizer in~\eqref{eq:feedback_law} minimizes the difficulty measure $M$ in~\eqref{eq:difficulty_measure} over all $d \in \mathcal{D}$, \textit{i.e.}
\begin{align}
    \testsynth(x) & = \argmin_{d \in \mathcal{D}}~M(x,d) 
\end{align}
\end{corollary}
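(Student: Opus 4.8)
The plan is to treat this as the bookkeeping corollary it is: the proof consists of aligning definitions and invoking the two preceding results, not of producing new analysis. First I would observe that the difficulty measure $M$ in~\eqref{eq:difficulty_measure} was defined to be exactly the interior maximization problem appearing inside the test synthesizer~\eqref{eq:feedback_law}. Consequently, by direct substitution, the minimax in~\eqref{eq:feedback_law} is literally $\argmin_{d\in\mathcal{D}} M(x,d)$ --- the two expressions are the same outer minimization over $\mathcal{D}$ of one and the same objective $M(x,\cdot)$, so there is nothing to compute, only to recognize the syntactic identity.

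The single point that must be checked is that this $\argmin$ is well-defined, i.e.\ that a minimizer in $\mathcal{D}$ is actually attained for each $x\in\mathcal{X}$ rather than the minimizing set being empty. This is precisely the content of Theorem~\ref{thm:existence}, which under Assumption~\ref{assump:continous_assumption} guarantees that~\eqref{eq:feedback_law} returns a solution $d\in\mathcal{D}$ for every $x$. I would therefore cite Theorem~\ref{thm:existence} to license writing $\testsynth(x)=\argmin_{d\in\mathcal{D}} M(x,d)$ as a well-posed equality, and then appeal to Lemma~\ref{lem:continuous_valid_difficulty_measure} to certify that $M$ is a genuine difficulty measure in the sense of Definition~\ref{def:difficulty_metric}, with feasible input space~\eqref{eq:feasible_input} and action discriminator~\eqref{eq:cont_act_discrim}. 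Chaining these three facts gives the claim.

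I would expect no real obstacle here, since all the analytic weight has already been discharged: the compactness and continuity argument establishing attainment lives in Theorem~\ref{thm:existence}, and the verification of the two-case condition of Definition~\ref{def:difficulty_metric} lives in Lemma~\ref{lem:continuous_valid_difficulty_measure}. The only subtlety worth flagging is exactly the well-definedness of the $\argmin$: without Theorem~\ref{thm:existence} the minimizing set could be empty and the equality vacuous, so I would be careful to invoke it explicitly rather than treat attainment of the minimizer as automatic.
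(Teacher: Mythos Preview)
Your proposal is correct and matches the paper's approach exactly: the paper's proof is the one-line ``This is a consequence of Theorem~\ref{thm:existence} and Lemma~\ref{lem:continuous_valid_difficulty_measure},'' and your write-up simply unpacks why those two results suffice (syntactic identity of~\eqref{eq:feedback_law} with $\argmin_d M(x,d)$, attainment from Theorem~\ref{thm:existence}, and validity of $M$ from Lemma~\ref{lem:continuous_valid_difficulty_measure}).
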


\subsection{Proof of Continuous-Time Results}
\label{sec:cont_proofs}
This section will contain all necessary lemmas and proofs for both Theorems stated in Section~\ref{sec:cont_main_results}.  To start, we will reiterate a known result from the study of minimax problems as taken from the proof for Theorem 1 in~\cite{fan1953minimax}:
\begin{lemma}(From Theorem 1 in~\cite{fan1953minimax})
\label{lem:minimax_solution}
Let $\mathbb{X}$ and $\mathbb{Y}$ be compact sets, and let $f: \mathbb{X} \times \mathbb{Y} \to \mathbb{R}$ be a function that is continuous in both its arguments.  The following minimax problem has a solution, \textit{i.e.}
\begin{equation}
    \exists~x^* \in \mathbb{X},~y^*\in\mathbb{Y} \suchthat f(x^*,y^*) = \min_{x \in \mathbb{X}}~\max_{y \in \mathbb{Y}}~f(x,y).
\end{equation}
\end{lemma}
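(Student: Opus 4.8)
The plan is to prove this by the standard reduction to the extreme value theorem, applied twice: once to the inner maximization for each fixed $x$, and once to the resulting marginal value function of $x$. The central object is the value function
\begin{equation}
    g(x) = \max_{y \in \mathbb{Y}} f(x,y),
\end{equation}
and the whole argument hinges on showing that $g$ is well-defined and continuous on $\mathbb{X}$.

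First I would establish that $g$ is well-defined. For each fixed $x \in \mathbb{X}$, the partial map $y \mapsto f(x,y)$ is continuous on $\mathbb{Y}$ (since $f$ is continuous in both arguments), and $\mathbb{Y}$ is compact, so by the extreme value theorem this map attains its maximum. Hence $g(x)$ is finite for every $x$, and the inner supremum is in fact achieved at some $y(x) \in \mathbb{Y}$. This already disposes of the attainment of the inner $\max$ and reduces the problem to analyzing $g$.

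The step I expect to be the main obstacle, and the one I would treat most carefully, is the continuity of $g$. The natural tool is uniform continuity: because $\mathbb{X} \times \mathbb{Y}$ is a product of compact sets it is itself compact, so the jointly continuous $f$ is \emph{uniformly} continuous on it. Fixing $\epsilon > 0$, uniform continuity produces a $\delta > 0$ for which $|f(x_1,y) - f(x_2,y)| < \epsilon$ holds \emph{simultaneously for all} $y \in \mathbb{Y}$ whenever $\|x_1 - x_2\| < \delta$. Taking the maximum over $y$ on each side and using that $|\max_y f(x_1,y) - \max_y f(x_2,y)| \leq \max_y |f(x_1,y) - f(x_2,y)|$ then yields $|g(x_1) - g(x_2)| \leq \epsilon$, so $g$ is (uniformly) continuous on $\mathbb{X}$. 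The uniformity in $y$ is exactly what makes this work and is the reason compactness of $\mathbb{Y}$, not merely continuity of $f(x,\cdot)$, is needed.

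Finally I would apply the extreme value theorem a second time. Since $g$ is continuous on the compact set $\mathbb{X}$, it attains its minimum at some $x^* \in \mathbb{X}$, so $g(x^*) = \min_{x \in \mathbb{X}} g(x)$. Taking $y^* = y(x^*)$ to be the inner maximizer guaranteed above then gives $f(x^*, y^*) = g(x^*) = \min_{x \in \mathbb{X}} \max_{y \in \mathbb{Y}} f(x,y)$, which is precisely the claim. I would close by remarking that this argument establishes only the existence and attainment of the nested $\min\max$ value, not any saddle-point or $\min\max = \max\min$ equality; consequently no convexity/concavity hypotheses on $f$ are required, only joint continuity and compactness of the two domains.
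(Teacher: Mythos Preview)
Your argument is correct and complete: defining $g(x) = \max_{y \in \mathbb{Y}} f(x,y)$, establishing its continuity via uniform continuity of $f$ on the compact product $\mathbb{X}\times\mathbb{Y}$, and then applying the extreme value theorem to $g$ on $\mathbb{X}$ is exactly the standard route, and your handling of the inequality $|\max_y f(x_1,y)-\max_y f(x_2,y)|\leq \max_y|f(x_1,y)-f(x_2,y)|$ is clean.

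There is, however, nothing to compare against: the paper does not supply its own proof of this lemma. It is stated as a known result imported from the proof of Theorem~1 in Fan's 1953 minimax paper and is used as a black box in the proof of Theorem~\ref{thm:existence}. So your write-up is not an alternative to the paper's argument but rather a self-contained justification that the paper elects to cite rather than reproduce. One minor remark: you implicitly read ``continuous in both its arguments'' as \emph{joint} continuity (which is what your uniform-continuity step needs and what the paper's later applications actually supply); separate continuity alone would not suffice, so it is worth making that reading explicit.
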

\noindent Second, for any state $x$, we can partition the space of feasible tests $\mathcal{D}$ into a set that does not permit feasible inputs and its complement:
\begin{gather}
    \label{eq:gamma_set}
    \Gamma(x) \triangleq \{ d \in \mathcal{D}~|~\mathcal{U}(x,d) = \varnothing \}.
\end{gather}
With Lemma~\ref{lem:minimax_solution} and $\Gamma$ above, we can prove Theorem~\ref{thm:existence}.

\begin{proof}
This proof will follow a case by case argument.  These cases are (Case 1) $\Gamma(x) \neq \varnothing$ and (Case 2) $\Gamma(x) = \varnothing$.  Here, $\Gamma(x)$ is as defined in~\eqref{eq:gamma_set}.

\sectionspacing
\newidea{Case 1 $\Gamma(x) \neq \varnothing$:} In this case, $\forall~d\in\Gamma(x)$, $\mathcal{U}(x,d) = \varnothing$.  By definition of $\mathcal{F}$ in equation~\eqref{eq:filtering}, this implies
\begin{equation}
    \label{eq:filter_when_empty}
    \mathcal{F}\left(\dot h^F(x,d,u), u, \mathcal{U}(x,d), m\right) = m,~\forall~u \in \mathcal{U},~d \in \Gamma(x).
\end{equation}
with $m$ as in equation~\eqref{eq:min_satisfaction}.  As mentioned earlier, we still need to formally prove that such an $m$ exists.  This arises through simple application of the Extreme Value Theorem.  As $h^F \in C^1(\mathcal{X} \times \mathcal{D})$ by Assumption~\ref{assump:continous_assumption} and the dynamics in ~\eqref{eq:nom_sys} are control-affine, $\dot h^F(x,d,u)$ is continuous in all three of its arguments.  By compactness of the feasible spaces for the minimization problem in~\eqref{eq:min_satisfaction}, Extreme Value Theorem guarantees a solution to the same minimization problem and setting $m$ to be that solution suffices to prove existence of $m$.

For the remainder of the proof, we will use the notation offered by $M$ in equation~\eqref{eq:difficulty_measure} to represent the value of the inner maximization problem in equation~\eqref{eq:feedback_law}.  To prove the required result then, we need to show that $M(x,d)$ is lower bounded by some value, and that there exists some $d \in \mathcal{D}$ that achieves this value.  To start, we claim that $M(x,d) \geq m,~\forall~d \in \mathcal{D}$.  This is easily verifiable for all $d \in \Gamma(x)$ by equality~\eqref{eq:filter_when_empty}.  It remains to show this lower bound works for all $d \in \mathcal{D}$ such that $d \not \in \Gamma(x)$.  This stems via definition of $m$ in equation~\eqref{eq:min_satisfaction}.  For each $d \not \in \Gamma(x)$, $\mathcal{U}(x,d) \neq \varnothing$.  As a result,
\begin{equation}
    M(x,d) = \max_{u \in \mathcal{U}(x,d)}~\dot{h}^F(x,d,u) \geq m, ~\forall~d \in \mathcal{D} \cap \Gamma(x)^C.
\end{equation}
This concludes proving that $M(x,d) \geq m~\forall~d \in \mathcal{D}$.  To finish the proof, it requires that at least one test parameter vector $d \in \mathcal{D}$ ensures $M(x,d) = m$, and any vector $d \in \Gamma(x)$ satisfies this criteria, concluding the proof for this case.

\sectionspacing
\newidea{Case 2 $\Gamma(x) = \varnothing$:}
In this case, we note that the inner maximization problem in the feedback law~\eqref{eq:feedback_law} is equivalent to a Linear Program:
\begin{align}
        \min_{d \in \mathcal{D}} & \quad \max_{u \in \mathbb{R}^m} & & \hspace{-0.1 in}c(d)^T u, \label{eq:LP_inner}\\
        & \mathrm{subject~to~} & & \hspace{-0.1 in} Au \leq b, & & \hspace{-0.1 in}(\equiv u \in \mathcal{U}),\\
        & & & \hspace{-0.1 in} C(d) u \leq k(d), & & \hspace{-0.1 in} (\equiv u \in \mathcal{U}(x,d)).
\end{align}
LP duality turns equation~\eqref{eq:LP_inner} into the following:
\begin{gather}
    \min_{d \in \mathcal{D}, \lambda \geq 0, \mu \geq 0}~\max_{u \in \mathbb{R}^m} 
    \begin{bmatrix}
    c(d) \\
    -\lambda \\
    -\mu
    \end{bmatrix}^T
    \begin{bmatrix}
    u \\
    Au - b \\
    C(d)u - k(d)
    \end{bmatrix}. \label{eq:LP_reform}
\end{gather}
For minimax problem~\eqref{eq:LP_reform}, we note that if we further constrain the inner maximization problem such that $u \in \mathcal{U}$, this does not change the solution:
\begin{align}
    & \min_{d \in \mathcal{D}, \lambda \geq 0, \mu \geq 0}~\max_{u \in \mathcal{U}} ~
    \begin{bmatrix}
    c(d) \\
    -\lambda \\
    -\mu
    \end{bmatrix}^T
    \begin{bmatrix}
    u \\
    Au - b \\
    C(d)u - k(d)
    \end{bmatrix} \label{eq:LP_ucompact}\\
    & = \min_{d \in \mathcal{D}, \lambda \geq 0, \mu \geq 0, \gamma \geq 0}~\max_{u \in \mathbb{R}^m}~
    \begin{bmatrix}
    c(d) \\
    -\lambda \\
    -\mu \\
    -\gamma \\
    \end{bmatrix}^T
    \begin{bmatrix}
    u \\
    Au - b \\
    C(d)u - k(d) \\
    Au - b
    \end{bmatrix}, \\
    & = \min_{d \in \mathcal{D}, \beta \geq 0, \mu \geq 0}~\max_{u \in \mathbb{R}^m}~
    \begin{bmatrix}
    c(d) \\
    -\beta \\
    -\mu
    \end{bmatrix}^T
    \begin{bmatrix}
    u \\
    Au - b \\
    C(d)u - k(d)
    \end{bmatrix}, \\
    & = \eqref{eq:LP_reform}.
\end{align}

Additionally, for any $d \in \mathcal{D}$, minimax problem \eqref{eq:LP_reform} has a solution.  This stems from the fact that $\Gamma(x) = \varnothing$, and as a result, $\mathcal{U}(x,d)$ is a non-empty, closed polytope in $\mathbb{R}^m$ (see equation~\eqref{eq:gamma_set} for reference).  Therefore, the inner maximization problem in~\eqref{eq:LP_inner} has a solution, and via LP duality, so to does~\eqref{eq:LP_reform} have a solution.  Furthermore, as minimax problem~\eqref{eq:LP_ucompact} is equivalent to minimax problem~\eqref{eq:LP_reform}, so to does~\eqref{eq:LP_ucompact} have a solution for any $d \in \mathcal{D}$.  In addition, the $d$-dependent Lagrange multipliers for this solution $\lambda^*(d) < \infty$ and $\mu^*(d) < \infty$, as a solution $u^*$ exists.  As this is valid $\forall~d\in\mathcal{D}$, we note that $\exists~M_\lambda < \infty$ and $M_\mu < \infty$ such that $\lambda^*(d) \leq M_\lambda$ and $\mu^*(d) \leq M_\mu$ element-wise $\forall~d\in\mathcal{D}$.  If this were not the case, then there exists at least one $d \in \mathcal{D}$ such that $\lambda^*(d) \to \infty$ or $\mu^*(d) \to \infty$, implying infeasibility of the inner maximization problem in~\eqref{eq:LP_inner}, which is a contradiction.  As a result, we can uniformly upper bound $\lambda,\mu$ in~\eqref{eq:LP_ucompact} resulting in the following equality:
\begin{align}
    \hspace{-0.2 in}\eqref{eq:LP_inner} = 
    \min_{\substack{d \in \mathcal{D}, \\ 0 \leq \lambda \leq M_\lambda, \\ 0 \leq \mu \leq M_\mu}}~\max_{u \in \mathcal{U}}~
    \begin{bmatrix}
    c(d) \\
    -\lambda \\
    -\mu
    \end{bmatrix}^T
    \begin{bmatrix}
    u \\
    Au - b \\
    C(d)u - k(d)
    \end{bmatrix}. \label{eq:compact_minimax}
\end{align}
Finally, the minimax problem~\eqref{eq:compact_minimax} satisfies the conditions for Lemma~\ref{lem:minimax_solution}, guaranteeing a solution, \textit{i.e.} $\exists~d \in \mathcal{D}$ such that $d = \testsynth(x)$, proving the result for this case.

\sectionspacing
For an arbitrary $x \in \mathcal{X}$, the cases above prove that $\exists~d\in\mathcal{D}$ such that $d = \testsynth(x)$~\eqref{eq:feedback_law}.  As the choice of $x$ was left arbitrary, this result is valid $\forall~x\in\mathcal{X}$, thereby completing the proof.
\end{proof}

This concludes the proof for Theorem~\ref{thm:existence}.  It remains to prove Lemma~\ref{lem:continuous_valid_difficulty_measure} to show that we do indeed minimize a valid difficulty measure with our proposed approach.  The proof is as follows.

\begin{proof}
This proof will proceed in a case-by-case fashion as had the proof for Theorem~\ref{thm:existence}.  These cases will be $\Gamma(x) \neq \varnothing$ (Case 1) and $\Gamma(x) = \varnothing$ (Case 2).  In both cases however, we know via Theorem~\ref{thm:existence} that there exists a minimizer $\testsynth(x)$ of our proposed state-based difficulty metric $M$.  Here, $\testsynth(x)$ is defined in equation~\eqref{eq:feedback_law} and $M$ is defined in equation~\eqref{eq:difficulty_measure}.

\sectionspacing
\newidea{Case 1} $\Gamma(x) \neq \varnothing$: In this case, we know via the proof for Theorem~\ref{thm:existence} that the minimizer $\testsynth(x) \in \Gamma(x)$.  As a result, $\mathcal{U}(x,\testsynth(x)) = \varnothing$.  As such, we know that in this case, any minimizers $d$ of $M$ - which are guaranteed to exist via Theorem~\ref{thm:existence} - are such that $\mathcal{U}(x,d) = \varnothing$ indicating that they satisfy the first criteria for $M$ to be a valid difficulty measure as per Definition~\ref{def:difficulty_metric}.

\sectionspacing
\newidea{Case 2} $\Gamma(x) = \varnothing$: In this case, we know via the proof for Theorem~\ref{thm:existence} that the minimizer $\testsynth(x)$ yields a non-empty feasible input space, \textit{i.e.} $\mathcal{U}(x,\testsynth(x)) \neq \varnothing$.  As a result, by definition of $\testsynth(x)$ in equation~\eqref{eq:feedback_law}, the filtration function $\mathcal{F}$ in equation~\eqref{eq:filter_when_empty}, and the action discriminator $v$ in equation~\eqref{eq:cont_act_discrim} we have the following equality:
\begin{equation}
    \testsynth(x) = \argmin_{d \in \mathcal{D}}~\max_{u \in \mathcal{U}(x,d)}~v(x,d,u).
\end{equation}
As such, the minimizer $\testsynth(x)$ satisfies the second condition for $M$ to be a valid difficulty metric in this case.

As the choice of $x \in \mathcal{X}$ was left arbitrary, the prior logic holds $\forall~x \in \mathcal{X}$ thus concluding the proof that $M$ as per equation~\eqref{eq:difficulty_measure} is a valid difficulty measure as per Definition~\ref{def:difficulty_metric}.
\end{proof}  

Finally, it remains to prove Corollary~\ref{corr:optimality}, which is a direct consequence of Theorem~\ref{thm:existence} and Lemma~\ref{lem:continuous_valid_difficulty_measure}.

\begin{proof}
This is a consequence of Theorem~\ref{thm:existence} and Lemma~\ref{lem:continuous_valid_difficulty_measure}.
\end{proof}

\begin{figure*}[t]
    \centering
    \includegraphics[width = 0.99\textwidth]{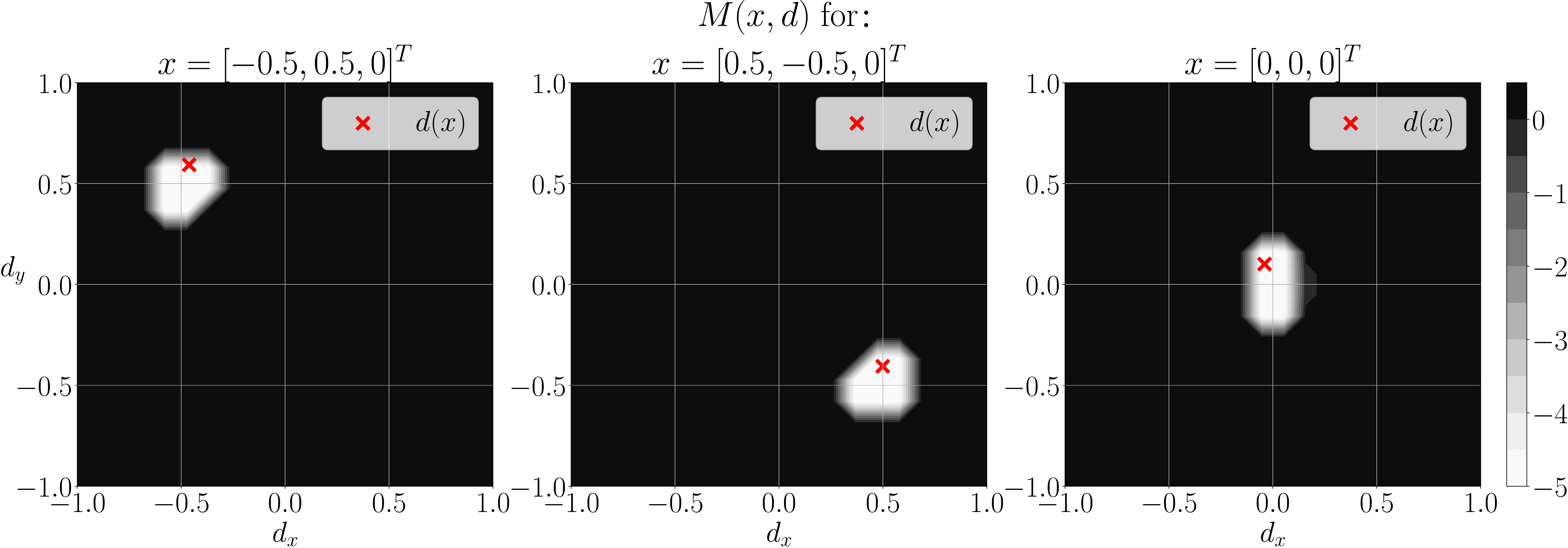}
    \caption{Minimization of the difficulty measure $M$ defined in equation~\eqref{eq:cont_ex_difficulty_measure}, for the autonomous agent example in Section~\ref{sec:cont_examples}. Notice that in each of the three cases shown, the output of the minimax test synthesizer in equation~\eqref{eq:cont_ex_test_synthesizer} accurately identifies a test that minimizes the corresponding difficulty measure - the colorbar is shown to the right hand side.}
    \label{fig:difficulty_measure}
    \vspace{-0.1 in}
\end{figure*}

\subsection{Corollaries - Perturbing the System Dynamics}
\label{sec:cont_corollaries}
In the prior section, we stated and proved two theorems regarding the existence and maximal difficulty of the adversarial, time-varying tests generated by our proposed technique, minimax problem~\eqref{eq:feedback_law}.  However, this setting only accounts for the scenario where the test permits perturbation of the truth regions for the timed reach-avoid predicates $\mu$ and $\omega_j$ (by Assumption~\ref{assump:cbf_stl}).  What if instead, we wanted a test where we simulated a motor failure, increased/decreased system friction, or other system-specific failures.  In this setting, the nominal control system changes to the following, where the dependence on the dynamics $f,g$ on $d$ correspond to simulations of motor failure and the $Cd$ term corresponds to increased/decreased friction:
\begin{equation}
    \label{eq:perturbed_sys}
    \dot x = f(x,d) + g(x,d)u + Cd,~C \in \mathbb{R}^{n \times p}.
\end{equation}
Here, we can still prove that our proposed test-synthesis procedure is still guaranteed to produce realizable and maximally difficult tests, as expressed through the following two Corollaries.  Corollary~\ref{cor:cont_existence_perturbed} states that our synthesizer in~\eqref{eq:feedback_law} is still guaranteed to produce realizable tests of system behavior in this setting.  Likewise, Corollary~\ref{cor:cont_optimality_perturbed} proves that these tests are maximally difficult with respect to the same difficulty metric $M$ as in~\eqref{eq:difficulty_measure}.  In both cases, all time derivatives are taken with respect to the dynamics in equation~\eqref{eq:perturbed_sys}.  We will start first with Corollary~\ref{cor:cont_existence_perturbed} which proves existence.
\begin{corollary}
\label{cor:cont_existence_perturbed}
Let the system dynamics be as in~\eqref{eq:perturbed_sys}, let Assumption~\ref{assump:continous_assumption} hold, and let $f$ and $g$ both be continuous in $d$.  The test synthesizer in~\eqref{eq:feedback_law} is guaranteed to have a solution $d \in \mathcal{D}$ $\forall~x \in \mathcal{X}$, \textit{i.e.}
\begin{equation}
    \forall~x \in \mathcal{X}~\exists~d \in \mathcal{D} \suchthat d = \testsynth(x).
\end{equation}
\end{corollary}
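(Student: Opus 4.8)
The plan is to re-run the case analysis from the proof of Theorem~\ref{thm:existence} essentially verbatim, after first checking that the two structural properties that proof relied upon---joint continuity of $\dot h^F$ and affineness of the barrier derivatives in $u$---survive the switch to the perturbed dynamics~\eqref{eq:perturbed_sys}. Under~\eqref{eq:perturbed_sys} the relevant derivative is
\begin{equation*}
\dot h^F(x,d,u) = \frac{\partial h^F}{\partial x}(x,d)\bigl[f(x,d) + g(x,d)u + Cd\bigr],
\end{equation*}
so the first thing I would establish is that this is continuous in $(x,d,u)$ jointly: $\partial h^F/\partial x$ is continuous because $h^F \in C^1(\mathcal{X}\times\mathcal{D})$ by Assumption~\ref{assump:continous_assumption}, the added hypothesis makes $f$ and $g$ continuous in $d$, and the term $Cd$ is linear hence continuous. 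With this in hand, existence of the minimum satisfaction value $m$ in~\eqref{eq:min_satisfaction} follows exactly as before by applying the Extreme Value Theorem over the compact product $\mathcal{X}\times\mathcal{D}\times\mathcal{U}$.

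Second, I would observe that $\dot h^F$ is still affine in $u$---only the $g(x,d)u$ summand carries $u$, and it does so linearly---and that the same is true of each $\dot h^G_j$, so the feasible input space $\mathcal{U}(x,d)$ in~\eqref{eq:feasible_input} remains a (possibly empty) polytope cut out of the compact polytope $\mathcal{U}$. Consequently the partition set $\Gamma(x)$ in~\eqref{eq:gamma_set} is still well defined and the filter $\mathcal{F}$ in~\eqref{eq:filtering} behaves identically. In Case~1 ($\Gamma(x)\neq\varnothing$) the filter returns the constant $m$ for every $d\in\Gamma(x)$, the bound $M(x,d)\ge m$ holds for all $d\in\mathcal{D}$ by the definition of $m$, and the bound is attained on $\Gamma(x)$; this argument is insensitive to the precise form of the dynamics and transfers without change.

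Third, for Case~2 ($\Gamma(x)=\varnothing$) I would reuse the linear-programming reformulation. Because $\dot h^F$ is affine in $u$, the inner maximization is an LP $\max_u c(d)^T u$ subject to $u\in\mathcal{U}$ and $u\in\mathcal{U}(x,d)$, exactly as in~\eqref{eq:LP_inner}; the cost vector $c(d)=g(x,d)^T\,(\partial h^F/\partial x)(x,d)^T$ now encodes the perturbed dynamics but is still a genuine---and, by the same hypothesis, continuous---function of $d$, and the matrices $A$, $C(d)$ and right-hand sides $b$, $k(d)$ retain their roles. The LP-duality step, the uniform boundedness of the dual multipliers $\lambda^*(d),\mu^*(d)$ (which again follows from feasibility of the inner LP for every $d$ since $\Gamma(x)=\varnothing$), and the final appeal to Lemma~\ref{lem:minimax_solution} over the compact set $\mathcal{D}\times[0,M_\lambda]\times[0,M_\mu]$ against $\mathcal{U}$ all go through, yielding a minimizer $d=\testsynth(x)$.

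The only genuinely new content---and hence the step I would treat most carefully---is verifying the joint continuity of $\dot h^F$ and $\dot h^G_j$ under~\eqref{eq:perturbed_sys}; this is exactly what the extra hypothesis that $f$ and $g$ are continuous in $d$ buys us, since it is what keeps both the Extreme Value Theorem argument for $m$ and the continuity requirement of Lemma~\ref{lem:minimax_solution} valid. Everything else is a structural check that the affine-in-$u$ form is preserved, which holds because the perturbation $f(x,d)+Cd$ enters additively and $g(x,d)u$ remains linear in $u$. Once these are confirmed the two cases close precisely as in Theorem~\ref{thm:existence}, and because $x\in\mathcal{X}$ was arbitrary the conclusion holds for all $x$.
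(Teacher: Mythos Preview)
Your proposal is correct and follows essentially the same approach as the paper: a case analysis on $\Gamma(x)$ with Case~1 carrying over verbatim and Case~2 requiring the continuity-in-$d$ hypothesis to justify the final appeal to Lemma~\ref{lem:minimax_solution}. The paper's own proof is terser but makes the identical points---that the inner maximization remains an LP, the chain through~\eqref{eq:compact_minimax} survives, and the added continuity assumption on $f$ and $g$ is exactly what restores the continuity needed for Lemma~\ref{lem:minimax_solution}.
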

\begin{proof}
The proof for this corollary follows in the footsteps of the proof for Theorem~\ref{thm:existence} in a similar case-by-case fashion.  We can partition the feasible test-space $\mathcal{D}$ with $\Gamma(x)$ as defined prior in equation~\eqref{eq:gamma_set}, and set up the same two cases as prior.  In the first case, $\Gamma(x) \neq \varnothing$, changing the system dynamics does not change the outcome.  The optimal solution $\testsynth(x) \in \Gamma(x)$.  However, the latter case where $\Gamma(x) = \varnothing$ does change slightly.  The interior maximization problem is still a Linear Program, and the entire chain of logic until equation~\eqref{eq:compact_minimax} still holds.  However, to proceed with the last step and use Lemma~\ref{lem:minimax_solution} to complete the proof, we need to guarantee that the matrix multiplication in equation~\eqref{eq:compact_minimax} is continuous in $d, \lambda, \mu$ and $u$.  Continuity in $\lambda, \mu,$ and $u$ is assured via linearity in those terms.  Finally, continuity in $d$ is assured via the assumptions of continuity in the statement of Corollary~\ref{cor:cont_existence_perturbed}.  As a result, we can use Lemma~\ref{lem:minimax_solution}, thus completing the proof.
\end{proof}

\noindent Next, Corollary~\ref{cor:cont_optimality_perturbed} will prove the optimal difficulty of the tests generated via our synthesizer in the perturbed setting.

\begin{corollary}
\label{cor:cont_optimality_perturbed}
Let the system dynamics be as in~\eqref{eq:perturbed_sys}, let Assumption~\ref{assump:continous_assumption} hold, and let $f$ and $g$ all be continuous in $d$.  The test synthesizer in~\eqref{eq:feedback_law} minimizes the difficulty measure $M$ in equation~\eqref{eq:difficulty_measure}, \textit{i.e.},
\begin{equation}
    \testsynth(x) = \argmin_{d \in \mathcal{D}}~M(x,d).
\end{equation}
\end{corollary}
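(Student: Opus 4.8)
The plan is to mirror exactly the way Corollary~\ref{corr:optimality} was obtained from Theorem~\ref{thm:existence} and Lemma~\ref{lem:continuous_valid_difficulty_measure}, substituting each unperturbed ingredient with its perturbed-dynamics counterpart. Two facts suffice: (i) the synthesizer~\eqref{eq:feedback_law} admits a minimizer $\testsynth(x)$ for every $x \in \mathcal{X}$, and (ii) the map $M$ in~\eqref{eq:difficulty_measure} — now with every time derivative taken along~\eqref{eq:perturbed_sys} — remains a valid difficulty measure in the sense of Definition~\ref{def:difficulty_metric}. Fact (i) is precisely Corollary~\ref{cor:cont_existence_perturbed}, so the entire burden reduces to establishing (ii). I would first note that the identity $\testsynth(x) = \argmin_{d \in \mathcal{D}} M(x,d)$ holds by construction, since $M(x,d)$ is literally the inner maximization appearing in~\eqref{eq:feedback_law}; the genuine content of the corollary is therefore that this particular $M$ satisfies Definition~\ref{def:difficulty_metric}.

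To establish (ii) I would re-run the proof of Lemma~\ref{lem:continuous_valid_difficulty_measure} essentially verbatim, making the single change that the minimizer whose existence it invokes is now supplied by Corollary~\ref{cor:cont_existence_perturbed} rather than Theorem~\ref{thm:existence}. The argument partitions $\mathcal{D}$ through the set $\Gamma(x)$ of~\eqref{eq:gamma_set}. In the case $\Gamma(x) \neq \varnothing$, Corollary~\ref{cor:cont_existence_perturbed} places the minimizer inside $\Gamma(x)$, so $\mathcal{U}(x,\testsynth(x)) = \varnothing$ and the first clause of Definition~\ref{def:difficulty_metric} is met. In the case $\Gamma(x) = \varnothing$, the filtering function $\mathcal{F}$ in~\eqref{eq:filtering} collapses the objective at the minimizer to $\max_{u \in \mathcal{U}(x,d)} \dot h^F$, and since the action discriminator $v$ in~\eqref{eq:cont_act_discrim} differs from $\dot h^F$ only by the constant $\tau$, the argmin is unchanged, giving $\testsynth(x) = \argmin_{d \in \mathcal{D}} \max_{u \in \mathcal{U}(x,d)} v(x,d,u)$ and hence the second clause. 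Combining this with Corollary~\ref{cor:cont_existence_perturbed} then yields the claim, exactly as Corollary~\ref{corr:optimality} followed from its two antecedents.

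The one point that genuinely requires care — and the step I expect to be the only real obstacle — is verifying that the transported Lemma proof nowhere secretly exploits the unperturbed dynamics, so that quoting it wholesale is legitimate. Here I would emphasize that the definitions of $\mathcal{U}(x,d)$, $v$, $\Gamma(x)$, $\mathcal{F}$, and $M$ are structurally identical across the two settings; passing to~\eqref{eq:perturbed_sys} alters only the expressions $\dot h^F$ and $\dot h^G_j$ that feed those definitions, not the set-theoretic case analysis on emptiness of $\mathcal{U}(x,d)$. In particular, the minimum satisfaction value $m$ of~\eqref{eq:min_satisfaction} still exists by the Extreme Value Theorem, because under the hypotheses that $f$ and $g$ are continuous in $d$ with $h^F \in C^1(\mathcal{X} \times \mathcal{D})$, the perturbed $\dot h^F(x,d,u)$ remains continuous in $(x,d,u)$ over the compact feasible set. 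Since the remainder of the Lemma's reasoning is purely combinatorial once existence of the minimizer is granted, I anticipate this verification to be a short remark rather than a substantive difficulty, and the corollary then follows immediately.
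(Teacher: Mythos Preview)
Your proposal is correct and follows essentially the same approach as the paper: the paper's proof is the single sentence ``This is a consequence of Corollary~\ref{cor:cont_existence_perturbed},'' and your plan simply unpacks that consequence by re-running the Lemma~\ref{lem:continuous_valid_difficulty_measure} argument with the perturbed dynamics and invoking Corollary~\ref{cor:cont_existence_perturbed} for existence. Your more careful verification that nothing in the transported lemma depends on the unperturbed form of $\dot h^F$ or $\dot h^G_j$ is a welcome elaboration the paper leaves implicit.
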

\begin{proof}
This is a consequence of Corollary~\ref{cor:cont_existence_perturbed}.
\end{proof} 


\subsection{Examples}
\label{sec:cont_examples}
In this section we will illustrate our main results through examples extending Examples~\ref{ex:turtle} and~\ref{ex:turtle_2}.  For completeness, we will state the system dynamics as follows:
\begin{equation}
\label{eq:turtle_dynamics}
\begin{gathered}
    x = \begin{bmatrix}
    x \\
    y \\
    \theta
    \end{bmatrix},~
    \dot x =
    \underbrace{\begin{bmatrix}
    \cos{\theta} & 0 \\
    \sin{\theta} & 0 \\
    0 & 1
    \end{bmatrix}}_{g(x)} u, ~
    x \in [-1,1]^2 \times [0, 2\pi],
    \\
    u = [u_1,u_2]^T \in [-0.2, 0.2] \times [-1,1].
\end{gathered}
\end{equation}
Equation~\eqref{eq:turtle_dynamics} implies that $\mathcal{X} = [0,1]^2 \times [0, 2\pi]$ and $\mathcal{U} = [-0.2,0.2] \times [-1,1]$, and both satisfy the conditions for Assumption~\ref{assump:continous_assumption}.  To generate a minimax test synthesizer of the form in equation~\eqref{eq:feedback_law} we require a system specification and associated control barrier functions.  For our example, our specification,
\begin{equation}
    \label{eq:example_spec}
    \begin{gathered}
        \psi = \F \mu \wedge_{j \in \mathcal{J}} \G \omega_j, \\
        \llbracket \mu \rrbracket = \left\{x\in\mathcal{X}~|~\|Px - g\| \leq 0.25 \right\}, \\ \llbracket \omega_j \rrbracket = \left\{x \in \mathcal{X} ~|~ \|Px - o_j\| \geq 0.175 \right\}.
    \end{gathered}
\end{equation}
\begin{figure*}[t]
    \centering
    \includegraphics[width = 0.99\textwidth]{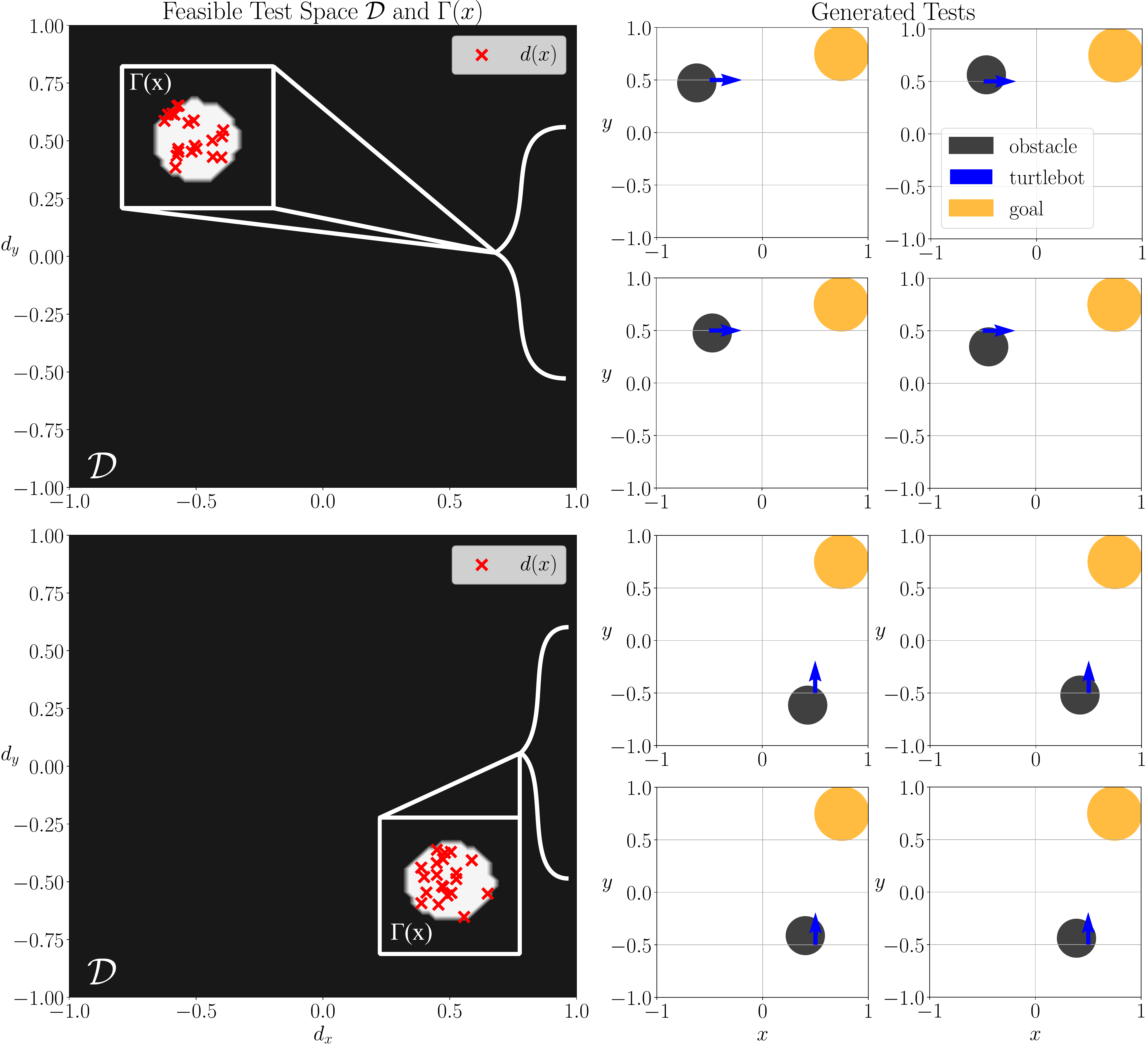}
    \caption{Example obstacle placements produced by our test-synthesis procedure defined in equation~\eqref{eq:cont_ex_test_synthesizer} for the unicycle example in Section~\ref{sec:cont_examples}.
    (Top Left) Feasible Test Space $\mathcal{D}$ with the partitioning offered by $\Gamma(x)$ for $x = [-0.5,0.5,\pi/4]^T$.  $\Gamma(x)$ is defined in equation~\eqref{eq:gamma_set}.  Notice how twenty different solutions to the same minimax problem~\eqref{eq:cont_ex_test_synthesizer} all yield a test parameter vector $d \in \Gamma(x)$ as theorized in the proof for Theorem~\ref{thm:existence}.  (Top Right) Four specific tests generated by the same test-synthesizer in equation~\eqref{eq:cont_ex_test_synthesizer} for the same system state.  (Bottom) Figures showing the same information as above, except the system state has changed to $x = [0.5,-0.5,\pi/2]^T$.}
    \label{fig:cont_example_solutions}
    \vspace{-0.2 in}
\end{figure*}
\indent Here, $g \in \mathbb{R}^2$ denotes the center of a goal region the system is to enter, $o_j \in \mathbb{R}^2$ denotes the center of an obstacle the system is to stay away from, and $P$ projects the system state onto the $x-y$ plane.  Then, the control barrier functions for $\psi$ are,
\begin{equation}
    \label{eq:example_cbfs}
    \begin{gathered}
    h^F(x) = 0.25^2 - \|Px - g\|_2^2, \\
    h^G_j(x,o_j) = \|Px - o_j\|_2^2 - 0.175^2,\\
    \frac{\partial h^F}{\partial x} = -2P^T(Px-g),~\frac{\partial h^G_j}{\partial x} = 2P^T(Px-o_j).
    \end{gathered}
\end{equation}
Additionally, it is easily verifiable that both control barrier functions above satisfy the conditions for Assumptions~\ref{assump:cbf_stl} and~\ref{assump:continous_assumption}.  In this case, $h^F(x)\geq 0 \iff x \in \llbracket \mu \rrbracket$ and $h^G_j(x,o_j) \geq 0 \iff x \in \llbracket \omega_j \rrbracket$.  Finally, we need to formalize our test parameter vector $d$ and the space in which it lives.  In continuing with Example~\ref{ex:turtle_2}, we will assume the only perturbable objects in the environment are the center locations of our obstacles $o_j$.  We will assume the number of obstacles $N_o = |\mathcal{J}|$, resulting in the following test parameter vector:
\begin{equation}
    \label{eq:example_test_setup}
    d = [o_1^T, o_2^T, \dots]^T, \quad d \in \mathcal{D} = [-1,1]^{2N_o} \subset \mathbb{R}^{2N_o}.
\end{equation}

For this example setting, our minimax testing law $\testsynth: \mathcal{X} \to \mathcal{D}$, feasible input space $\mathcal{U}(x,d)$, and difficulty measure $M$ are as follows, with $\mathcal{J} = \{1\}$ as we will show examples with only one obstacle:
\begin{align}
    \label{eq:example_feasible_space}
    \mathcal{U}(x,d) & = \left\{u \in \mathcal{U}~\Bigg|~\frac{\partial h^G_1}{\partial x}^T u \geq -10 h^G_1(x,o_1)\right\}, \\
    M(x,d) & = \max_{u \in \mathcal{U}}~\mathcal{F}\left(\frac{\partial h^F}{\partial x}^T u, u , \mathcal{U}(x,d), -5\right), \label{eq:cont_ex_difficulty_measure} \\
    \testsynth(x) & = \argmin_{d \in \mathcal{D}}\max_{u \in \mathcal{U}}~\mathcal{F}\left(\frac{\partial h^F}{\partial x}^T u, u , \mathcal{U}(x,d), -5\right). \label{eq:cont_ex_test_synthesizer}
\end{align}
Finally, we note that all optimization problems to be solved in this section will utilize a variant on the algorithm described in~\cite{akella2021learning}.

\sectionspacing
\newidea{Figure Analysis:}  It is evident that the example autonomous agent setting described in equation~\eqref{eq:turtle_dynamics} with corresponding control barrier functions in equation~\eqref{eq:example_cbfs} and $d$ defined in equation~\eqref{eq:example_test_setup} satisfies the conditions for Theorem~\ref{thm:existence} and Corollary~\ref{corr:optimality}.  As a result, we would expect that our test synthesizer $\testsynth$ defined in equation~\eqref{eq:cont_ex_test_synthesizer} should always produce a realizable test of system behavior $\forall~x \in \mathcal{X}$.  We also expect this generated test to minimize the difficulty measure $M$ defined in equation~\eqref{eq:cont_ex_difficulty_measure}.  Figure~\ref{fig:difficulty_measure} shows this result for three separate system states $x = [-0.5,0.5,0]^T$, $x=[0.5,-0.5,0]^T$, and $x=[0,0,0]^T$.  Specifically, notice how the red "x" indicating the solution to the test synthesizer~\eqref{eq:cont_ex_test_synthesizer} always lies within the minimizing region of $M(x,d)$ at that state $x$ - the white region in each graph.  Indeed, over $1000$ randomized runs where the initial state $x$ is perturbed uniformly over the state-space $\mathcal{X}$ in equation~\eqref{eq:turtle_dynamics}, the test synthesizer finds a test parameter $d$ such that $M(x,d) = -5$ - its minimum value.
\indent Figure~\ref{fig:cont_example_solutions} goes a step farther and shows $8$ example tests generated by our test synthesizer.  As shown in the left hand side figures in Figure~\ref{fig:cont_example_solutions}, each of the twenty solutions to the test synthesizer in~\eqref{eq:cont_ex_test_synthesizer} lie in the partitioning set $\Gamma(x)$ within the feasible test space $\mathcal{D}$.  Solutions are shown via red "x"-es and $\Gamma(x)$ via the white regions in both left-hand side figures.  This phenomenon of the solutions lying within the non-empty set $\Gamma(x)$ is expected via the proof for Theorem~\ref{thm:existence} as $\Gamma(x) \neq \varnothing$.  Specifically, the reason $\Gamma(x) \neq \varnothing$ is that we have not constrained against the obstacle lying on top of the agent to be tested.  Due to the agent's limited actuation capacity, placing the obstacle on top of the agent results in an infeasible control barrier function condition, rendering $\mathcal{U}(x,d) = \varnothing$.  This fact is corroborated through the $8$ example tests shown on the right-hand side.  In each of these tests, the obstacle lies atop the agent which is located at the base of the arrow and is heading in the direction the arrow indicates.  Ideally, we would like to constrain against such trivial solutions in our test-synthesis framework, and we do so in Section~\ref{sec:extensions}.

\section{Discrete-Time Test Generation}
\label{sec:discrete}
Similar to the prior section, this section will state and prove the latter half of our main results - the development of an adversarial, time-varying test-synthesis procedure for discrete-time control systems of the form in~\eqref{eq:discrete_system} subject to timed reach-avoid specifications $\psi$ as in~\eqref{eq:spec}.  As before, we will briefly describe the overarching methodology behind our approach, state the developed minimax problem for test synthesis, and end with two, similar theorems to the continuous case.

\sectionspacing
\newidea{Overarching Idea:}  As in the continuous case, Assumption~\ref{assump:cbf_stl} lets us express satisfaction of the reach-avoid specification $\psi$~\eqref{eq:spec} via control barrier functions and their $0$-superlevel sets.  Specifically, $\forall~k \in \mathbb{Z}_+$ and $k_{\max} = \min\{k\in\mathbb{Z}_+~|~t_{\max} \leq k \Delta t\}$ for some $\Delta t \geq 0$, the discrete state trajectory $x_k$ $\forall~k\in\mathbb{Z}_+$ satifies $\psi$ at $k=0$, \textit{i.e.} $(x,0) \models \psi$ if and only if:
\begin{equation}
    x_k \in \cap_{j \in \mathcal{J}}~\mathcal{C}_{h^G_j}~\mathrm{and~}\exists~k \in \{0,1,\dots,k_{\max}\} \suchthat x_k \in \mathcal{C}_{h^F}.
\end{equation}
\noindent As in the continuous setting, we will also make the same assumption on the starting, system state as expressed in equation~\eqref{eq:continuous_cbf_start}:
\begin{equation}
    x_0 \in \cap_{j \in \mathcal{J}}~\mathcal{C}_{h^G_j}~\mathrm{and~}x_0 \not \in \mathcal{C}_{h^F}.
\end{equation}
\noindent Here however lies a difference.  If we naively used the same control barrier function decrement conditions in Definition~\ref{def:discrete_cbf} to identify inequalities to constrain a minimax problem for test generation, the resulting inner maximization problem would not be concave.  As a result, we would not be able to use Theorem~\ref{thm:existence} and its proof to gain any insight into this scenario.  However, the reason we required concavity of the inner maximization problem was that concavity guaranteed a solution - a feasible control input for a given test parameter vector $d$.  To facilitate the provision of similar guarantees, we will make the following assumption.
\begin{assumption}
\label{assump:finite_spaces}
The input space $\mathcal{U}$ for the discrete-time system~\eqref{eq:discrete_system} and the feasible test space $\mathcal{D}$ are finite, \textit{i.e.} $|\mathcal{U}| < \infty$ and $|\mathcal{D}| < \infty$.
\end{assumption}
\noindent For context, this assumption is easily satisfied by any system described by a finite-action Markov Decision Process with the space of feasible tests corresponding to edges that can be turned on/off.

In this setting, we can still define a space of feasible inputs $\mathcal{U}(x,d)$ and an action discriminator $v$ as we did for the continuous setting.
\begin{align}
    \label{eq:disc_feasible_input}
    \mathcal{U}(x,d) & = \left\{u \in \mathcal {U}~\bigg|~\forall~j\in\mathcal{J}~h^G_j(f(x,u),d) \geq 0 \right\}, \\
    \label{eq:disc_act_discrim}
    v(x,d,u) & = h^F(f(x,u),d) - h^F(x,d) - \tau,~\tau > 0.
\end{align}
In effect then, our proposed test synthesizer will be very similar to its continuous-time counterpart.  Specifically, we will still have an outer minimization problem over a space of feasible tests.  Additionally, the goal is to minimize the maximum possible increment in a control barrier function subject to the enduring positivity of multiple other control barrier functions.  Keeping these parallels in mind, the statement and main results for our proposed, adversarial, time-varying discrete-time test-synthesis procedure will follow.

\subsection{Statement of Discrete-Time Results}
\label{sec:disc_main_contribution}
In order to provide a parallel to the continuous setting, we will first define a difference function for the incremental change in a control barrier function after an action has been taken.
\begin{equation}
    \label{eq:difference_func}
    \Delta h(x,u,d) = h\left(f(x,u), d \right) - h(x,d).
\end{equation}
Then, our proposed test-generation method is as follows:
\begin{align}
    \label{eq:disc_feedback_law}
    \testsynth(x) & = \argmin_{d \in \mathcal{D}}\max_{u \in \mathcal{U}}~\mathcal{F}\left(\Delta h^F(x,u,d), u, \mathcal{U}(x,d),m\right).
\end{align}
Here, $\mathcal{F}$ is defined in equation~\eqref{eq:filter_when_empty}.  In the discrete setting, the definition of $m$ changes slightly and will be reproduced here:
\begin{equation}
    \label{eq:discrete_min_satisfaction}
    m \leq \min_{u \in \mathcal{U},~x \in \mathcal{X},~d \in \mathcal{D}}~\Delta h^F\left(x,u,d\right).
\end{equation}
As before, we have not formally stated whether such an $m$ exists.  However, we will prove its existence in the proofs to follow.  Intuitively though, $m$ is defined as the lower bound to a finite series of finite optimization problems, each of which is guaranteed to have a solution.  Therefore, so too is $m$ guaranteed to exist.
As before, this results in our second theorem which states that minimax problem~\eqref{eq:disc_feedback_law} is guaranteed to have a solution $\forall~x\in\mathcal{X}$.
\begin{theorem}
\label{thm:discrete_existence}
Let Assumption~\ref{assump:finite_spaces} hold.  The test synthesizer in~\eqref{eq:disc_feedback_law} is guaranteed to have a solution $d \in \mathcal{D}$ for every $x\in\mathcal{X}$:
\begin{equation}
    \mathrm{\textit{i.e.,~}}\forall~x\in\mathcal{X}~\exists~d\in\mathcal{D}\suchthat d=\testsynth(x).
\end{equation}
\end{theorem}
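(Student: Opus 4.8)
The plan is to exploit the finiteness imposed by Assumption~\ref{assump:finite_spaces}, which collapses the delicate minimax machinery of the continuous case (Lemma~\ref{lem:minimax_solution} together with the linear-programming duality argument in the proof of Theorem~\ref{thm:existence}) into an elementary statement about extrema of functions over finite sets. The first task is to establish that the constant $m$ in~\eqref{eq:discrete_min_satisfaction} is well-defined and finite, so that the filtering function $\mathcal{F}$ appearing in~\eqref{eq:disc_feedback_law} always returns a real number. Because $\mathcal{U}$ and $\mathcal{D}$ are finite, the minimization defining $m$ decomposes into finitely many subproblems $\min_{x \in \mathcal{X}} \Delta h^F(x,u,d)$ indexed by the pairs $(u,d) \in \mathcal{U} \times \mathcal{D}$. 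In the Markov-decision-process setting motivating Assumption~\ref{assump:finite_spaces} the state space is itself finite, so each subproblem is a minimum over a finite set; more generally, compactness of $\mathcal{X}$ together with continuity of $h^F$ and the dynamics $f$ (carried over from Assumption~\ref{assump:continous_assumption}) lets the Extreme Value Theorem supply each minimizer. Taking $m$ to be the least of these finitely many values makes $m$ a finite global lower bound, which is exactly the ``finite series of finite optimization problems'' remark preceding the statement.

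With $m$ fixed, I would fix an arbitrary $x \in \mathcal{X}$ and argue directly. For each $d \in \mathcal{D}$, the inner objective $u \mapsto \mathcal{F}\!\left(\Delta h^F(x,u,d), u, \mathcal{U}(x,d), m\right)$ is a real-valued function on the finite set $\mathcal{U}$: for every $u$ it returns either $\Delta h^F(x,u,d) \in \mathbb{R}$ when $u \in \mathcal{U}(x,d)$, or $m \in \mathbb{R}$ otherwise. Since a real-valued function on a nonempty finite set always attains its maximum, the inner maximization is well-posed and its value $M(x,d) \triangleq \max_{u \in \mathcal{U}} \mathcal{F}(\Delta h^F(x,u,d), u, \mathcal{U}(x,d), m)$ is a finite real number for every $d$. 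Crucially, this holds regardless of whether $\mathcal{U}(x,d)$ is empty: if $\mathcal{U}(x,d) = \varnothing$ then $\mathcal{F}$ evaluates to $m$ for all $u \in \mathcal{U}$ and the maximum is simply $m$. This is precisely why, unlike the continuous proof, no case split on $\Gamma(x)$ is required here.

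Finally, I would observe that $M(x,\cdot)$ is itself a real-valued function on the finite set $\mathcal{D}$, hence attains its minimum; thus there exists $d^\star \in \mathcal{D}$ with $d^\star = \testsynth(x)$ as defined in~\eqref{eq:disc_feedback_law}. Since $x$ was arbitrary, the conclusion holds for all $x \in \mathcal{X}$, completing the argument. The only step demanding genuine care is the existence of $m$: the solvability of the minimax problem at a fixed state is immediate from finiteness, but the well-definedness of $\mathcal{F}$ hinges on $m$ being a finite lower bound uniform over $\mathcal{X} \times \mathcal{U} \times \mathcal{D}$, which is where either finiteness of $\mathcal{X}$ or compactness plus continuity of $h^F \circ f$ must be invoked. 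Everything else reduces to the fact that extrema over finite, nonempty sets are always attained.
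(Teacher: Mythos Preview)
Your proposal is correct and rests on the same core idea as the paper---extrema of real-valued functions over nonempty finite sets are always attained---but your presentation is more streamlined. The paper's proof retains the case split on $\bar{\Gamma}(x) = \{d \in \mathcal{D} \mid \mathcal{U}(x,d) = \varnothing\}$, mirroring the structure of the continuous-time argument: when $\bar{\Gamma}(x) \neq \varnothing$ it shows any $d \in \bar{\Gamma}(x)$ achieves the minimum value $m$, and when $\bar{\Gamma}(x) = \varnothing$ it applies Lemma~\ref{lem:finite_optimality} twice (once for the inner max, once for the outer min). You instead observe that the filter $\mathcal{F}$ already handles both cases uniformly, so a single ``finite set $\Rightarrow$ attained extremum'' argument suffices without partitioning $\mathcal{D}$. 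Your route is shorter and arguably cleaner; the paper's route has the advantage of making explicit that $\testsynth(x) \in \bar{\Gamma}(x)$ whenever that set is nonempty, which is reused verbatim in the proof of Lemma~\ref{lem:discrete_valid_difficulty_measure}. You are also more careful than the paper about the existence of $m$: the paper asserts it follows from ``a finite series of finite optimization problems'' but never spells out the needed hypothesis on $\mathcal{X}$, whereas you correctly flag that one must invoke either finiteness of $\mathcal{X}$ (the MDP setting) or compactness plus continuity of $h^F \circ f$---note, though, that Assumption~\ref{assump:continous_assumption} is formally stated only for the continuous-time system, so strictly speaking the discrete analogue is an implicit standing assumption rather than something ``carried over.''
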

\noindent Additionally, we can also define a very similar difficulty measure $\bar M$ as to its continuous counterpart $M$ as in equation~\eqref{eq:difficulty_measure}:
\begin{equation}
\bar M(x,d) = \max_{u \in \mathcal{U}}~\mathcal{F}\left(\Delta h^F(x,u,d), u, \mathcal{U}(x,d),m\right). \label{eq:discrete_difficulty_measure}
\end{equation}
As before, we need to prove that our proposed difficulty measure satisfies Definition~\ref{def:difficulty_metric}.  The following lemma expresses this statement.
\begin{lemma}
\label{lem:discrete_valid_difficulty_measure}
$\bar M$ as defined in equation~\eqref{eq:discrete_difficulty_measure} is a valid difficulty measure as per Definition~\ref{def:difficulty_metric} with feasible input space $\mathcal{U}(x,d)$ as per equation~\eqref{eq:disc_feasible_input} and action discriminator $v$ as per equation~\eqref{eq:disc_act_discrim}.
\end{lemma}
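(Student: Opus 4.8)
The plan is to mirror the case-by-case argument that establishes Lemma~\ref{lem:continuous_valid_difficulty_measure}, partitioning on whether the no-feasible-input set $\Gamma(x)$ from~\eqref{eq:gamma_set} is empty. Theorem~\ref{thm:discrete_existence} already guarantees that a minimizer $\testsynth(x)$ of $\bar M(x,\cdot)$ exists for every $x \in \mathcal{X}$, so the only outstanding task is to show that this minimizer satisfies one of the two conditions listed in Definition~\ref{def:difficulty_metric}. I would open the proof by recalling this, reducing the lemma to a verification that the minimizer lands in the ``right'' place in each case.

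First I would treat Case~1, $\Gamma(x) \neq \varnothing$. For any $d \in \Gamma(x)$ we have $\mathcal{U}(x,d) = \varnothing$, so by the definition of $\mathcal{F}$ the inner maximization in~\eqref{eq:discrete_difficulty_measure} returns the constant $m$ for every admissible $u$, i.e. $\bar M(x,d) = m$. For any $d \notin \Gamma(x)$ the feasible set is nonempty and $\bar M(x,d) = \max_{u \in \mathcal{U}(x,d)} \Delta h^F(x,u,d) \geq m$ by the defining inequality~\eqref{eq:discrete_min_satisfaction}. Hence the minimum value of $\bar M(x,\cdot)$ equals $m$ and is attained precisely on $\Gamma(x)$, so any minimizer $\testsynth(x)$ must lie in $\Gamma(x)$ and therefore yields $\mathcal{U}(x,\testsynth(x)) = \varnothing$, which is the first condition of Definition~\ref{def:difficulty_metric}.

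Next I would treat Case~2, $\Gamma(x) = \varnothing$, so that $\mathcal{U}(x,d) \neq \varnothing$ for every $d \in \mathcal{D}$. For such $d$, evaluating $\mathcal{F}$ over all $u \in \mathcal{U}$ returns $\Delta h^F(x,u,d)$ on the feasible inputs and the constant $m$ elsewhere; since the feasible maximum dominates $m$ by the bound~\eqref{eq:discrete_min_satisfaction}, we obtain $\bar M(x,d) = \max_{u \in \mathcal{U}(x,d)} \Delta h^F(x,u,d)$. Comparing with the action discriminator~\eqref{eq:disc_act_discrim}, the quantity $v(x,d,u) = \Delta h^F(x,u,d) - \tau$ differs from $\Delta h^F$ only by the state- and test-independent constant $\tau$, so subtracting it leaves the minimizing argument unchanged; thus $\testsynth(x) = \argmin_{d \in \mathcal{D}} \max_{u \in \mathcal{U}(x,d)} v(x,d,u)$, which is exactly the second condition of Definition~\ref{def:difficulty_metric}. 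Since $x$ was arbitrary, the conclusion then holds for all $x \in \mathcal{X}$.

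The argument introduces essentially no new difficulty beyond the continuous case, because Assumption~\ref{assump:finite_spaces} collapses the analytical machinery of Theorem~\ref{thm:existence} (LP duality, compactness, and Fan's minimax theorem) into a finite search that Theorem~\ref{thm:discrete_existence} already handles. The one point warranting care is the existence of a finite $m$ satisfying~\eqref{eq:discrete_min_satisfaction}, since both case arguments invoke the bound $\bar M(x,d) \geq m$; I expect this to be discharged inside the proof of Theorem~\ref{thm:discrete_existence} by observing that $\Delta h^F$ is evaluated over the finite sets $\mathcal{U}$ and $\mathcal{D}$, reducing the minimization to a finite family of problems each attaining its infimum, from which a finite lower bound $m$ is immediate. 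Given that, the two cases above follow directly, completing the verification that $\bar M$ is a valid difficulty measure.
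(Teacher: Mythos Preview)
Your proposal is correct and follows essentially the same case-by-case argument as the paper's proof, which partitions on whether the set of tests yielding an empty feasible input space is nonempty and then invokes the proof of Theorem~\ref{thm:discrete_existence} in each case. The only cosmetic slip is that you cite $\Gamma(x)$ from~\eqref{eq:gamma_set} (the continuous version) rather than the discrete analogue $\bar\Gamma(x)$ introduced in~\eqref{eq:bar_gamma}, but the intended object is clear and the argument is otherwise the paper's own.
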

\noindent Finally, with respect to this difficulty measure $\bar M$ we have another corollary regarding the optimal difficulty of the generated tests:
\begin{corollary}
\label{corr:discrete_optimality}
Let Assumption~\ref{assump:finite_spaces} hold.  The test synthesizer in~\eqref{eq:disc_feedback_law} minimizes the difficulty measure $\bar M$ in~\eqref{eq:discrete_difficulty_measure} over all $d \in \mathcal{D}$, \textit{i.e.}
\begin{equation}
    \testsynth(x) = \argmin_{d\in\mathcal{D}}~\bar M(x,d).
\end{equation}
\end{corollary}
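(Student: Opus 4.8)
The plan is to treat this corollary exactly as its continuous-time counterpart Corollary~\ref{corr:optimality} was treated, deriving it as an immediate consequence of Theorem~\ref{thm:discrete_existence} and Lemma~\ref{lem:discrete_valid_difficulty_measure}. The observation I would lead with is purely definitional: the difficulty measure $\bar M(x,d)$ in~\eqref{eq:discrete_difficulty_measure} is, term for term, identical to the inner maximization appearing inside the definition of the test synthesizer $\testsynth$ in~\eqref{eq:disc_feedback_law}. Hence substituting $\bar M$ into~\eqref{eq:disc_feedback_law} yields $\testsynth(x) = \argmin_{d \in \mathcal{D}}~\bar M(x,d)$ by construction, with nothing left to verify beyond well-posedness of the $\argmin$.

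Next I would record the two facts that make this $\argmin$ meaningful and the identity non-vacuous. First, Theorem~\ref{thm:discrete_existence} guarantees, for every $x \in \mathcal{X}$, the existence of a minimizer $d \in \mathcal{D}$ of the outer minimization in~\eqref{eq:disc_feedback_law}, so the $\argmin$ is non-empty and the asserted equality is not vacuous; the existence of the constant $m$ in~\eqref{eq:discrete_min_satisfaction} is itself supplied by that theorem's proof, where Assumption~\ref{assump:finite_spaces} reduces~\eqref{eq:discrete_min_satisfaction} to a lower bound over a finite family of optimizations. Second, Lemma~\ref{lem:discrete_valid_difficulty_measure} certifies that $\bar M$ satisfies Definition~\ref{def:difficulty_metric}, i.e.\ that it genuinely qualifies as a difficulty measure with feasible input space~\eqref{eq:disc_feasible_input} and action discriminator~\eqref{eq:disc_act_discrim}. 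Chaining these two statements delivers precisely the claimed equality $\testsynth(x) = \argmin_{d\in\mathcal{D}}~\bar M(x,d)$.

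I do not anticipate any substantive obstacle: the corollary is a bookkeeping consequence once the theorem and the lemma are in hand, mirroring the one-line proof of Corollary~\ref{corr:optimality}. The single point I would state explicitly, to preempt confusion, is that the minimizer furnished by Theorem~\ref{thm:discrete_existence} and the minimizer of $\bar M$ coincide \emph{by construction} rather than through any separate optimality comparison, so there is no need to independently argue that $\testsynth(x)$ attains the minimum value of $\bar M$.
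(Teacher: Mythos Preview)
Your proposal is correct and matches the paper's own proof, which is the single line ``This is a consequence of Theorem~\ref{thm:discrete_existence} and Lemma~\ref{lem:discrete_valid_difficulty_measure}.'' You have simply unpacked that dependence in more detail, but the route is identical.
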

\noindent As before, we will prove these statements in the next section.

\subsection{Proof of Discrete-Time Results}
\label{sec:disc_proofs}
Similar to Section~\ref{sec:cont_corollaries}, before stating the proofs of both main results in the discrete-time setting, we will first state a useful Lemma.
\begin{lemma}
\label{lem:finite_optimality}
For any non-empty, finite set $A \subset \mathbb{R}, ~\exists~m,M \in \mathbb{R}~\mathrm{s.t.}$,
\begin{equation}
    m \leq a \leq M,~\forall~a \in A.
\end{equation}
\end{lemma}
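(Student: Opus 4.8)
The plan is to prove this elementary fact by induction on the cardinality $N = |A|$, which is a well-defined nonnegative integer since $A$ is finite, and at least $1$ since $A$ is non-empty. The underlying observation is that the claim is equivalent to asserting that a non-empty finite set of reals attains a minimum and a maximum; I would then simply take $m$ to be that minimum and $M$ to be that maximum, so that $m \leq a \leq M$ holds for every $a \in A$ by construction.

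For the base case $N = 1$, I would write $A = \{a_1\}$ and set $m = M = a_1$, for which the inequality is immediate. For the inductive step, assuming the result for all sets of size $N$, I would take a set $A$ of size $N+1$, remove an arbitrary element $a^*$ to obtain a size-$N$ subset $A' = A \setminus \{a^*\}$, apply the inductive hypothesis to get bounds $m', M'$ for $A'$, and then define $m = \min\{m', a^*\}$ and $M = \max\{M', a^*\}$. Each of these is a comparison of just two real numbers, hence well-defined, and by construction bounds every element of $A = A' \cup \{a^*\}$.

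There is no real obstacle here; the only point worth flagging is the essential use of finiteness. Finiteness is what makes the induction terminate after finitely many steps and guarantees that $m$ and $M$ are genuine attained values in $\mathbb{R}$, rather than an infimum or supremum that might escape to $\pm\infty$ or fail to be achieved. This is precisely the property that will be invoked through Assumption~\ref{assump:finite_spaces} when the lemma is later applied to establish the existence of the constant $m$ in equation~\eqref{eq:discrete_min_satisfaction}.
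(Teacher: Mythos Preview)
Your induction on $|A|$ is correct and entirely standard; the base case and the inductive step via $m = \min\{m', a^*\}$, $M = \max\{M', a^*\}$ both go through without issue, and your remark about finiteness being essential is apt.

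By way of comparison, the paper simply states this lemma without proof and moves directly to the proof of Theorem~\ref{thm:discrete_existence}, treating the boundedness of a non-empty finite subset of $\mathbb{R}$ as a self-evident fact. So there is no ``paper's own proof'' to compare against; your argument fills in a detail the authors chose to omit.
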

\noindent The proof of Theorem~\ref{thm:discrete_existence} will follow.

\begin{proof}
This proof amounts to two separate uses of Lemma~\ref{lem:finite_optimality} and will follow a similar partitioning analysis as in the continuous setting.  We will group the cases for expediency. Specifically, we can define a barred-Gamma set similar to its counterpart $\Gamma(x)$ as defined in equation~\eqref{eq:gamma_set} and with $\mathcal{U}(x,d)$ the feasible input set~\eqref{eq:disc_feasible_input}:
\begin{equation}
    \label{eq:bar_gamma}
    \bar{\Gamma}(x) = \{d \in \mathcal{D}~|~\mathcal{U}(x,d) = \varnothing \}.
\end{equation}
We will also use the notation offered by $\bar M$ in equation~\eqref{eq:discrete_difficulty_measure} to denote the value of the inner maximization problem in equation~\eqref{eq:disc_feedback_law}.
Then, in the discrete-setting we can rewrite minimax problem~\eqref{eq:disc_feedback_law} with $\bar M (x,d)$ as follows, based on the definition of $\mathcal{F}$ in equation~\eqref{eq:filter_when_empty}:
\begin{equation}
    \label{eq:resolved_discrete_opt}
    \testsynth(x) = \argmin_{d \in \mathcal{D}}~
    \begin{cases}
        \bar M(x,d) & \mbox{if}~d \not \in \bar{\Gamma}(x) \\
        m & \mbox{else}.
    \end{cases}
\end{equation}
Then, the two cases can be resolved simultaneously.  In the event that $\bar{\Gamma}(x) = \varnothing$, the above optimization problem collapses to a minimization of $\bar M(x,d)$.  Each $\bar M(x,d)$ is guaranteed to exist via Lemma~\ref{lem:finite_optimality}, and as a result, a solution to the larger optimization problem is guaranteed to exist via Lemma~\ref{lem:finite_optimality} as the space of all tests $\mathcal{D}$ is finite.  In the event that $\bar{\Gamma}(x) \neq \varnothing$, then any choice of $d \in \bar{\Gamma}(x)$ yields $m \leq \bar M(x,d')~\forall~d' \in \mathcal{D} \cap \bar{\Gamma}(x)^C$.  As such the choice of $d \in \bar{\Gamma}(x)$ solves the above optimization problem.  This holds $\forall~x \in \mathcal{X}$, thus concluding the proof.
\end{proof}

Likewise, the proof for Lemma~\ref{lem:discrete_valid_difficulty_measure} will follow.

\begin{proof}
We will follow a case-by-case analysis with the cases offered by $\bar{\Gamma}(x)$ defined in equation~\eqref{eq:bar_gamma}.  In either case however, as per Theorem~\ref{thm:discrete_existence}, we know that there exists a $d \in \mathcal{D}$ $\forall~x \in \mathcal{X}$ that solves minimax problem~\eqref{eq:disc_feedback_law}.  By definition of the proposed difficulty measure $\bar M$ in equation~\eqref{eq:discrete_difficulty_measure}, so to do these solutions also minimize $\bar M$.  To be clear in the remainder of this proof, we will call these solutions $d^*$ mimicking the notation used in Definition~\ref{def:difficulty_metric}.  Then, in the event that $\bar \Gamma(x) \neq \varnothing$, by the proof for Theorem~\ref{thm:discrete_existence} we know that $d^* \in \bar \Gamma(x)$.  As a result, $\mathcal{U}(x,d^*) = \varnothing$ by definition of $\bar \Gamma(x)$ in equation~\eqref{eq:bar_gamma}.  Therefore, $\bar M$ satisfies the first condition for being a difficulty measure.  In the second case, $\bar \Gamma(x) = \varnothing$ and by the proof for Theorem~\ref{thm:discrete_existence} and definition of the action discriminator $v$ in equation~\eqref{eq:disc_act_discrim} we have the following:
\begin{align}
    d^* & = \argmin_{d \in \mathcal{D}}~\max_{u \in \mathcal{U}(x,d)}~\Delta h^F(x,u,d), \\
    & = \argmin_{d \in \mathcal{D}}~\max_{u \in \mathcal{U}(x,d)}~v(x,d,u) + \tau.
\end{align}
Therefore, in the case where $\bar \Gamma(x) = \varnothing$, $\bar M$ satisfies the second condition to be a difficulty measure as per Definition~\ref{def:difficulty_metric}.  
\end{proof}

The proof of Corollary~\ref{corr:discrete_optimality} then stems from Theorem~\ref{thm:discrete_existence} and Lemma~\ref{lem:discrete_valid_difficulty_measure}.

\begin{proof}
This is a consequence of Theorem~\ref{thm:discrete_existence} and Lemma~\ref{lem:discrete_valid_difficulty_measure}.
\end{proof}

\subsection{Corollaries - Predictive Test Synthesis}
\label{sec:disc_extensions}
As mentioned, the discrete setting also permits us to predict future system states as well.  We will show that generating tests in this predictive framework amounts to a simple change in notation, with the majority of the prior section's analysis carrying over.  To start, we will assume an arbitrary $N$-step horizon for predictive test synthesis.  To do so requires a few definitions.  The first will provide a notational simplification for arbitrary, finite $N$-step horizon state predictions.
\begin{equation}
    \label{eq:discrete_predictive_dynamics}
    \begin{gathered}
    \mathbf{u} = [u_1, u_2, \dots, u_N], \quad \mathbf{f}(x,\mathbf{u},2) = f(f(x,u_1),u_2), \\
    x^N_{\mathbf{u}} = \mathbf{f}(x,\mathbf{u},N).
    \end{gathered}
\end{equation}
The next set of definitions formalizes the set of feasible input sequences in this predictive setting and identifies an action discriminator satisfying Definition~\ref{def:heading}.  Here, we note that $\mathbf{u}$ and $x^N_{\mathbf{u}}$ are as defined in equation~\eqref{eq:discrete_predictive_dynamics}, and $\mathcal{U}^N = \mathcal{U} \times \mathcal{U} \dots$ $N$ times.
\begin{align}
    \label{eq:disc_pred_feas_input}
    \hspace{-0.075 in}\mathcal{U}^N(x,d) & = \left\{ \mathbf{u} \in \mathcal{U}^N~\Bigg|~\forall~j\in\mathcal{J},~ h^G_j\left(x^N_{\mathbf{u}},d\right) \geq 0\right\}, \\
    \label{eq:predictive_difference}
    \hspace{-0.075 in} \Delta^N h(x,\mathbf{u},d) & = h\left(x^N_{\mathbf{u}}, d\right) - h(x,d), \\
    \label{eq:disc_pred_act_discrim}
    \hspace{-0.075 in} v(x,d,u ) & = \Delta^N h^F(x,\mathbf{u},d) - \tau.
\end{align}
With these terms, we can propose a predictive test-synthesis procedure that is similar to its one-step counterpart in~\eqref{eq:disc_feedback_law}.  Our proposed test-synthesis procedure and difficulty measure $\Tilde M$ are as follows:
\begin{align}
    \label{eq:disc_feedback_law_predictive}
    \testsynth(x) & = \argmin_{d \in \mathcal{D}}\max_{\mathbf{u} \in \mathcal{U}^N}~\xi^N(x,\mathbf{u},d), \\
    \Tilde M^N(x,d) & = \max_{\mathbf{u} \in \mathcal{U}^N}~\xi^N(x,\mathbf{u},d) \label{eq:discrete_difficulty_measure_predictive}, \\
    \xi^N(x,\mathbf{u},d) & = \mathcal{F}\left(\Delta^N h^F(x,\mathbf{u},d),\mathbf{u}, \mathcal{U}^N(x,d), m\right).
\end{align}
As prior, we define $m$ as follows:
\begin{equation}
    m \leq \min_{\mathbf{u} \in \mathcal{U}^N,~x \in \mathcal{X},~d\in\mathcal{D}}~\Delta^N h^F(x,\mathbf{u},d).
\end{equation}

With these definitions and equations, the next corollary formally states that the test-synthesis procedure in equation~\eqref{eq:disc_feedback_law_predictive} is guaranteed to produce realizable tests of system behavior.
\begin{corollary}
\label{cor:disc_predictive_existence}
Let Assumption~\ref{assump:finite_spaces} hold.  The test synthesizer in~\eqref{eq:disc_feedback_law_predictive} is guaranteed to have a solution $d \in \mathcal{D}~\forall~x \in \mathcal{X}$, \textit{i.e.},
\begin{equation}
    \forall~x \in \mathcal{X},~\exists~d \in \mathcal{D} \suchthat d = \testsynth(x).
\end{equation}
\end{corollary}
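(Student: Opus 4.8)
The plan is to mirror the proof of Theorem~\ref{thm:discrete_existence} almost verbatim, since the predictive synthesizer in~\eqref{eq:disc_feedback_law_predictive} shares exactly the same combinatorial structure as its one-step counterpart in~\eqref{eq:disc_feedback_law}. The observation that makes this transcription legitimate is that the $N$-step input space $\mathcal{U}^N = \mathcal{U} \times \dots \times \mathcal{U}$ is finite whenever $\mathcal{U}$ is, because $|\mathcal{U}^N| = |\mathcal{U}|^N < \infty$ under Assumption~\ref{assump:finite_spaces}. Consequently every optimization over $\mathbf{u} \in \mathcal{U}^N$ ranges over a finite set, so the boundedness guarantee of Lemma~\ref{lem:finite_optimality} applies at each stage just as it did for scalar inputs.

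First I would define the predictive analog of the infeasible-test set from~\eqref{eq:bar_gamma}, namely $\bar{\Gamma}^N(x) = \{ d \in \mathcal{D} \mid \mathcal{U}^N(x,d) = \varnothing \}$, using the predictive feasible-input set $\mathcal{U}^N(x,d)$ from~\eqref{eq:disc_pred_feas_input}. Next I would establish existence of the constant $m$ by the same reasoning used in Theorem~\ref{thm:discrete_existence}: for each fixed state, the quantity $\Delta^N h^F$ ranges over a finite collection of reals as $\mathbf{u}$ and $d$ vary over the finite sets $\mathcal{U}^N$ and $\mathcal{D}$, so Lemma~\ref{lem:finite_optimality} supplies the required lower bound and $m$ is taken as the lower bound of this finite family of finite optimization problems.

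I would then handle the two cases simultaneously, exactly as in the one-step proof, by rewriting~\eqref{eq:disc_feedback_law_predictive} through the definition of $\mathcal{F}$ as a minimization over $d \in \mathcal{D}$ of a piecewise quantity equal to $\tilde M^N(x,d)$ when $d \notin \bar{\Gamma}^N(x)$ and to $m$ otherwise. When $\bar{\Gamma}^N(x) = \varnothing$, the problem collapses to minimizing $\tilde M^N(x,d)$; each inner value $\tilde M^N(x,d)$ exists by Lemma~\ref{lem:finite_optimality} (a maximization over the finite nonempty set $\mathcal{U}^N(x,d)$), and the outer minimization over the finite set $\mathcal{D}$ then admits a minimizer, again by Lemma~\ref{lem:finite_optimality}. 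When $\bar{\Gamma}^N(x) \neq \varnothing$, any $d \in \bar{\Gamma}^N(x)$ attains the value $m$, which lower-bounds $\tilde M^N(x,d')$ for every $d' \notin \bar{\Gamma}^N(x)$, so such a $d$ solves the problem. Since $x$ is arbitrary, this yields a solution for all $x \in \mathcal{X}$.

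I do not anticipate a genuine obstacle here, as the entire argument is a finiteness argument rather than a continuity/compactness one; the only point requiring a moment's care is confirming that the $N$-fold product $\mathcal{U}^N$ inherits finiteness from $\mathcal{U}$ so that Lemma~\ref{lem:finite_optimality} remains applicable at every stage. Because the horizon $N$ is fixed and finite, this is immediate, and the remainder is a direct transcription of the one-step proof with $u$, $\Delta h^F$, and $\bar{\Gamma}$ replaced by $\mathbf{u}$, $\Delta^N h^F$, and $\bar{\Gamma}^N$ respectively.
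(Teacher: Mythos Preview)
Your proposal is correct and rests on the same key observation as the paper: the $N$-fold product $\mathcal{U}^N$ is finite under Assumption~\ref{assump:finite_spaces}, so the one-step argument carries over unchanged. The paper's proof is slightly more compact in that it explicitly redefines $x_{k+1} = \mathbf{f}(x_k,\mathbf{u},N) = \tilde f(x_k,\mathbf{u})$ to collapse the predictive problem into a single-step system and then invokes Theorem~\ref{thm:discrete_existence} directly, rather than re-running the case analysis as you do, but the substance is identical.
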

\begin{proof}
The proof for this corollary follows directly in the footsteps of the proof for Theorem~\ref{thm:discrete_existence}. More aptly, for any choice of finite prediction horizon $N$, we can make the following redefinition.
\begin{equation}
    x_{k+1} = x^N_{\mathbf{u}} = \mathbf{f}(x_k,\mathbf{u},N) = \Tilde{f}(x_k,\mathbf{u}).
\end{equation}
This redefinition effectively constructs a new, single-step discrete-time system whose input space $\mathcal{U}^N$ is still finite.  The result then stems from the direct application of Theorem~\ref{thm:discrete_existence}.
\end{proof}

\noindent In a similar fashion, we can also prove that the proposed difficulty measure $\bar{M}^N$ is a valid difficulty measure as per Definition~\ref{def:difficulty_metric}.
\begin{lemma}
$\Tilde M^N$ as defined in equation~\eqref{eq:discrete_difficulty_measure_predictive} is a valid difficulty measure as per Definition~\ref{def:difficulty_metric} with feasible input space $\mathcal{U}^N(x,d)$ as per equation~\eqref{eq:disc_pred_feas_input} and action discriminator $v$ as per equation~\eqref{eq:disc_pred_act_discrim}.
\end{lemma}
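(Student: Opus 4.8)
The plan is to mirror the reduction already used in the proof of Corollary~\ref{cor:disc_predictive_existence}, collapsing the $N$-step predictive problem into a single-step discrete problem and then invoking Lemma~\ref{lem:discrete_valid_difficulty_measure} essentially verbatim. Concretely, I would reuse the redefinition $\Tilde{f}(x,\mathbf{u}) \triangleq \mathbf{f}(x,\mathbf{u},N) = x^N_{\mathbf{u}}$, which constructs a new single-step discrete-time control system whose input alphabet is $\mathcal{U}^N$. By Assumption~\ref{assump:finite_spaces} the set $\mathcal{U}$ is finite, and a finite $N$-fold Cartesian product of finite sets is finite, so $|\mathcal{U}^N| < \infty$ and the reduced system again satisfies Assumption~\ref{assump:finite_spaces}. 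In particular, Corollary~\ref{cor:disc_predictive_existence} already guarantees that a minimizer $d^* = \testsynth(x)$ of the predictive synthesizer exists for every $x \in \mathcal{X}$, so only the two conditions of Definition~\ref{def:difficulty_metric} remain to be checked.

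Next I would observe that, under this substitution, every object in the statement is literally its one-step counterpart for the reduced system. The predictive feasible input space $\mathcal{U}^N(x,d)$ in~\eqref{eq:disc_pred_feas_input} is exactly the set~\eqref{eq:disc_feasible_input} with $f$ replaced by $\Tilde{f}$; the predictive difference $\Delta^N h^F(x,\mathbf{u},d)$ in~\eqref{eq:predictive_difference} is the difference function~\eqref{eq:difference_func} for $\Tilde{f}$; the action discriminator~\eqref{eq:disc_pred_act_discrim} matches~\eqref{eq:disc_act_discrim}; and the measure $\Tilde M^N$ in~\eqref{eq:discrete_difficulty_measure_predictive}, together with the synthesizer~\eqref{eq:disc_feedback_law_predictive}, coincides with $\bar M$~\eqref{eq:discrete_difficulty_measure} and~\eqref{eq:disc_feedback_law}. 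With these identifications, the argument of Lemma~\ref{lem:discrete_valid_difficulty_measure} applies directly: partitioning $\mathcal{D}$ via the predictive set $\bar\Gamma^N(x) = \{d \in \mathcal{D} : \mathcal{U}^N(x,d) = \varnothing\}$, the minimizer $d^*$ either lies in $\bar\Gamma^N(x)$ — forcing $\mathcal{U}^N(x,d^*) = \varnothing$ and hence the first condition of Definition~\ref{def:difficulty_metric} — or else $\bar\Gamma^N(x) = \varnothing$, in which case $\Tilde M^N$ reduces to $\max_{\mathbf{u} \in \mathcal{U}^N(x,d)} \Delta^N h^F(x,\mathbf{u},d) = \max_{\mathbf{u} \in \mathcal{U}^N(x,d)} v(x,d,\mathbf{u}) + \tau$, so $d^*$ satisfies the second condition.

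The step requiring the most care is verifying that $\mathcal{U}^N(x,d)$ and $v$ remain a valid feasible-input space and action discriminator in the sense of Definitions~\ref{def:feasible_inputs} and~\ref{def:heading} once interpreted over the reduced dynamics $\Tilde{f}$. The subtlety is that~\eqref{eq:disc_pred_feas_input} constrains only the terminal predicted state $x^N_{\mathbf{u}}$ to satisfy $h^G_j \geq 0$, leaving the intermediate states $x^1_{\mathbf{u}}, \dots, x^{N-1}_{\mathbf{u}}$ unconstrained; this is consistent because the reduced trajectory advances in $N$-step increments and therefore only visits the terminal states, so safety is in fact enforced at every step of the reduced system and the $\wedge_{j \in \mathcal{J}} \G_\infty \omega_j$ clause holds along it. Similarly, positivity of $v$ forces $h^F$ to increase by at least $\tau$ per reduced step, so $h^F$ crosses zero within finitely many steps and the reach clause $\F_{[0,t_{\max}]}\mu$ is met. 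Once this identification is granted, the remainder is pure bookkeeping and the result follows as an immediate consequence of Lemma~\ref{lem:discrete_valid_difficulty_measure}.
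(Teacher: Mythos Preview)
Your proposal is correct and takes essentially the same approach as the paper: collapse the $N$-step problem to a one-step system via the redefinition $\Tilde f(x,\mathbf{u}) = \mathbf{f}(x,\mathbf{u},N)$, note that $\mathcal{U}^N$ remains finite, and invoke Lemma~\ref{lem:discrete_valid_difficulty_measure}. You supply considerably more detail than the paper's two-sentence proof---in particular the explicit identification of each predictive object with its one-step counterpart and the discussion of the terminal-state-only safety constraint---but the underlying argument is identical.
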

\begin{proof}
Following the same redefinition as in the proof for the prior Corollary, we find that our proposed difficulty measure $\bar{M}^N$ collapses to a one-step difficulty measure where the input space $\mathcal{U}^N$ is still finite.  The result then stems via application of Lemma~\ref{lem:discrete_valid_difficulty_measure}.
\end{proof}

\begin{figure}[t]
    \centering
    \hspace{-0.225 cm}\includegraphics[width =0.5\textwidth]{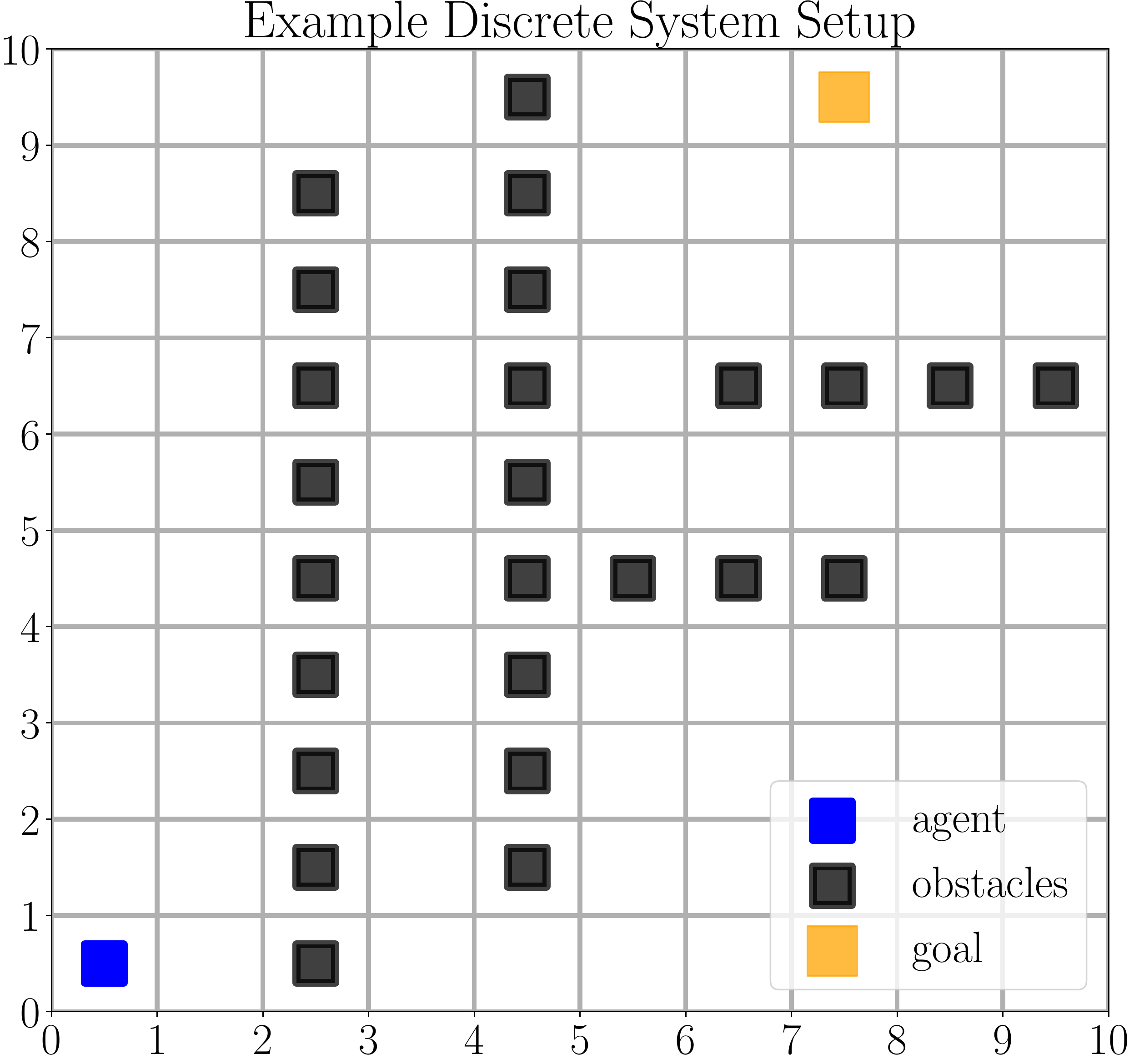}
    \caption{Example discrete time setting for Section~\ref{sec:disc_extensions}.}
    \label{fig:discrete_example}
    \vspace{-0.2 in}
\end{figure}

\noindent In a similar fashion, we can also formally state and prove that the tests generated by minimax problem~\eqref{eq:disc_feedback_law_predictive} are maximally difficult.
\begin{corollary}
\label{cor:disc_predictive_optimality}
Let Assumption~\ref{assump:finite_spaces} hold.  The test synthesizer in~\eqref{eq:disc_feedback_law_predictive} minimizes the difficulty measure $\Tilde M$ in~\eqref{eq:discrete_difficulty_measure_predictive} over all $d \in \mathcal{D}$, \textit{i.e.}
\begin{equation}
    \testsynth(x) = \argmin_{d \in \mathcal{D}}~\Tilde M^N(x,d).
\end{equation}
\end{corollary}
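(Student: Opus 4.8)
The plan is to recognize that this corollary is a direct consequence of the two results immediately preceding it, exactly as Corollary~\ref{corr:optimality} followed from Theorem~\ref{thm:existence} and Lemma~\ref{lem:continuous_valid_difficulty_measure}, and as Corollary~\ref{corr:discrete_optimality} followed from Theorem~\ref{thm:discrete_existence} and Lemma~\ref{lem:discrete_valid_difficulty_measure}. The key observation is that the difficulty measure $\tilde M^N$ in equation~\eqref{eq:discrete_difficulty_measure_predictive} is defined to be \emph{precisely} the inner maximization problem $\max_{\mathbf{u}\in\mathcal{U}^N}\xi^N(x,\mathbf{u},d)$ appearing in the synthesizer~\eqref{eq:disc_feedback_law_predictive}. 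Consequently, the equality $\testsynth(x) = \argmin_{d\in\mathcal{D}}\tilde M^N(x,d)$ is true essentially by construction, provided we can certify that the argmin is attained and that $\tilde M^N$ is a bona fide difficulty measure.

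First I would invoke Corollary~\ref{cor:disc_predictive_existence} to guarantee that for every $x\in\mathcal{X}$ there exists $d\in\mathcal{D}$ with $d=\testsynth(x)$; this ensures the outer minimization in~\eqref{eq:disc_feedback_law_predictive} actually attains its minimum, so the $\argmin$ is well-defined rather than merely an infimum. Second, I would invoke the preceding Lemma, which establishes that $\tilde M^N$ satisfies Definition~\ref{def:difficulty_metric} with the feasible input space $\mathcal{U}^N(x,d)$ from~\eqref{eq:disc_pred_feas_input} and the action discriminator $v$ from~\eqref{eq:disc_pred_act_discrim}. Substituting the definition of $\tilde M^N$ into the synthesizer then yields $\testsynth(x)=\argmin_{d\in\mathcal{D}}\max_{\mathbf{u}\in\mathcal{U}^N}\xi^N(x,\mathbf{u},d)=\argmin_{d\in\mathcal{D}}\tilde M^N(x,d)$, which is the claimed identity.

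I do not expect any genuine obstacle in this final step, since all of the substantive work has already been discharged upstream. In particular, both Corollary~\ref{cor:disc_predictive_existence} and the preceding Lemma are proven through the same redefinition device that collapses the $N$-step prediction $x^N_{\mathbf{u}}=\mathbf{f}(x,\mathbf{u},N)$ into a single-step discrete-time system $\tilde f(x_k,\mathbf{u})$ whose input space $\mathcal{U}^N$ remains finite under Assumption~\ref{assump:finite_spaces}, thereby reducing the predictive setting to the one-step setting of Theorem~\ref{thm:discrete_existence} and Lemma~\ref{lem:discrete_valid_difficulty_measure}. If anything warrants a line of care, it is merely noting that $\mathcal{U}^N$ is finite whenever $\mathcal{U}$ is finite, so that the finiteness hypothesis transfers cleanly; once this is observed, the corollary is immediate. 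Accordingly, the proof reduces to the single sentence that it is a consequence of Corollary~\ref{cor:disc_predictive_existence} and the preceding Lemma.
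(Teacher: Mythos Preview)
Your proposal is correct and matches the paper's approach: the paper's proof is the single sentence ``Again, this Corollary stems directly from Corollary~\ref{cor:disc_predictive_existence}.'' You are slightly more thorough in also invoking the preceding Lemma (mirroring the structure of Corollaries~\ref{corr:optimality} and~\ref{corr:discrete_optimality}), but the substance is identical---the identity holds by construction once existence of the argmin is secured.
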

\begin{proof}
Again, this Corollary stems directly from Corollary~\ref{cor:disc_predictive_existence}.
\end{proof}

\subsection{Examples}
\label{sec:disc_examples}
Figure~\ref{fig:discrete_example} provides a picture for our example in this section - placement of obstacles on a discrete grid to frustrate an agent's ability to reach its goal.  The agent is modeled as a discrete transition system:
\begin{equation}
    \label{eq:discrete_example_system}
    \begin{gathered}
        x_{k+1} = \underbrace{ 
        \begin{cases}
            x_k \pm [1,0] & \mbox{if}~ u_k = \mathrm{left}~(-)~\mathrm{or}~\mathrm{right}~(+) \vspace{0.1 in}\\ 
            x_k \pm [0,1] & \mbox{if}~ u_k = \mathrm{down}~(-) ~\mathrm{or}~ \mathrm{up}~(+) \\
            x_k & \mbox{if}~u_k = \mathrm{stay} ~\mathrm{or}~ \mathrm{action~infeasible}.
        \end{cases}}_{f(x_k,u_k)}\\
        x_k \in \{0,1,2,\dots,9\}^2 = \mathcal{X}, \\
        u_k \in \{\mathrm{left},~\mathrm{right},~\mathrm{up},~\mathrm{down},~\mathrm{stay}\} = \mathcal{U}
    \end{gathered}
\end{equation}

\noindent Our test parameter $d$ and specification $\psi$ are as follows, with $g = [g^0, g^1] \in \mathcal{X}$ the goal-cell:
\begin{gather}
     \label{eq:ex_disc_test_parameter}
    d = [d^0,d^1],~d \in \mathcal{D} \subseteq \mathcal{X}, \mathrm{and}~\psi = \F_{\infty} \mu \wedge \G_{\infty} \omega,\\
    \llbracket \mu \rrbracket = \left\{g\right\},~\llbracket \omega \rrbracket = \left\{x\in\mathcal{X}~\big|~ x \neq d \right\}.
\end{gather}

\begin{figure}[t]
    \centering
    \includegraphics[width = 0.49 \textwidth]{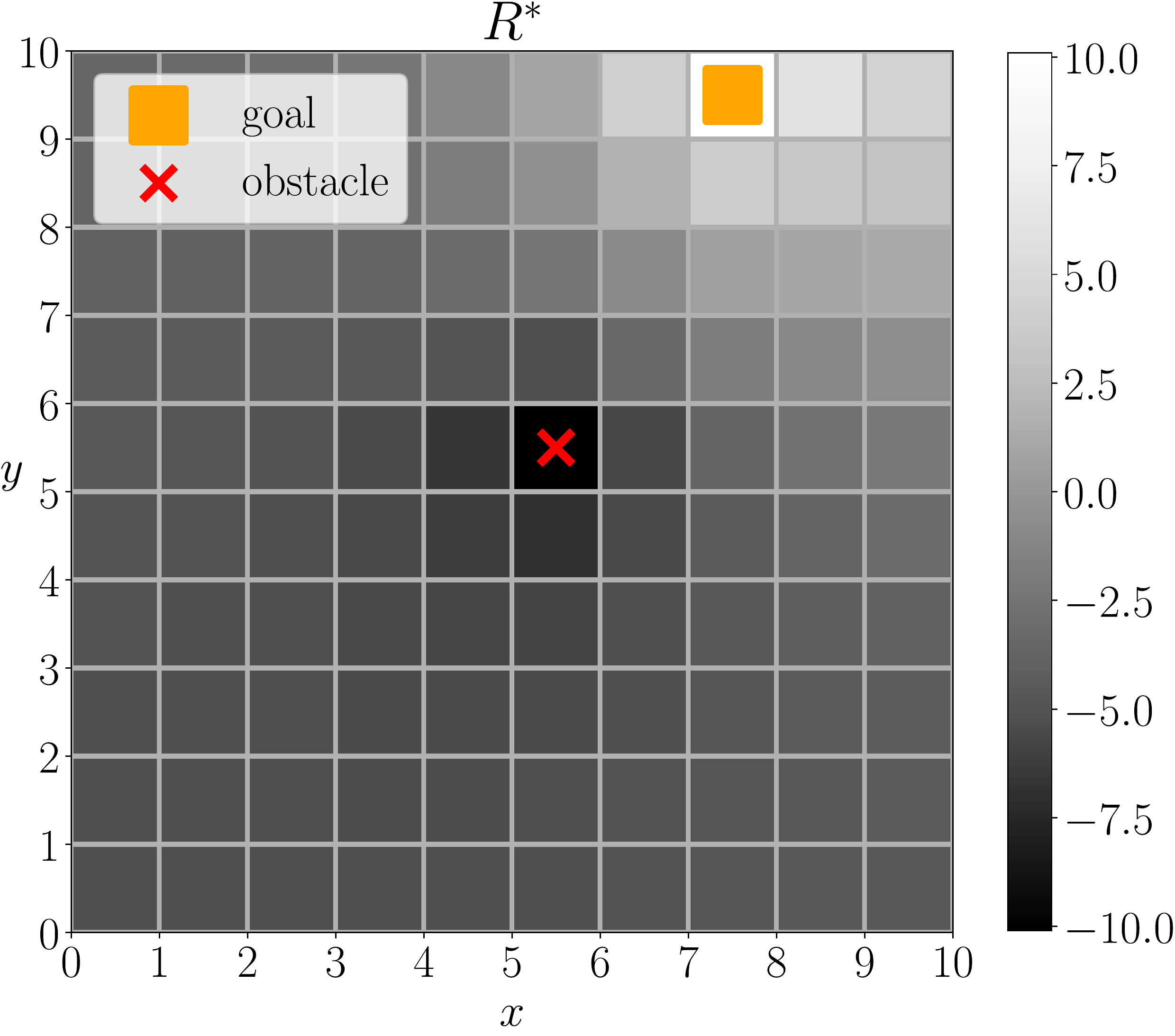}
    \caption{Depiction of $R^*$ utilized to generate example control barrier functions in Section~\ref{sec:disc_examples}, with an example obstacle placed at $o = [5,5]$ for depiction purposes.  $R^*$ is defined in equation~\eqref{eq:discrete_example_cbf}.}
    \label{fig:R^*}
    \vspace{-0.2 in}
\end{figure}

\begin{figure*}[t]
    \centering
    \includegraphics[width = 0.99\textwidth]{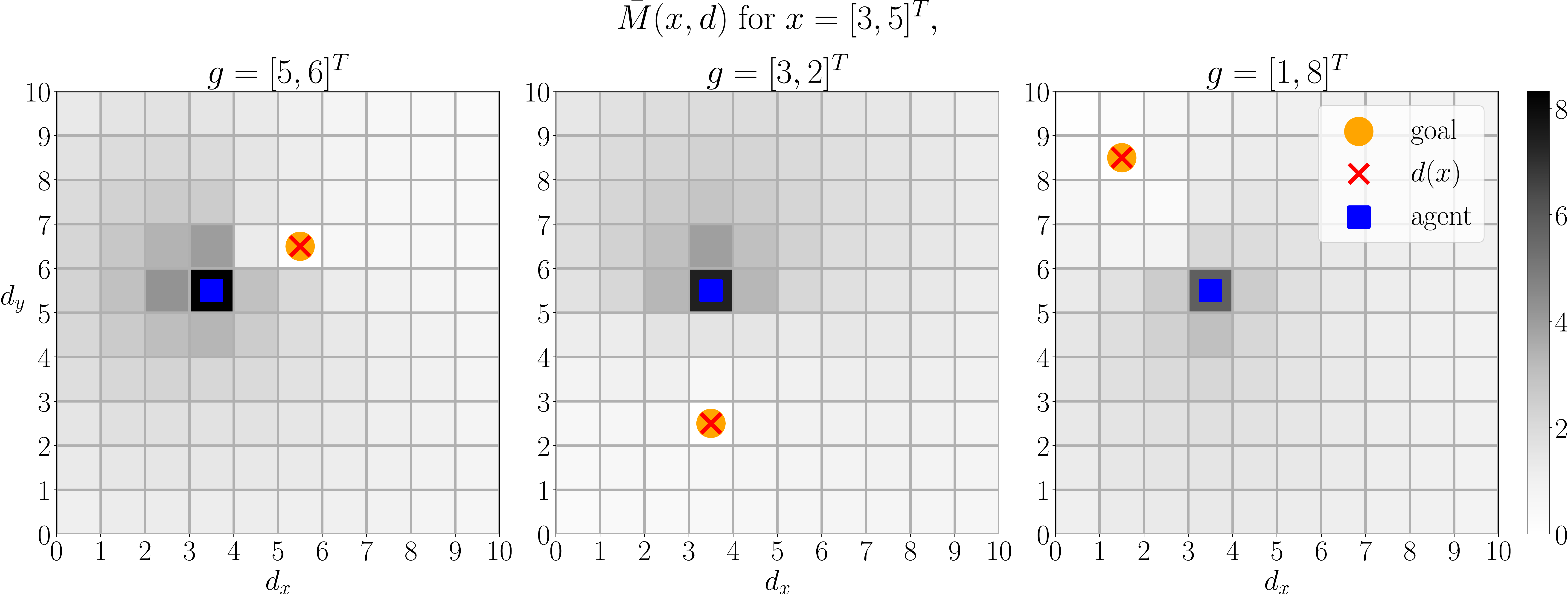}
    \caption{Shown above is a series of color plots indicating the difficulty measure $\bar M$ defined in equation~\eqref{eq:ex_disc_difficulty_measure} for the same system state $x = [3,5]$ and varying goal locations as shown.  The difficulty measure varies with goal location as it depends on $h^F$ which is defined in equation~\eqref{eq:discrete_example_cbf} and varies with the goal.  As can be seen in each of the three cases above however, the test synthesizer~\eqref{eq:ex_disc_feedback_law} accurately identifies a test that minimizes this difficulty measure - it places the obstacle shown via the red "x" over the goal.  The associated colorbar is to the right.}
    \label{fig:discrete_difficulty_measures}
    \vspace{-0.2 in}
\end{figure*}

To construct a variable discrete barrier function then, we will first define a reward matrix $R(d)$ inspired by the work done in~\cite{ahmadi2019safe}.
\begin{align}
    \label{eq:base_reward}
    R(d) = &  \argmax_{V \in \mathbb{R}^{10 \times 10}} & & \hspace{0 in}\mathbf{1}^TV \mathbf{1} \\
    &~\mathrm{subject~to} & & \hspace{0 in} V[g^0,g^1] = 10, \\
    & & & \hspace{0 in} V[d^0,d^1] = -10, \\
    & & & \hspace{0 in} V[x_k^0,x_k^1] = \sum_{u \in \mathcal{U}}~0.2 V[x_{k+1}^0, x_{k+1}^1], \\
    & & & \hspace{0 in} \forall~x_k \in \mathcal{X},~x_k=[x^0_k,x^1_k].
\end{align}
\noindent We note that the optimization problem in equation~\eqref{eq:base_reward} is almost always solvable. For more curious readers, please reference hitting times and absorption probabilities for Markov Chains in~\cite{norris1998markov}.  The only cases precluding a solution occur when the two sets overlap, \textit{i.e.} the goal overlaps with at least one obstacle, yielding an inconsistent feasible set.  As such, we will modify this reward matrix $R(d)$ to generate our barrier functions, \textit{i.e.} $R^*(d) \in \mathbb{R}^{10 \times 10}$, and
\begin{equation}
    \label{eq:discrete_example_cbf}
    \begin{gathered}
        R^*(d)[i,j] = \begin{cases}
            0 & \mbox{if}~\eqref{eq:base_reward}~\mathrm{is~infeasible}, \\
            10.1 & \mbox{if}~i=g^0,~j=g^1, \\
            -10.1 & \mbox{if}~i=d^0,~j=d^1,\\
            R(d)[i,j] &\mbox{else}.
        \end{cases}
        \\
        h^F(x,d) = R^*(d)[x^0,x^1]-10, \\
        h^G(x,d) = R^*(d)[x^0,x^1]+10.
    \end{gathered}
\end{equation}

Figure~\ref{fig:R^*} depicts our resulting $R^*(d)$ after this procedure, for an example case where $d=[5,5]$ and the goal $g = [7,9]$.  
Then, the specific versions of our feedback law and difficulty measure are as follows, with $d$ representing the grid cell location of our single obstacle:
\begin{align}
        \hspace{-0.075 in}\mathcal{U}(x,d) & = \{u \in \mathcal{U}~|~h^G(f(x,u),d) \geq 0\}, \label{eq:ex_disc_feasible_input} \\
        \hspace{-0.075 in}\bar M(x,d) & = \max_{u \in \mathcal{U}}~\mathcal{F}\left(\Delta h^F(x,u,d), u, \mathcal{U}(x,d), -15\right), \label{eq:ex_disc_difficulty_measure}\\
        \hspace{-0.075 in}\testsynth(x) & = \argmin_{d \in \mathcal{D}}\max_{u \in \mathcal{U}}~\mathcal{F}\left(\Delta h^F(x,u,d), u, \mathcal{U}(x,d), -15\right). \label{eq:ex_disc_feedback_law}
\end{align}

\newidea{Figure Analysis:}
Over $1000$ randomized trials wherein the system's initial state and goal are chosen randomly such that they don't overlap, the test synthesizer~\eqref{eq:ex_disc_feedback_law} satisfactorily identifies a test parameter $d$ for each case such that $d$ minimizes the difficulty measure $\bar M$ in~\eqref{eq:ex_disc_difficulty_measure} at that state $x$.  This should also be expected as the proofs in the discrete-time case mirror their continuous counterparts, and the continuous feedback law also exhibited a similar capacity to identify realizable and maximally difficult tests.  To that end, Figure~\ref{fig:discrete_difficulty_measures} portrays both the optimal obstacle setup overlaid on the difficulty measure contour map for the state/goal pair listed.  As can be seen in each of the three cases shown, the test synthesizer~\eqref{eq:ex_disc_feedback_law} accurately identifies an obstacle location that minimizes the corresponding difficulty measure - the color bar indicating evaluations of the difficulty measure is shown on the right-hand side.  Additionally, we can see that in each of the cases, the optimal obstacle location is to place the obstacle directly on top of the goal.  We expect this behavior to be the most difficult test, as in this scenario, there is nothing the system could ever do to reach its goal while simultaneously satisfying its safety specification.  In the following section, we will revisit this example and constrain against such tests to yield a more useful outcome.

\section{Extensions - Constrained Test Synthesis}
\label{sec:extensions}
In this section, we will extend our prior results on the guaranteed realizability and maximal difficulty of our test-synthesis procedure(s), when the space of feasible tests $\mathcal{D}$ is a function of time and/or the system state.  This lets us constrain against trivial test cases like placing obstacles directly on top of the agent/goal.  

\begin{figure*}[t]
    \centering
    \includegraphics[width = \textwidth]{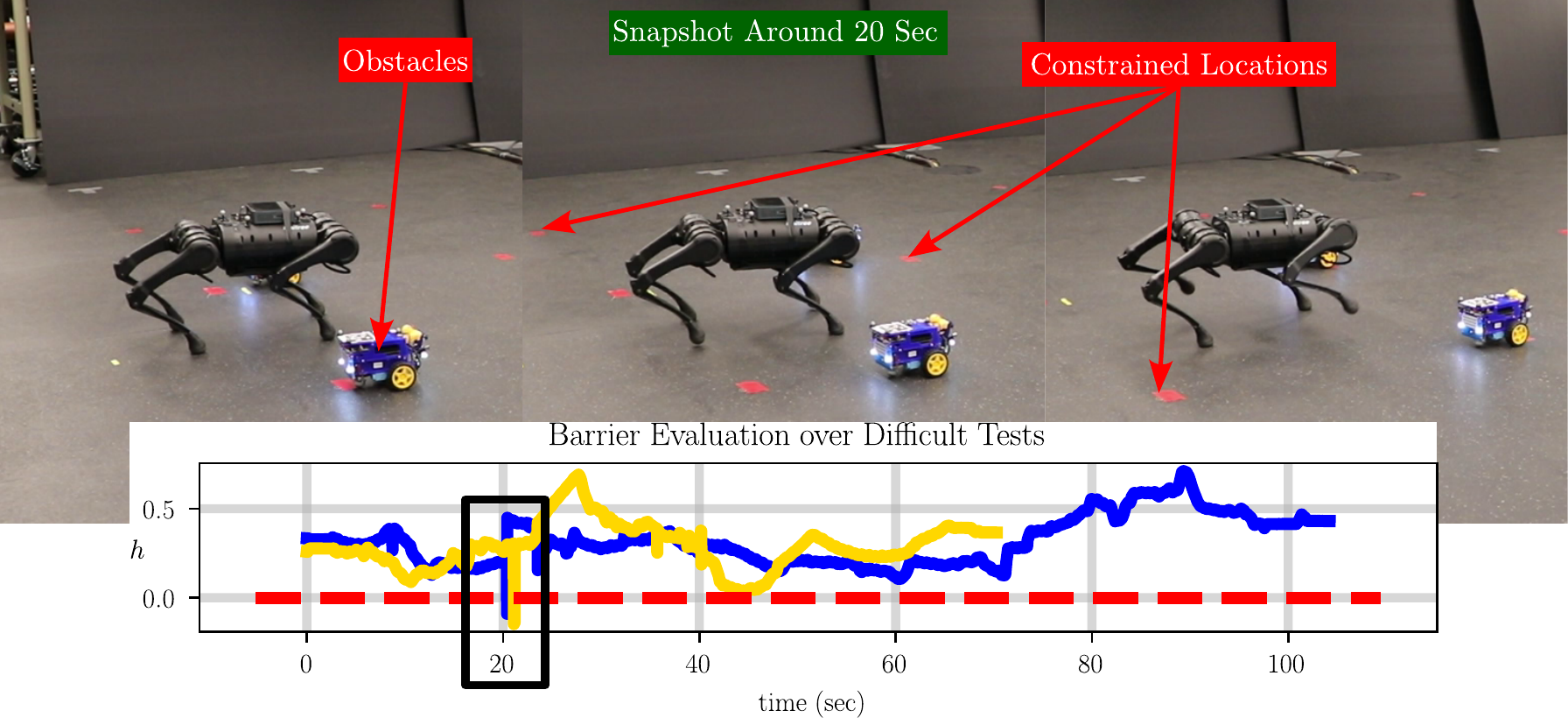}
    \caption{Depiction of the safety failure identified during implementation of the constrained testing procedure described in Section~\ref{sec:constrained_hardware}.  The goal is to determine moving obstacle locations defined at the vertices of a $5 \times 5$ meter grid, while a quadruped ambulates to a goal (off-screen).  Shown above are depictions of the trial whose time-series data for $h$~\eqref{eq:min_global_barrier} is shown in yellow.  Around $20$ seconds, we see the the quadruped (Left) start to try to cut corners between grid cells.  (Middle) This sends a signal to our test-synthesis procedure to ask the obstacle to move to cut off it's path.  (Right) The quadruped reacts to the moving obstacle, but slowly, causing the momentary lapse in safety as signified by the sharp spike in $h$ going negative.  The obstacle waypoints were chosen by solving~\eqref{eq:varying_continuous_feedback_law} with the barrier functions $h^F$ and $h^G_j$ and constrained test map $\mathbb{D}$ provided in equations~\eqref{eq:hardware_barriers} and~\eqref{eq:hardware_testmap} respectively.  Repeating this experiment from the same starting procedure once more, yielded similar behavior as signified by the blue trajectory in the bottom time-series data.}
    \label{fig:Experiment_information}
\end{figure*}

\subsection{Corollaries for Constrained Testing}
To start, we assume there exists a set-valued function that maps from the state space $\mathcal{X}$ and time $t$ to $\mathbb{R}^p$.
\begin{equation}
    \label{eq:test_space_generator}
    \mathbb{D}: \mathcal{X} \times \mathbb{R}_+ \to 2^{\mathbb{R}^p} \suchthat \mathbb{D}(x,t) = \mathcal{D} \subset \mathbb{R}^p.
\end{equation}
This results in the following change to the test synthesizer for the continuous setting in equation~\eqref{eq:feedback_law}:
\begin{gather}
    \label{eq:varying_continuous_feedback_law}
    \testsynth(x,t) = \argmin_{d \in \mathbb{D}(x,t)}\max_{u \in \mathcal{U}}~\mathcal{F}\left(\dot h^F(x,d,u), u, \mathcal{U}(x,d), m\right), \\
    m \leq \min_{u \in \mathcal{U},~x \in \mathcal{X},~d \in \mathbb{D}(x,t)}~\dot{h}^F(x,d,u).
\end{gather}
In the discrete setting, we will directly mention the change with respect to the predictive test-synthesis framework.  Our test-synthesis procedure is as follows, with $\Delta$ as defined in equation~\eqref{eq:predictive_difference}.
\begin{align}
    & \testsynth(x,t) = \argmin_{d \in \mathbb{D}(x,t)}~\max_{\mathbf{u} \in \mathcal{U}^N}~\xi(x,\mathbf{u},d), \label{eq:varying_discrete_feedback_law}\\
    & \xi(x,\mathbf{u},d) = \mathcal{F}\left(\Delta h^F\left(x, \mathbf{u},d\right), \mathbf{u}, \mathcal{U}^N(x,d),m\right), \\
    & m \leq \min_{\mathbf{u} \in \mathcal{U}^N, x \in \mathcal{X}, d \in \mathbb{D}(x,t)}~\Delta h^F\left(x,\mathbf{u},d\right).
\end{align}
As we have allowed the space of feasible tests to vary, we need to change the statements of Assumptions~\ref{assump:continous_assumption} and~\ref{assump:finite_spaces} to match.  The analog of Assumption~\ref{assump:continous_assumption} is as follows.
\begin{assumption}
\label{assump:continuous_constrained}
Each feasible test space $\mathcal{D} \in \mathcal{R}(\mathbb{D})$ is a compact set, the input space $\mathcal{U}$ for the continuous time system~\eqref{eq:nom_sys} is a closed, convex polytope in $\mathbb{R}^m$, and $h^F,h^G_j \in C^1(\mathcal{X} \times \mathcal{D})$.
\end{assumption}
\noindent Likewise, the analog of Assumption~\ref{assump:finite_spaces} is as follows.
\begin{assumption}
\label{assump:discrete_cosntrained}
The input space $\mathcal{U}$ for the discrete-time system~\eqref{eq:discrete_system} is finite, and each feasible test space $\mathcal{D} \in \mathcal{R}(\mathbb{D})$ is also finite.
\end{assumption}
With these two new feedback laws and assumptions, we can state and prove the following corollaries guaranteeing the realizability of the tests generated by each feedback law.
\begin{corollary}
\label{cor:existence_varying_continuous}
Let Assumption~\ref{assump:continuous_constrained} hold.  The test synthesizer in~\eqref{eq:varying_continuous_feedback_law} is guaranteed to have a solution $d \in \mathbb{D}(x,t)~\forall~x \in \mathcal{X},~t\in\mathbb{R}_+$, \textit{i.e.},
\begin{equation}
    \forall~x \in \mathcal{X},~t \in \mathbb{R}_+~\exists~d \in \mathbb{D}(x,t) \suchthat d = \testsynth(x,t).
\end{equation}
\end{corollary}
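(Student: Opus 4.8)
The plan is to reduce the constrained existence claim to a \emph{pointwise} application of Theorem~\ref{thm:existence}. First I would fix an arbitrary pair $(x,t) \in \mathcal{X} \times \mathbb{R}_+$ and observe that, by Assumption~\ref{assump:continuous_constrained}, the image $\mathbb{D}(x,t)$ is a fixed compact subset of $\mathbb{R}^p$. The crucial point is that existence of a minimizer is a claim about a single optimization instance, so the dependence of the set-valued map $\mathbb{D}$ on $x$ and $t$ can be ``frozen'': once $(x,t)$ is held fixed, the synthesizer in~\eqref{eq:varying_continuous_feedback_law} is literally the synthesizer in~\eqref{eq:feedback_law} with the constant test space $\mathcal{D}$ replaced by the constant compact set $\mathbb{D}(x,t)$. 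Since all hypotheses invoked in the proof of Theorem~\ref{thm:existence} --- compactness of the test space, $\mathcal{U}$ a bounded closed convex polytope, and $h^F,h^G_j \in C^1$ --- now hold verbatim for $\mathbb{D}(x,t)$, that proof applies essentially unchanged.

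Concretely, I would introduce the constrained analog of the partitioning set from~\eqref{eq:gamma_set}, namely $\Gamma(x,t) = \{\, d \in \mathbb{D}(x,t) \mid \mathcal{U}(x,d) = \varnothing \,\}$, and split into the same two cases. In the case $\Gamma(x,t) \neq \varnothing$, the filter $\mathcal{F}$ returns the constant $m$ on $\Gamma(x,t)$ by~\eqref{eq:filter_when_empty}; existence of a finite $m$ follows from the Extreme Value Theorem applied to the continuous map $\dot h^F$ over the compact set $\{(x,d) : d \in \mathbb{D}(x,t)\}$, and any $d \in \Gamma(x,t)$ attains the lower bound $m$, hence solves the outer minimization. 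In the case $\Gamma(x,t) = \varnothing$, the inner maximization is again a linear program in $u$; LP duality and the uniform multiplier bounds $M_\lambda, M_\mu$ (finite because $\mathcal{U}(x,d) \neq \varnothing$ for every $d \in \mathbb{D}(x,t)$) reduce the problem to a minimax over the product of two compact sets, so Lemma~\ref{lem:minimax_solution} guarantees a solution. Because $(x,t)$ was arbitrary, the conclusion holds for all $(x,t) \in \mathcal{X} \times \mathbb{R}_+$.

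The main obstacle --- really the only place where care is needed --- is justifying that no regularity of the set-valued map $\mathbb{D}$ itself is required. One might worry that, since $\mathbb{D}$ varies with both state and time, the bounds $m$, $M_\lambda$, and $M_\mu$ used in the proof of Theorem~\ref{thm:existence} must hold across all images simultaneously. I would emphasize that this is unnecessary: the existence guarantee is established one instance at a time, so it suffices that each individual image $\mathbb{D}(x,t)$ be compact, which is exactly what Assumption~\ref{assump:continuous_constrained} supplies. No continuity, measurability, or semicontinuity of $t \mapsto \mathbb{D}(x,t)$ enters the argument. A secondary technical point is that Lemma~\ref{lem:minimax_solution} requires \emph{compact} domains, so I would note that the closed convex polytope $\mathcal{U}$ is taken bounded (as in Assumption~\ref{assump:continous_assumption}), which furnishes the compactness needed to close the Case~2 argument.
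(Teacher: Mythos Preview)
Your proposal is correct and follows exactly the paper's approach: the paper's proof is the single sentence ``This is an application of Theorem~\ref{thm:existence} for each $\mathcal{D} \in \mathcal{R}(\mathbb{D})$ which is assumed to be compact via Assumption~\ref{assump:continuous_constrained},'' and your write-up simply unpacks what that application entails (freezing $(x,t)$, invoking compactness of $\mathbb{D}(x,t)$, and re-running the two-case argument). Your remark that no regularity of the set-valued map $\mathbb{D}$ is needed is a useful clarification the paper leaves implicit.
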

\begin{proof}
This is an application of Theorem~\ref{thm:existence} for each $\mathcal{D} \in \mathcal{R}(\mathbb{D})$ which is assumed to be compact via Assumption~\ref{assump:continuous_constrained}.
\end{proof}

\begin{corollary}
\label{cor:disc_varying_existence}
Let Assumption~\ref{assump:discrete_cosntrained} hold.  The test synthesizer in~\eqref{eq:varying_discrete_feedback_law} is guaranteed to have a solution $d \in \mathbb{D}(x,t)~\forall~x \in \mathcal{X},~t\in\mathbb{R}_+$, \textit{i.e.},
\begin{equation}
    \forall~x \in \mathcal{X},~t \in \mathbb{R}_+~\exists~d \in \mathbb{D}(x,t) \suchthat d = \testsynth(x,t).
\end{equation}
\end{corollary}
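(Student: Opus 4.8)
The plan is to mirror the one-line argument used for the continuous constrained case in Corollary~\ref{cor:existence_varying_continuous}, replacing the appeal to Theorem~\ref{thm:existence} with its discrete counterpart, Theorem~\ref{thm:discrete_existence}. The key observation is that the set-valued map $\mathbb{D}$ enters the synthesizer~\eqref{eq:varying_discrete_feedback_law} only by fixing, for each pair $(x,t)$, a single feasible test space $\mathcal{D} = \mathbb{D}(x,t)$ over which the outer $\argmin$ ranges. Thus the constrained problem is nothing more than a pointwise family of unconstrained discrete problems, one for each $(x,t)$, and existence can be verified state-by-state.

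First I would fix an arbitrary $x \in \mathcal{X}$ and $t \in \mathbb{R}_+$ and set $\mathcal{D} = \mathbb{D}(x,t)$. By Assumption~\ref{assump:discrete_cosntrained}, every feasible test space in the range $\mathcal{R}(\mathbb{D})$ is finite, so this particular $\mathcal{D} \in \mathcal{R}(\mathbb{D})$ is finite, and the input space $\mathcal{U}$ is finite as well. Because the synthesizer is stated in the predictive framework---the inner maximization ranges over $\mathbf{u} \in \mathcal{U}^N$ and uses the multi-step difference together with $\mathcal{U}^N(x,d)$---I would invoke the same redefinition used in the proof of Corollary~\ref{cor:disc_predictive_existence}, namely $\Tilde{f}(x,\mathbf{u}) = \mathbf{f}(x,\mathbf{u},N)$. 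This collapses the $N$-step problem into a single-step discrete system whose input set $\mathcal{U}^N$ is a finite product of finite sets, hence finite. In particular, the constant $m$ appearing in~\eqref{eq:varying_discrete_feedback_law} exists, since it lower bounds a finite collection of values of $\Delta h^F$ and Lemma~\ref{lem:finite_optimality} furnishes precisely such a bound.

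With these identifications in place, the constrained problem at $(x,t)$ is exactly an instance of the unconstrained discrete synthesizer~\eqref{eq:disc_feedback_law} with feasible test space $\mathcal{D}$ and finite input space $\mathcal{U}^N$. Theorem~\ref{thm:discrete_existence} then guarantees a minimizer $d \in \mathcal{D} = \mathbb{D}(x,t)$ satisfying $d = \testsynth(x,t)$. Since $x$ and $t$ were arbitrary, this establishes existence for all $x \in \mathcal{X}$ and all $t \in \mathbb{R}_+$, which is the claimed result.

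I would expect no substantive obstacle here; the only points requiring care are bookkeeping. One must confirm that $\mathbb{D}(x,t)$ indeed belongs to $\mathcal{R}(\mathbb{D})$ so that Assumption~\ref{assump:discrete_cosntrained} applies, and that the predictive-to-single-step reduction preserves finiteness of the action set $\mathcal{U}^N$. Both are immediate, which is why---like its continuous sibling---the corollary reduces to a direct pointwise application of the already-established discrete existence theorem rather than to any new estimate.
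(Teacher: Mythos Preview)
Your proposal is correct and follows essentially the same route as the paper. The paper's proof is the one-liner ``This is an application of Corollary~\ref{cor:disc_predictive_existence} for each $\mathcal{D} \in \mathcal{R}(\mathbb{D})$ which is assumed to be finite via Assumption~\ref{assump:discrete_cosntrained}''; you simply unpack that corollary's predictive-to-single-step reduction inline before invoking Theorem~\ref{thm:discrete_existence}, which amounts to the same argument with the intermediate step spelled out.
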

\begin{proof}
This is an application of Corollary~\ref{cor:disc_predictive_existence} for each $\mathcal{D} \in \mathcal{R}(\mathbb{D})$ which is assumed to be finite via Assumption~\ref{assump:discrete_cosntrained}.
\end{proof}

These modified test-generation laws also minimize their respective difficulty measures over the constrained test-generation set $\mathbb{D}(x,t)$.

\begin{corollary}
\label{cor:cont_varying_optimality}
Let Assumption~\ref{assump:continuous_constrained} hold. The test synthesizer~\eqref{eq:varying_continuous_feedback_law} minimizes the difficulty measure $M$~\eqref{eq:difficulty_measure} over all $d \in \mathbb{D}(x,t)$, \textit{i.e.}
\begin{equation}
    \testsynth(x) = \argmin_{d \in \mathbb{D}(x,t)}~M(x,d).
\end{equation}
\end{corollary}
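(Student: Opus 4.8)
The plan is to show that Corollary~\ref{cor:cont_varying_optimality} follows directly from combining Corollary~\ref{cor:existence_varying_continuous} (which guarantees existence of a solution for the constrained synthesizer) with the already-established optimality result, Lemma~\ref{lem:continuous_valid_difficulty_measure}, applied pointwise over the range of the set-valued map $\mathbb{D}$. The key observation is that for any fixed state $x$ and time $t$, the constrained test space $\mathbb{D}(x,t) = \mathcal{D}$ is just a particular compact feasible test space (by Assumption~\ref{assump:continuous_constrained}), so all the machinery developed for the unconstrained continuous setting applies verbatim once we freeze $(x,t)$ and restrict the outer minimization to this set.

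\textbf{First} I would fix an arbitrary $x \in \mathcal{X}$ and $t \in \mathbb{R}_+$, and set $\mathcal{D} = \mathbb{D}(x,t)$, noting that $\mathcal{D}$ is compact by Assumption~\ref{assump:continuous_constrained}. By Corollary~\ref{cor:existence_varying_continuous}, a minimizer $\testsynth(x,t) \in \mathbb{D}(x,t)$ exists, so the object $\argmin_{d \in \mathbb{D}(x,t)} M(x,d)$ is well-defined. \textbf{Next}, I would observe that the inner maximization problem defining $M(x,d)$ in~\eqref{eq:difficulty_measure} is identical to the interior of the constrained synthesizer~\eqref{eq:varying_continuous_feedback_law}; the only change between the unconstrained law~\eqref{eq:feedback_law} and the constrained law~\eqref{eq:varying_continuous_feedback_law} is the domain of the outer $\argmin$, from $\mathcal{D}$ to $\mathbb{D}(x,t)$. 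Since $M(x,d)$ is defined identically in both settings, we immediately have
\begin{equation}
    \testsynth(x,t) = \argmin_{d \in \mathbb{D}(x,t)}~\max_{u \in \mathcal{U}}~\mathcal{F}\left(\dot h^F(x,d,u), u, \mathcal{U}(x,d), m\right) = \argmin_{d \in \mathbb{D}(x,t)}~M(x,d),
\end{equation}
by direct substitution of the definition of $M$.

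\textbf{The remaining step} is to confirm that $M$ is still a \emph{valid} difficulty measure in this constrained setting, so that minimizing it is meaningful. This follows by re-running the case analysis of Lemma~\ref{lem:continuous_valid_difficulty_measure} with $\Gamma(x)$ now intersected against $\mathbb{D}(x,t)$: if the restricted $\Gamma$-set is nonempty the minimizer lands in it and yields $\mathcal{U}(x,\testsynth(x,t)) = \varnothing$ (first condition of Definition~\ref{def:difficulty_metric}); otherwise the filtration $\mathcal{F}$ collapses to $\dot h^F$ and the minimizer satisfies the minimax characterization against the action discriminator $v$ from~\eqref{eq:cont_act_discrim} (second condition). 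Since the action discriminator and feasible input space are unchanged, this case analysis carries over without modification.

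\textbf{I do not expect any serious obstacle here}, as this corollary is essentially bookkeeping: the constrained synthesizer is the unconstrained one with a restricted outer domain, and all the substantive work — existence via the minimax/LP-duality argument and validity via the two-case filtration argument — has already been done in Theorem~\ref{thm:existence} and Lemma~\ref{lem:continuous_valid_difficulty_measure}. The one point requiring mild care is ensuring that freezing $(x,t)$ and treating $\mathbb{D}(x,t)$ as a single compact set $\mathcal{D}$ is legitimate; this is exactly what Assumption~\ref{assump:continuous_constrained} guarantees, so the whole argument reduces to quoting the two prior results and noting the choice of $x$ and $t$ was arbitrary.
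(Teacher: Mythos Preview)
Your proposal is correct and follows essentially the same approach as the paper: freeze $(x,t)$, note that $\mathbb{D}(x,t)$ is a compact feasible test space by Assumption~\ref{assump:continuous_constrained}, and invoke the already-established optimality result. The paper's proof is a single sentence citing Corollary~\ref{corr:optimality} applied to each compact $\mathcal{D} \in \mathcal{R}(\mathbb{D})$, whereas you unpack the argument more explicitly via Corollary~\ref{cor:existence_varying_continuous} and the case analysis of Lemma~\ref{lem:continuous_valid_difficulty_measure}; the substance is the same.
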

\begin{proof}
This proof stems via application of Corollary~\ref{corr:optimality} for each compact set $\mathcal{D} \in \mathcal{R}(\mathbb{D})$.
\end{proof}

In the discrete setting we have the following corollary.
\begin{corollary}
\label{cor:disc_varying_optimality}
Let Assumption~\ref{assump:discrete_cosntrained} hold.  The test synthesizer~\eqref{eq:varying_discrete_feedback_law} minimizes the difficulty measure $\Tilde M$~\eqref{eq:discrete_difficulty_measure_predictive} over all $d \in \mathbb{D}(x,t)$, \textit{i.e.}
\begin{equation}
    \testsynth(x) = \argmin_{d \in \mathbb{D}(x,t)}~\Tilde{M}(x,d).
\end{equation}
\end{corollary}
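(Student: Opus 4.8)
The plan is to reduce Corollary~\ref{cor:disc_varying_optimality} to the already-established unconstrained predictive optimality result, exactly as Corollary~\ref{cor:cont_varying_optimality} was reduced to Corollary~\ref{corr:optimality} in the continuous constrained setting. The central observation is that the set-valued map $\mathbb{D}$ in~\eqref{eq:test_space_generator} returns, for each fixed pair $(x,t)$, a single feasible test space $\mathcal{D} \in \mathcal{R}(\mathbb{D})$, so that after slicing at a fixed state and time the constrained minimax problem~\eqref{eq:varying_discrete_feedback_law} is syntactically identical to the unconstrained predictive synthesizer~\eqref{eq:disc_feedback_law_predictive} posed over that particular $\mathcal{D}$.

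First I would fix an arbitrary $x \in \mathcal{X}$ and $t \in \mathbb{R}_+$ and set $\mathcal{D} = \mathbb{D}(x,t)$, which is finite by Assumption~\ref{assump:discrete_cosntrained}. I would then note that the constant $m$ appearing in~\eqref{eq:varying_discrete_feedback_law} is, for this slice, exactly the minimum-satisfaction constant for the predictive problem over $\mathcal{D}$, and that the filtered objective $\xi(x,\mathbf{u},d)$ coincides with $\xi^N(x,\mathbf{u},d)$. Consequently the inner maximization equals the predictive difficulty measure $\Tilde M^N(x,d)$, and the constrained synthesizer reduces to $\testsynth(x,t) = \argmin_{d \in \mathcal{D}} \Tilde M^N(x,d)$.

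Next I would invoke the two ingredients that underwrite the unconstrained predictive result: existence, which in the constrained setting is supplied by Corollary~\ref{cor:disc_varying_existence} (guaranteeing a minimizer $\testsynth(x,t) \in \mathbb{D}(x,t)$), and validity of $\Tilde M^N$ as a difficulty measure per Definition~\ref{def:difficulty_metric}, which holds over any finite test space by the predictive-difficulty lemma preceding Corollary~\ref{cor:disc_predictive_optimality}. Taken together, these yield that $\testsynth(x,t)$ minimizes $\Tilde M$ over $\mathcal{D} = \mathbb{D}(x,t)$. Since $x$ and $t$ were arbitrary, the conclusion holds for all $x \in \mathcal{X}$ and $t \in \mathbb{R}_+$, as claimed.

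The argument carries essentially no technical obstacle, since every hard step — feasibility of the finite minimax problem and the difficulty-measure property — has already been discharged in the unconstrained predictive analysis. The only point deserving care is confirming that the slicing is legitimate: one must check that fixing $(x,t)$ genuinely collapses the set-valued feasible region to a single finite set on which the prior results apply verbatim, and that the constant $m$ is taken over that same sliced region rather than over a union across differing $(x,t)$. Once this bookkeeping is verified, the corollary follows as a direct consequence of Corollary~\ref{cor:disc_varying_existence} and the finiteness guaranteed by Assumption~\ref{assump:discrete_cosntrained}.
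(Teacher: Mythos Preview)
Your proposal is correct and follows essentially the same approach as the paper: the paper's proof is simply ``repeated application of Corollary~\ref{cor:disc_predictive_optimality} for each finite set $\mathcal{D} \in \mathcal{R}(\mathbb{D})$,'' which is precisely your slicing argument with the unconstrained predictive result applied to each fixed $\mathbb{D}(x,t)$. Your additional bookkeeping about the constant $m$ and the invocation of Corollary~\ref{cor:disc_varying_existence} is more explicit than the paper but does not constitute a different route.
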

\begin{proof}
This proof stems via repeated application of Corollary~\ref{cor:disc_predictive_optimality} for each finite set $\mathcal{D} \in \mathcal{R}(\mathbb{D})$.
\end{proof}

\newidea{Remark on Environment Dynamics:}  Sometimes, there may exist torque bounds on $\dot d$, \textit{e.g.} environment dynamic constraints, and as stated the mentioned approach could not account for such constraints.  However, one could augment the system state to include the nominal, modelable state of the environment, \textit{e.g.} the overall ``state" $x$ would be the state of the system-under-test $x_s$ and the state of the modelable aspect of the environment subject to dynamic constraints $d_t$.  Then, the test-generation procedure would provide the dynamics $d$ that the modelable environment state $d_t$ must follow.  Study in this vein is the subject of current work by the authors and collaborators, see~\cite{badithela2021synthesis}.

\subsection{Applications to a Constrained Hardware Test}
\label{sec:constrained_hardware}
The constrained test-synthesis results permit us to start applying our procedure to testing hardware systems in their operating environments.  Specifically, we test a quadruped's ability to navigate to a goal while avoiding moving robots attempting to block its path.  To start, we idealize the quadruped as a single integrator system:
\begin{equation}
    \label{eq:hardware_system}
    \dot x = u,~x \in \mathcal{X} \triangleq [-1,4] \times [-2,3],~u \in \mathcal{U} \triangleq [-5,5]^2.
\end{equation}

To construct our specification $\psi$ and our test parameter vector $d$, we will denote the goal as $g = [3.5,2.5]^T$ and the obstacle agent locations on the 2-d plane as $o_j \in \mathcal{X}$.  Then $\psi$ is as follows:
\begin{gather}
    \llbracket \mu \rrbracket = \{x \in \mathbb{R}^2~|~\|x-g\| \leq 0.3 \}, \\ \llbracket \omega_j \rrbracket = \{x \in \mathbb{R}^2~|~\|x - o_j\| \geq 0.3\}, \\
    \psi = \F_{\infty}\mu \wedge_{j=1,2} \G_{\infty}\omega_j.   \label{eq:hardware_spec}
\end{gather}
The combined locations of these obstacles will be our test parameter vector, \textit{i.e.} $d = [o_1^T, o_2^T]^T \in \mathcal{X}^2$.

To generate tests, 
our barrier functions $h^F$ and $h^G_j$ and test space map $\mathbb{D}$ are as follows (here $x$ is the quadruped's planar position, \textit{i.e.} $x = [x_1,x_2] \in \mathcal{X}$ and for a scalar $\epsilon \in \mathbb{R}$, $\lfloor \epsilon \rfloor$ denotes rounding down and $\lceil \epsilon \rceil$ denotes rounding up):
\begin{gather}
    h^F(x) = 0.3 - \|x - g\|,~h^G_j(x,d) = \|x-o_j\| - 0.3, \label{eq:hardware_barriers} \\
    d = [o_1^T,o_2^T]^T \in \mathbb{D}(x) \triangleq \{\lfloor x_1 \rfloor, \lfloor x_2 \rfloor \} \times \{\lceil x_1 \rceil, \lceil x_2 \rceil \}.~\label{eq:hardware_testmap}
\end{gather}
As an example then $\mathbb{D}(x = [0.3,1.7]) = \{0,1\} \times \{1, 2\}$.

For testing purposes then, we will calculate the optimal obstacle locations offline and direct the obstacles to move between the identified grid points at test time.  Theoretically, the identified grid points should correspond to the most difficult test of quadruped behavior while it ambulates within a grid cell - asking the obstacles to move between grid points as the quadruped moves between grid cells should likewise be more difficult.  We repeated the experiment twice, and recorded the minimum value of both barrier functions:
\begin{equation}
    \label{eq:min_global_barrier}
    h(x,d) = \min_{j=1,2}~h^G_j(x,d),
\end{equation}
over the course of the entire multi-system trajectory.  The corresponding time-series data is shown in Figure~\ref{fig:Experiment_information}.  

In both tests, the quadruped tries "cutting" corners between grid cells, as at least one of the obstacles moves to the associated grid point that the quadruped is trying to move through.  This causes a momentary loss of safety as evidenced by the sharp spike where the minimum barrier value goes negative (around $20$ seconds).  The quadruped quickly corrects this mistake and resumes its normal trajectory while repeating this cutting behavior a few more times.  Although, it maintains a positive barrier value both times - this references the "dip" in the yellow trajectory around $45$ seconds, and the two "dips" in the blue trajectory around $60$ and $70$ seconds.  Figure~\ref{fig:Experiment_information} depicts this momentary loss of safety.  This procedure, however, shows an example implementation of our worst-case tests on hardware systems.  Additionally, it shows that the tests that we theorize to be the most difficult uncover problematic system behavior as required of our testing procedure.

\section{Conclusion}
In this paper, we presented an adversarial approach to test synthesis for autonomous systems based on control barrier functions and timed reach-avoid specifications.  We prove that our approach will always produce realizable and maximally difficult tests of system behavior as our synthesis techniques are guaranteed to have solutions that minimize a corresponding difficulty measure - a concept we introduce and define.  Finally, we show the efficacy of our procedure in generating tests for simple toy examples useful in a real-world context - unicycle systems and grid-world abstractions, both of which are used for baseline navigation control algorithms in other works.  We also show how such an abstraction can easily be extended to a useful hardware system test - testing a quadruped's ability to navigate within a grid while avoiding obstacles.

\section{Acknowledgements}
We would like to thank Ryan Cosner and Wyatt Ubellacker for their tremendous help in running experiments.  Additionally, we would like to thank Apurva Badithela and Josefine Graebner for their thought provoking discussions regarding problem formulation and potential solutions.  Finally, Prithvi Akella was also supported by the Air Force Office of Scientific Research, grant FA9550-19-1-0302.

\bibliographystyle{ieeetr}
\bibliography{bib_works}

\newpage

\begin{IEEEbiography}[{\includegraphics[width = 1 in, height = 1.25 in, clip,keepaspectratio]{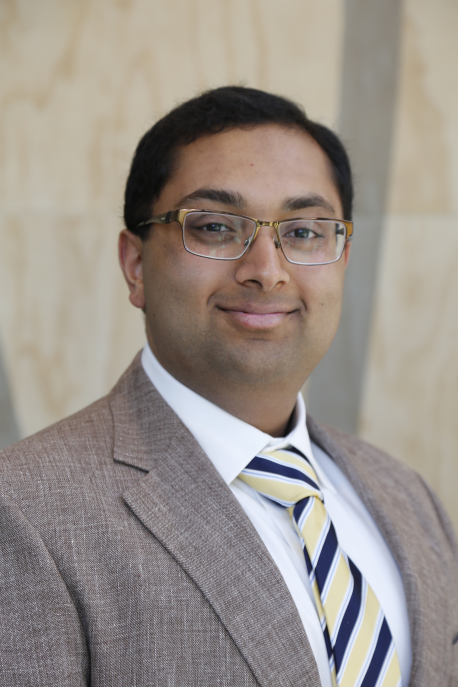}}]{Prithvi Akella}
received the B.S. degree in Mechanical Engineering from the University of California, Berkeley, in 2018 and the M.S. degree in Mechanical Engineering from California Institute of Technology in 2020.  He is the recipient of the Bell Family Graduate Fellowship in Engineering and Applied Sciences.  His current research focuses on the automated test and evaluation of cyber-physical systems.
\end{IEEEbiography}

\begin{IEEEbiography}[{\includegraphics[width=1in,height=1.25in,clip,keepaspectratio]{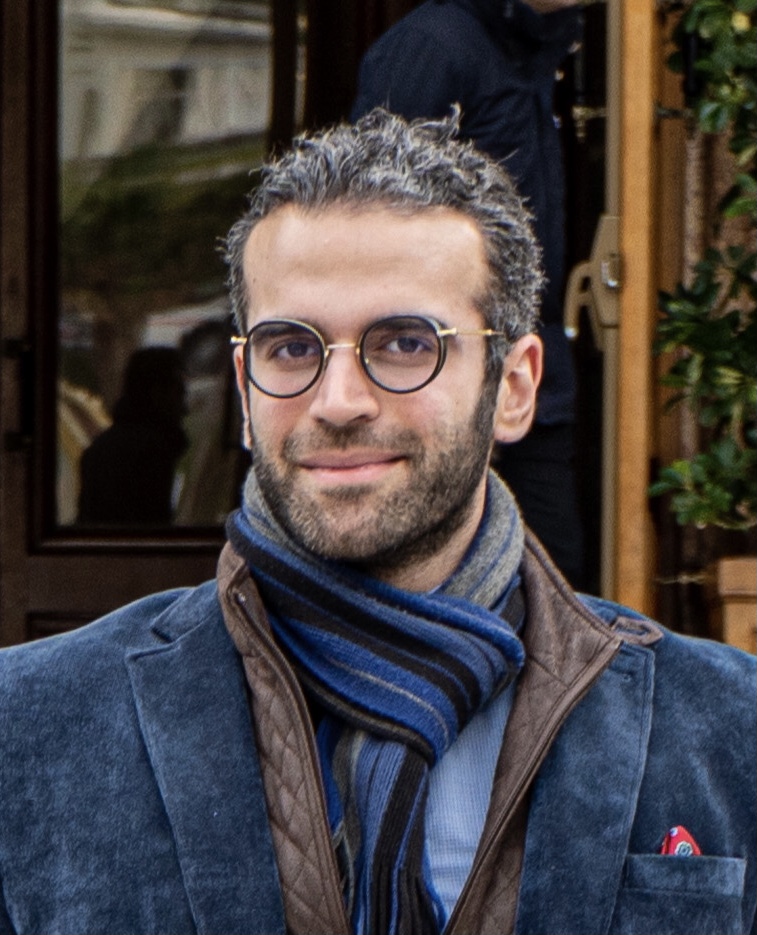}}]{Mohamadreza Ahmadi}
 is a principal research scientist in the planning and control group at TuSimple. He finished his DPhil in Engineering Science--Control Systems and Aeronautics--in November 2016 at the University of Oxford, UK, as a Clarendon Scholar. His PhD was followed by postdoctoral studies at the center for autonomous systems and technologies (CAST) at the California Institute of technology. He is the recipient of the Sloan-Robinson Engineering Fellowship, an Edgell-Sheppee Award, and an ICES Postdoctoral Fellowship. His current research is on planning and control under uncertainty with application to autonomous vehicles, in particular, self-driving trucks. 
\end{IEEEbiography}

\begin{IEEEbiography}[{\includegraphics[width=1in,height=1.25in,clip,keepaspectratio]{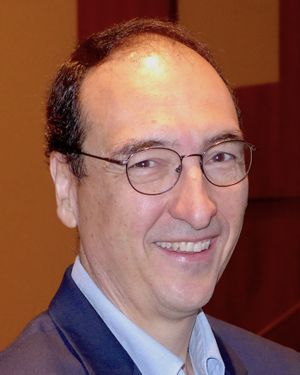}}]{Richard M. Murray}
 received the B.S.
degree in electrical engineering from California Institute of Technology (Caltech),
Pasadena, CA, USA, in 1985 and the M.S.
and Ph.D. degrees in electrical engineering
and computer sciences from the University
of California at Berkeley, Berkeley, CA, USA,
in 1988 and 1991, respectively.
He is currently the Thomas E. and Doris
Everhart Professor of Control \& Dynamical Systems and Bioengineering at Caltech. His research is in the application of feedback
and control to networked systems, with applications in biology
and autonomy. Current projects include analysis and design of biomolecular feedback circuits, synthesis of discrete decision-making
protocols for reactive systems, and design of highly resilient architectures for autonomous systems
\end{IEEEbiography}

\begin{IEEEbiography}[{\includegraphics[width=1in,height=1.25in,clip,keepaspectratio]{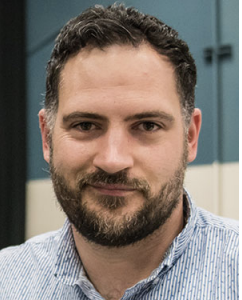}}]{Aaron D. Ames}
 received the B.S. degree in mechanical engineering and the B.A. degree in mathematics from
the University of St. Thomas, Saint Paul, MN,
USA, in 2001, the M.A. degree in mathematics
from the University of California at Berkeley, in
2006, and the Ph.D. degree in EECS from UC
Berkeley, Berkeley, CA, USA, in 2006.
He is currently the Bren Professor of mechanical and civil engineering and control and
dynamical systems with Caltech, Pasadena,
CA, USA. His research interests span the areas of robotics, nonlinear, safety-critical control, and hybrid systems, with a special focus
on applications to bipedal robotic walking both formally and through
experimental validation.
\end{IEEEbiography}

\end{document}